\newtcolorbox{mybox}[2][]{
               = {yshift=-8pt},
  colback      = cyan!6!white,
  colframe     = cyan!1!black,
  halign       = flush left,
  fonttitle    = \bfseries\sffamily,
  colbacktitle = cyan!50!black,
  title        = #2,#1,
  }
\newcommand{\be}{\begin{equation}}
\newcommand{\ee}{\end{equation}}
\newcommand{\ba}{\begin{aligned}}
\newcommand{\ea}{\end{aligned}}
\newcommand{\bc}{\begin{center}}
\newcommand{\ec}{\end{center}}
\newcommand{\beq}{\begin{equation}}
\newcommand{\eeq}{\end{equation}}
\newcommand{\beqq}{\begin{equation*}}
\newcommand{\eeqq}{\end{equation*}}
\newcommand{\beqa}{\begin{align}}
\newcommand{\eeqa}{\end{align}}
\newcommand{\barr}{\begin{array}}
\newcommand{\earr}{\end{array}}
\newcommand{\bi}{\begin{itemize}}
\newcommand{\ei}{\end{itemize}}
\newcommand{\C}{\mathbb{C}}
\newtheorem{lemma}{Lemma}
\newtheorem{theorem}{Theorem}
\newtheorem{definition}{Definition}
\begin{document}


\title{Extreme non-negative Wigner functions}


\author{Zacharie Van Herstraeten}
\orcid{0000-0003-1810-0942}
\affiliation{DIENS, \'Ecole Normale Sup\'erieure, PSL University, CNRS, INRIA, 45 rue d'Ulm, Paris 75005, France}
\email{zacharie.van-herstraeten@inria.fr}

\author{Jack Davis}
\affiliation{DIENS, \'Ecole Normale Sup\'erieure, PSL University, CNRS, INRIA, 45 rue d'Ulm, Paris 75005, France}
\email{jack.davis@inria.fr}

\author{Nuno C. Dias}
\affiliation{Grupo de Física Matemática, Departamento de Matemática \\Instituto Superior Técnico, Av. Rovisco Pais, 1049-001 Lisbon, Portugal}
\email{nunocdias1@gmail.com}

\author{João N. Prata}
\affiliation{Grupo de Física Matemática, Departamento de Matemática \\Instituto Superior Técnico, Av. Rovisco Pais, 1049-001 Lisbon, Portugal}
\affiliation{ISCTE - Instituto Universitário de Lisboa, Av. das
    Forças Armadas, 1649-026, Lisbon, Portugal}
\email{joao.prata@mail.teleapc.pt}

\author{Nicolas J. Cerf}
\affiliation{Centre for Quantum Information and Communication, \'{E}cole polytechnique de Bruxelles,  CP 165, \\Universit\'{e} libre de Bruxelles, 1050 Brussels, Belgium}
\email{nicolas.cerf@ulb.be}

\author{Ulysse Chabaud}
\orcid{0000-0003-0135-9819}
\affiliation{DIENS, \'Ecole Normale Sup\'erieure, PSL University, CNRS, INRIA, 45 rue d'Ulm, Paris 75005, France}
\email{ulysse.chabaud@inria.fr}



\begin{abstract}
    Providing an operational characterization of the Wigner-positive states (WPS), i.e.,~the set of quantum states with non-negative Wigner function, is a longstanding open problem. For pure states, the only WPS are Gaussian states, but the situation is considerably more subtle for mixed states.
    Here, we approach the problem using convex geometry, reducing
    the question
    to the characterization of the extreme points of the set of WPS.
    We give a constructive method to generate a large class of such extreme WPS, which combines the following steps: (i)~we characterize the phase-invariant extreme points of the superset of  Wigner-positive quasi-states (WPQS); (ii)~we introduce a new quantum map, named \textit{Vertigo} map, which maps extreme WPQS to extreme WPS while preserving phase invariance; (iii)~we identify families of extremality-preserving maps and use them to obtain extreme WPS while relaxing phase invariance. Our construction generates all extreme WPS of low dimension, starting from a specific kind of WPS known as beam-splitter states.
    Our results build upon new mathematical properties of the set of WPS derived in a companion paper \cite{mathpaper} and unveil the remarkable structure of mixed states with non-negative Wigner functions.
\end{abstract}

\maketitle

\section{Introduction}

The negativity of phase-space representations of quantum states is a peculiar property of quantum mechanics \cite{rundle2021overview}. In particular, the negativity of the Wigner function \cite{Wigner1932} is a marker of non-classicality connected to quantum contextuality \cite{spekkens2008negativity,booth2021,haferkamp2021equivalence} and has emerged as a necessary resource for quantum computational advantages \cite{kenfack2004negativity,Mari2012,albarelli2018resource, takagi2018convex}.
Consequently, the set denoted $\mathcal{D}_{+}$ of quantum states with non-negative Wigner function---or Wigner-positive states (WPS) for short---forms the complement to the set of resourceful Wigner-negative states. As such, it is thus an important subject of investigation for both a fundamental understanding and practical applications of quantum information science. In the infinite-dimensional setting, however, a complete characterization is only known in the case of pure states, for which Hudson \cite{hudson1974wigner}---and later Soto \& Claverie \cite{soto1983wigner} in the multimode case---have shown that pure WPS are in one--to-one correspondence with pure Gaussian states. Despite over fifty years of research \cite{hudson1974wigner,soto1983wigner,jagannathan1987dynamical,garcia1988nonnegative,brocker_mixed_1995, gross2006hudson,dias2008narcowich,mandilara2009extending,Chabaud2021witnessing,Van_Herstraeten2021-nj}, the explicit description of $\mathcal{D}_{+}$ remains limited.

An analogous problem exists in certain finite-dimensional quantum systems (i.e., qudits) or spin systems. When appropriately defined, the set of WPS is known to be free in the resource-theoretic context of magic state distillation, a critical component in a leading model of fault-tolerant quantum computation \cite{veitch2012negative}.  The complementing set of Wigner-negative states enjoys a similar relationship to contextuality \cite{howard2014contextuality, delfosse2017equivalence} and is conjectured to coincide with the set of distillable magic states \cite{veitch2014resource}, though explicitly finding the set of error-correcting distillation codes that would validate this conjecture is difficult \cite{Campbell_prime_MSD_Reed_Muller2012PRX, Dawkins_Howard_2015, prakash_bound2020, prakash2024qutrit_search}.

In any situation where a Wigner function is well-defined, the determination of $\mathcal{D}_{+}$ can be abstractly summarized by taking the convex hull of the Wigner frame operators
(i.e., operators whose associated Weyl symbol is maximally localized, in the sense of a Dirac or Kronecker delta)
and intersecting it with the cone of positive semi-definite operators.  Crucially, the result is always a convex set.  In the finite-dimensional setting, $\mathcal{D}_{+}$ is compact and therefore subject to the Krein--Milman theorem, which states that a compact convex set is completely determined by its extreme points: it is the closed convex hull of these points.  In the infinite-dimensional setting (to which we restrict our attention in the remainder of this paper), $\mathcal{D}_{+}$ is non-compact and therefore beyond the scope of the Krein--Milman theorem.  However, we prove in a companion paper~\cite{mathpaper} that a version of the Krein--Milman theorem nevertheless holds specifically for $\mathcal{D}_{+}$, i.e., the non-compact set of 
WPS (see \cref{th:Krein-Milman}).  Consequently, extreme WPS become the essential building blocks of $\mathcal{D}_{+}$.

\begin{figure}[t]
    \centering
    \includegraphics[width=0.8\linewidth]{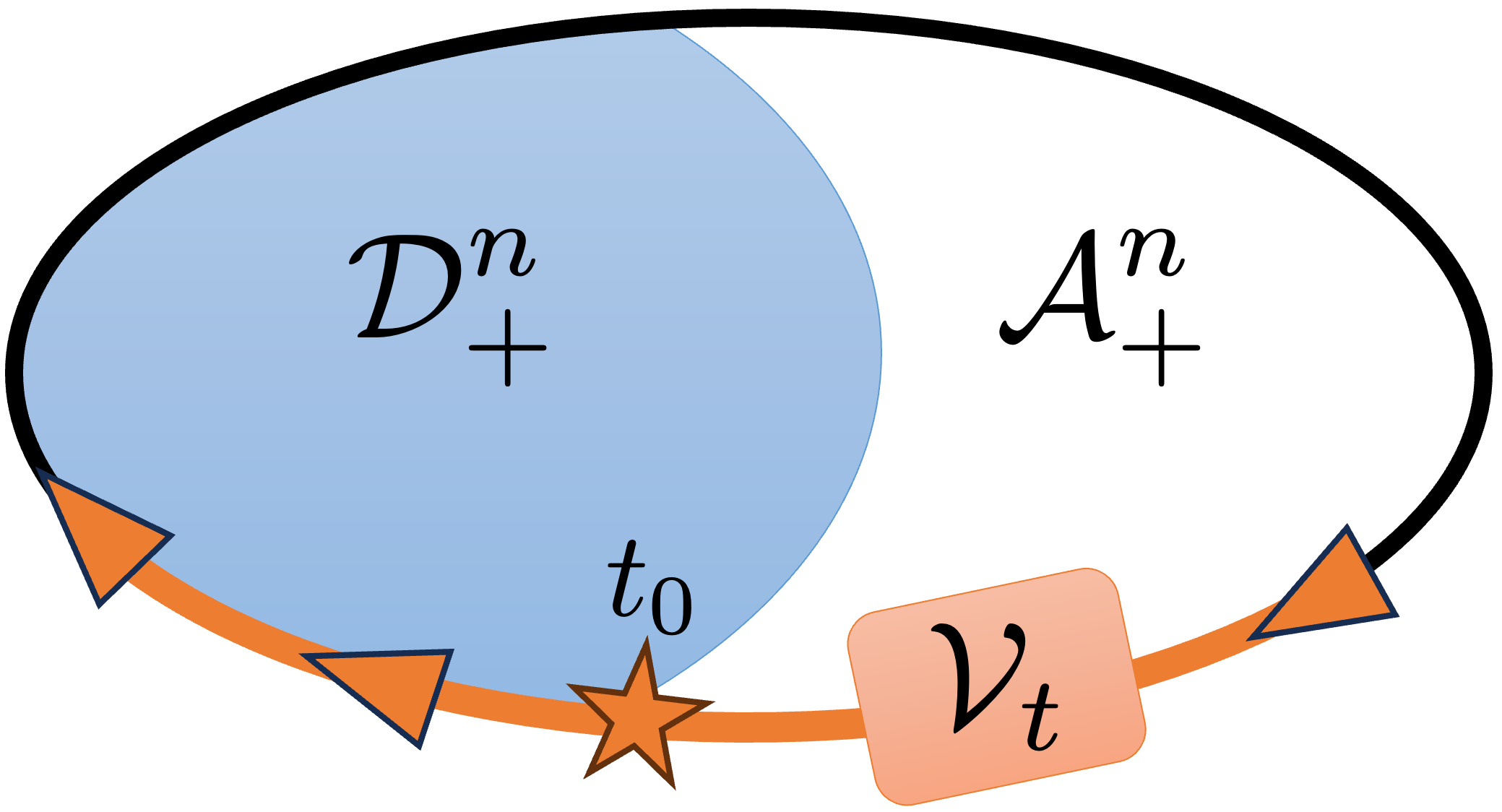}
    \caption{\small \textbf{Mapping extreme WPQS onto extreme WPS.}
        Within the cut of operators with Fock-bounded support (i.e., a bounded particle number $n$), the set of Wigner-positive states $\mathcal{D}^{n}_{+}$ (blue shaded region) appears as a convex subset of the Wigner-positive quasi-states $\mathcal{A}^{n}_{+}$. The Vertigo map $\mathcal{V}_{t}$ (orange arrows) sends extreme points of $\mathcal{A}^{n}_{+}$ to extreme points of $\mathcal{D}^{n}_{+}$ as soon as the parameter $t$ exceeds a threshold value $t_{0}>0$ (with $t_{0}$ generally depending on the initial point). The fixed points of the Vertigo map $\mathcal{V}_{t}$ (for $t\to\infty$) are binomial states. The last step of the construction of WPS (pictured in \cref{fig:extreme_state_generation}) relaxes Fock-boundedness while preserving extremality.
    }
    \label{fig:intro_vertigo_simple}
\end{figure}

In this work, we make substantial progress on the characterization of $\mathcal{D}_{+}$ by uncovering properties of its extreme points, using tools from convex geometry and polynomial analysis. Our central contribution is a constructive method, sketched in Fig.~\ref{fig:intro_vertigo_simple}, which generates a large class of extreme WPS, providing more insights into the structure of $\mathcal{D}_{+}$.
The starting point of our construction is to relax the positive-definiteness condition on quantum operators and hence consider \textit{quasi-states}. Unlike quantum states, which are associated with positive-definite trace-one Hermitian operators, quasi-states are defined as trace-one Hermitian operators that must not be positive definite. We thus focus on the superset denoted $\mathcal A_+$ of quasi-states with non-negative Wigner function---or Wigner-positive quasi-states (WPQS) for short---and characterize the extreme quasi-states within some tractable subsets of $\mathcal A_+$ (see \cref{th:extrAo+n}).  More precisely, we restrict to subsets of $\mathcal A_+$ that correspond to Fock-bounded operators, with bounded particle number $n$, denoted as $\mathcal A_+^n$. These may be positive semi-definite (for states) or not (for quasi-states). The reason for considering Fock-bounded operators is that the sign of their corresponding Wigner functions is governed by a polynomial and can be conveniently analyzed using tools from analysis. By further restricting to the subset denoted $\mathcal A_\oplus^n$ of Fock-bounded operators that are phase-invariant (i.e., whose Wigner function is rotation-invariant), 
the problem becomes fully tractable and we are able to characterize the extreme WPQS.

Our second step is to move from such extreme WPQS in $\mathcal A_\oplus^n$ toward extreme WPS in $\mathcal{D}_{+}$.
Here, a key ingredient is to introduce a parametrized quantum probabilistic map $\mathcal V_t$---named \textit{Vertigo} map---which preserves extremality (as well as phase-invariance). In particular, we show that the Vertigo map can be expressed as a combination of a pure loss channel and a noiseless linear amplifier (see~\cref{fig:vertigo_physical_implementation}), and sends extreme WPQS to extreme WPS as the parameter $t$ increases (see \cref{fig:intro_vertigo_simple} and \cref{th:vertigoWPQStoWPS}). We also characterize the fixed points of the Vertigo map in terms of phase-invariant \textit{binomial states}, which play a central role in the characterization of extreme WPS.

Finally, as a third step, we identify families of extremality-preserving operators which map extreme WPS to extreme WPS while breaking phase invariance. Among these, we consider a displacement-like operator $\bar D$, which preserves the Fock support, as well as Gaussian unitaries, which break the Fock boundedness. Applying these operators following the Vertigo map provides us with a method to cover a large class of (provably all, in low dimension) extreme WPS.

\begin{figure}
    \centering
    \includegraphics[width=\linewidth]{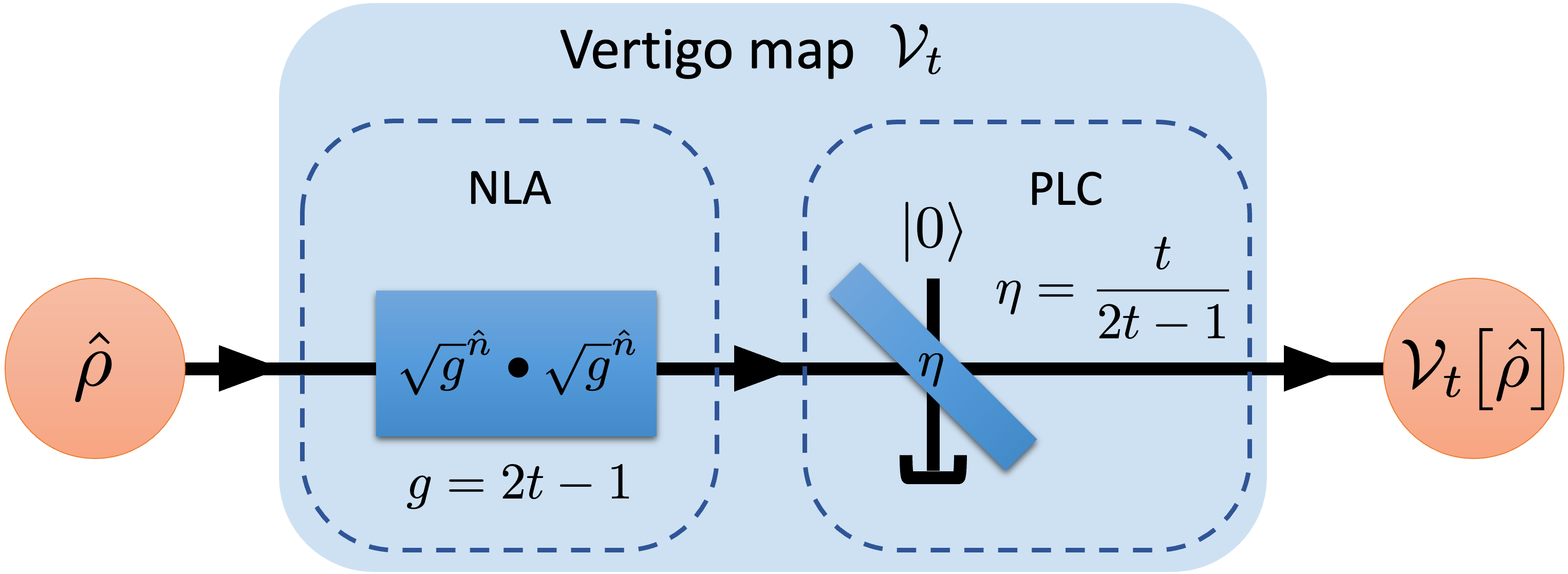}
    \caption{\small
        \textbf{The Vertigo map.} Physical implementation of the Vertigo map $\mathcal V_t$ for $t\geq 1$, as the concatenation of a noiseless linear amplifier (NLA) $\mathcal{N}$ and a pure loss channel (PLC) $\mathcal{E}$.
        We have $\mathcal{V}_{t}=\mathcal{E}_{\eta}\circ\mathcal{N}_{g}$ with $g=2t-1$ and $\eta=t/(2t-1)$.
    }
    \label{fig:vertigo_physical_implementation}
\end{figure}

Having introduced a general construction for generating extreme WPS, we then focus on a specific family of WPS states known as \textit{beam-splitter states} \cite{Becker2021-pw, Van_Herstraeten2021-nj,Van-Herstraeten2024-lc}, see \cref{fig:bs-state}. These states are produced by interfering two single-mode states at a balanced beam-splitter and tracing out one of the two output modes, and they include the above phase-invariant binomial states
as a special case (obtained when some Fock state interferes with the vacuum state at the balanced beam-splitter).
We prove that many beam-splitter states are, in fact, extreme WPS (see Theorem~\ref{theorem:beam-splitter-states-are-extreme}), and that the WPS that are the closest to Fock states are beam-splitter states (see Lemma~\ref{lemma:bs_states_fidelity_fock_states}). Notably, in low dimensions, we show that trajectories of beam-splitter states under the Vertigo map produce \textit{all} extreme WPS.



The rest of the paper is structured as follows: we give preliminary material on phase-space representations, beam-splitter states, and convex geometry in \cref{sec:preli}; in \cref{sec:extreme_wigpos_quasistates}, we analyse extreme WPQS, focusing on the subsets of phase-invariant Fock-bounded  WPQS; we then introduce the Vertigo map in \cref{sec:vertigo} and study its properties; in \cref{sec:generating}, we combine these results together with the use of extremality-preserving operators  in order to generate large classes of extreme WPS (we exhibit the special role of beam-splitter states in this context); finally, we discuss the consequences of our results in \cref{sec:concl} and raise a few open problems.




\section{Preliminaries}
\label{sec:preli}

We consider the infinite-dimensional Hilbert space $\mathcal{H}\simeq L^{2}(\mathbb{R})$ and denote by $\mathcal{B}_1(\mathcal{H})$ (or simply $\mathcal B_1$) the set of trace-class operators over $\mathcal{H}$.
We denote by $\mathcal{D}\subset\mathcal{B}_1$ the set of Hermitian positive semi-definite (PSD) operators of unit trace and refer to operators in $\mathcal{D}$ as \textit{states}.
We define $\mathcal{A}$ as the affine hull of $\mathcal{D}$, i.e.,~$\mathcal{A}\subset\mathcal{B}_1$ is the set of Hermitian operators of unit trace; we refer to operators in $\mathcal{A}$ as \textit{quasi-states}.
We have the inclusion $\mathcal{D}\subset\mathcal{A}\subset\mathcal{B}_1$.

\begin{figure}
    \centering
    \includegraphics[width=0.75\linewidth]{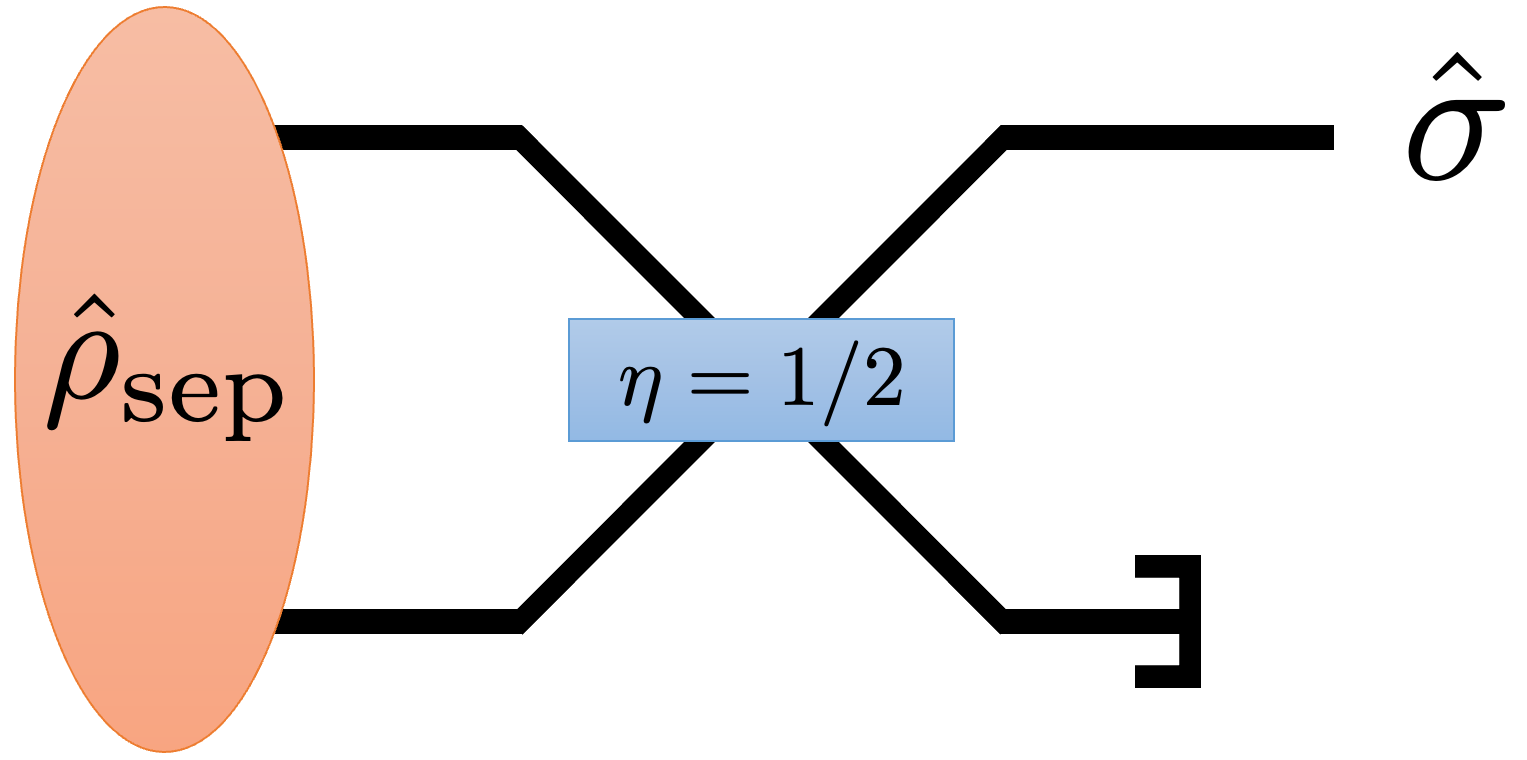}
    \caption{\small
    \textbf{Beam-splitter state.}
    A balanced beam-splitter ($\eta=1/2$) acts on a separable state $\hat{\rho}_{\mathrm{sep}}$.  After tracing over the second output mode, the beam-splitter state $\hat{\sigma}$ is obtained.
    Beam-splitter states are always Wigner positive.
    }
    \label{fig:bs-state}
\end{figure}

We recall the expressions of the annihilation operator $\hat{a}=(\hat{x}+i\hat{p})/\sqrt{2}$ and photon-number operator $\hat{n}=\hat{a}^{\dagger}\hat{a}$, the displacement operators $\hat{D}(\alpha)=\exp(\alpha\hat{a}^{\dagger}-\alpha^\ast\hat{a})$ for all $\alpha\in\C$, and the parity operator $\hat{\Pi}=(-1)^{\hat{n}}$.
We define the displaced parity operators as $\hat{\Pi}(\alpha)=\hat{D}(\alpha)\hat{\Pi}\hat{D}^{\dagger}(\alpha)$ for all $\alpha\in\C$.
We give a special attention to the Fock (or photon-number) basis $\lbrace\ket{n}\rbrace_{n\in\mathbb{N}}$ which comprises the eigenstates of the photon-number operator so that $\hat{n}\ket{n}=n\ket{n}$, for all $n\in\mathbb{N}$.
Coherent states are defined as $\ket{\alpha}=\hat{D}(\alpha)\ket{0}$.
We define $\mathcal{H}^{n}=\mathrm{span}\big(\lbrace\ket{k}\rbrace_{k\in[\![0,n]\!]}\big)$ as the finite-dimensional Hilbert space spanned by the first $n+1$ Fock states.

We introduce a few notations summarized in Table~\ref{table:notations_convex_sets}: for any set of unit-trace Hermitian operators ($\mathcal{D}$ or $\mathcal{A}$), we use the superscript ‘‘$n$'' to denote its subset of elements with no support beyond $n$ photons; we use the subscript ‘‘$+$'' to denote its subset of Wigner-non-negative elements, and the subscript ‘‘$\odot$'' to denote its subset of phase-invariant elements; we use the compact notation ‘‘$\oplus$'' to denote its subset of elements which are both Wigner-non-negative and phase-invariant.
For instance, $\mathcal{D}^{n}_{\oplus}$ is the set of phase-invariant WPS with no support beyond $n$ photons.

\begin{table}[t]
    \centering
    \resizebox{\linewidth}{!}{%
        \renewcommand{\arraystretch}{1.6}
        \begin{tabular}{|l|c|c|}
            \hline
            \textbf{Subset}                                                   & \textbf{Notation}                                           & \textbf{Definition}                                                                            \\
            \hhline{|=|=|=|}
            Fock-bounded                                                      & {\Large$\mathcal{A}^{n}$}                                   & {\large$\big\lbrace\hat{A}\in\mathcal{A}:\hat{P}_{n}\hat{A}\hat{P}_{n}=\hat{A}\big\rbrace$}    \\ \hline
            Wigner-positive                                                   & {\Large$\mathcal{A}_{+}$}                                   & {\large$\big\lbrace\hat{A}\in\mathcal{A}:W_{\hat{A}}(\alpha)\geq 0\ \forall\alpha\big\rbrace$} \\ \hline
            Phase-invariant                                                   & {\Large$\mathcal{A}_{\odot}$}                               & {\large$\big\lbrace\hat{A}\in\mathcal{A}:\mathcal{R}[\hat{A}]=\hat{A}\big\rbrace$}             \\ \hline
            \multirow{2}{2.7cm}{Wigner-positive phase-invariant}              & \multirow{2}{1cm}{\Large\centering$\mathcal{A}_{\oplus}$}   & \multirow{2}{3.8cm}{\Large\centering$\mathcal{A}_+\cap\mathcal{A}_{\odot}$}                    \\ & & \\\hline
            \multirow{2}{2.7cm}{Fock-bounded Wigner-positive phase-invariant} & \multirow{2}{1cm}{\Large\centering$\mathcal{A}^n_{\oplus}$} & \multirow{2}{3.8cm}{\Large\centering$\mathcal{A}^n\cap\mathcal{A}_{\oplus}$}                   \\ & & \\\hline
        \end{tabular}%
    }
    \caption{\small\textbf{Summary of notations.}
        These are notations for the different subsets of the set of quasi-states $\mathcal{A}$; for subsets of the set of states, replace $\mathcal{A}$ with $\mathcal{D}$.
        $\hat{P}_{n}\coloneq\sum_{k=0}^{n}\ket{k}\!\bra{k}$ is the projector on the first $n+1$ Fock states and $\mathcal{R}[\hat{\rho}]\coloneq\sum_{k=0}^{\infty}\bra{k}\hat{\rho}\ket{k}\ket{k}\!\bra{k}$ is the completely dephasing channel.
    }
    \label{table:notations_convex_sets}
\end{table}

\subsection{The Wigner function}

We consider a trace-class operator $\hat{A}\in\mathcal{B}_1$.
The \textit{Wigner function} of $\hat{A}$ is defined as
\begin{equation}
    W_{\hat{A}}(\alpha)
    =
    \frac{2}{\pi}
    \
    \mathrm{Tr}\left[
        \hat{A}\
        \hat{\Pi}(\alpha)
        \right].
    \label{eq:weyl_transform}
\end{equation}
It satisfies the inverse relation
\begin{equation}
    \hat{A}
    =
    2\int
    W_{\hat{A}}(\alpha)
    \;
    \hat{\Pi}(\alpha)
    \
    \mathrm{d}^2\alpha,
    \label{eq:weyl_transform_inverse}
\end{equation}
where $\mathrm{d}^2\alpha=\mathrm d\Re(\alpha)\mathrm d\Im(\alpha)$.
The Wigner function of the Fock transition operators $\ketbra{m}{n}$ are denoted $W_{\ket{m}\bra{n}}$ and are given in Appendix~\ref{apd:vertigo_in_state_space} in terms of Laguerre polynomials. Given an operator $\hat A \in \mathcal B_1$, the Wigner function of the adjoint $\hat A^\dagger$ is the complex conjugate of the Wigner function of $\hat A$.  This implies that Hermitian operators have real-valued Wigner functions.  We say that the Hermitian operator $\hat{A}$ is Wigner-non-negative if its Wigner function is non-negative, i.e.~$W_{\hat{A}}(\alpha)\geq 0$ $\forall\alpha$.

Given two operators $\hat{A},\hat{B}\in\mathcal{B}_1$, the overlap formula relates their Hilbert-Schmidt inner product to the $L^2$ inner product of their Wigner functions:
\begin{align}
    \mathrm{Tr}
    \big[
        \hat{A}\hat{B}^\dagger
        \big]
    =
    \pi
    \int
    W_{\hat{A}}(\alpha)
    W^*_{\hat{B}}(\alpha)
    \mathrm{d}^2\alpha.
\end{align}

\subsection{The Husimi function}

The Husimi function of an operator is defined as its overlap with a coherent state, i.e. the displaced vacuum state $\ket{\alpha}=\hat{D}(\alpha)\ket{0}$:
\begin{align}
    Q_{\hat{A}}(\alpha)
    =
    \frac{1}{\pi}\
    \bra{\alpha}
    \hat{A}
    \ket{\alpha},
\end{align}
for all $\alpha\in\C$.

The Husimi function can be obtained from the Wigner function through application of an additive Gaussian noise channel, which may be alternatively expressed as
\begin{align}\label{eq:WtoQ}
    Q_{\hat{A}}(\alpha)
    =
    \frac{1}{2}\
    W_{\mathcal{E}_{1/2}[\hat{A}]}
    \left(\frac{\alpha}{\sqrt{2}}\right),
\end{align}
where $\mathcal{E}_{1/2}$ is the pure-loss channel (PLC) with parameter $\eta=1/2$, defined as
\begin{align}
    \mathcal{E}_{\eta}
    \big[
        \hat{\rho}
        \big]
     & =
    \sum\limits_{k=0}^{\infty}
    \frac{(1-\eta)^k}{k!}
    \sqrt{\eta}^{\,\hat{n}}
    \ \hat{a}^k
    \ \hat{\rho}\
    \hat{a}^{\dagger k}\
    \sqrt{\eta}^{\,\hat{n}}.
    \label{eq:def_plc}
\end{align}
Observe that Eq.~\eqref{eq:def_plc} defines a linear map for any $\eta\in\mathbb{R}$.
However it is a proper quantum channel only in the regime $\eta\in[0,1]$.
In that case, the PLC can be implemented by mixing the input with the vacuum on a beam-splitter and tracing out one of the output modes.  Recall that the unitary operator of a beam-splitter is $\hat{U}_{\eta}=\mathrm{exp}\big(\theta(\hat{a}^\dagger\hat{b}-\hat{a}\hat{b}^{\dagger})\big)$ where $\eta=\cos^2\theta$; then,
\be
\mathcal{E}_{\eta}[\hat{\rho}]=\mathrm{Tr}_2\Big[\hat{U}_{\eta}\big(\hat{\rho}\otimes\ket{0}\!\bra{0}\big)\hat{U}^{\dagger}_{\eta}\Big].
\ee

The set of so-called \textit{beam-splitter states} form a notable family of WPS \cite{Becker2021-pw, Van_Herstraeten2021-nj, Van-Herstraeten2024-lc} and are based on a similar construction (see \cref{fig:bs-state}):

\begin{definition}[Beam-splitter state]
    A beam-splitter state $\hat{\sigma}$ is the single-mode output of a balanced beam-splitter acting on a separable state $\hat{\rho}_{\mathrm{sep}}$, i.e.~$\hat{\sigma}=\mathrm{Tr}_2\big[\hat{U}_{1/2}\,\hat{\rho}_{\mathrm{sep}}\,\hat{U}^{\dagger}_{1/2}\big]$.
\end{definition}

We will give particular attention to the subset of beam-splitter states built from the tensor product of two pure states, denoted as $\hat{\sigma}(\psi,\varphi)$. Moreover, we write these states as $\hat{\sigma}(m,n)$ when $\ket{\psi}=\ket{m}$ and $\ket{\varphi}=\ket{n}$ are Fock states. Furthermore, in the special case $n=0$, we denote states $\hat{\sigma}(m,0)$ as binomial states.

\subsection{Convex geometry}

We now take advantage of the convex structure of the set $\mathcal D_+$ and introduce the relevant definitions and results from convex geometry.

\begin{definition}[Extreme point]
    Given a convex set $\mathcal{C}$, the point $a\in\mathcal{C}$ is an extreme point of $\mathcal{C}$ if and only if $\forall x,y\in\mathcal{C}:a=\frac12(x+y)\Leftrightarrow x=y=a$.
\end{definition}

\noindent
Equivalently, $a$ is an extreme point of $\mathcal{C}$ if and only if it does not lie
in the interior of any line segment contained in $\mathcal{C}$.
We denote the set of extreme points of a convex set $\mathcal{C}$ as $\mathrm{Extr}(\mathcal{C})$.

As mentioned in the introduction, the relevance of extreme points in characterizing convex sets comes from the Krein--Milman theorem, which ensures that a compact and convex subset of a normed vector space is equal to the closed convex hull (denoted $\overline{\mathrm {Conv}}$) of its extreme points, i.e.,~the closure of the set of all convex combinations of extreme points.  The convex set $\mathcal D_+$ is not compact, but we show in a companion paper~\cite{mathpaper} that the theorem still holds for specifically $\mathcal D_+$:

\begin{theorem}[Krein--Milman theorem for WPS, \textit{informal statement of Theorem~40 in}~\cite{mathpaper}]\label{th:Krein-Milman}
    The set of WPS is equal to the closed convex hull of its extreme points:
    \be
    \mathcal D_+=\overline{\mbox{\rm Conv}}(\mathrm{Extr}(\mathcal D_+)),
    \ee
    where the closure is with respect to the trace norm.
\end{theorem}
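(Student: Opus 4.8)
The plan is to realise $\mathcal D_+$ as a nested union of finite-dimensional compact convex sets, apply the finite-dimensional version of the theorem (Minkowski's theorem) to each piece, and glue. For $n\in\mathbb N$ let $\mathcal D_+^n=\mathcal D^n\cap\mathcal D_+$ denote the set of Wigner-non-negative states supported on $\mathcal H^n$; the goal is to show $\mathcal D_+=\overline{\bigcup_n\mathcal D_+^n}$ and then push extremality through this identity. As a preliminary I would record that $\mathcal D_+$ is closed and bounded in trace norm: boundedness holds because $\|\hat\rho\|_1=1$ for states, and closedness follows since $\mathcal D$ is trace-norm closed while the estimate $\lvert W_{\hat A}(\alpha)\rvert=\tfrac2\pi\lvert\mathrm{Tr}[\hat A\,\hat\Pi(\alpha)]\rvert\le\tfrac2\pi\|\hat A\|_1$ shows that trace-norm convergence forces uniform convergence of Wigner functions, so Wigner-non-negativity passes to limits.

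Next I would show that each $\mathcal D_+^n$ is a \emph{face} of $\mathcal D_+$, which yields the crucial inclusion $\mathrm{Extr}(\mathcal D_+^n)\subseteq\mathrm{Extr}(\mathcal D_+)$. Indeed, if $\hat\rho\in\mathcal D_+^n$ and $\hat\rho=\tfrac12(\hat x+\hat y)$ with $\hat x,\hat y\in\mathcal D_+$, then for every $\ket{v}$ orthogonal to $\mathcal H^n$ one has $0=\bra{v}\hat\rho\ket{v}=\tfrac12(\bra{v}\hat x\ket{v}+\bra{v}\hat y\ket{v})$, and since $\hat x,\hat y\succeq0$ this forces $\hat x\ket{v}=\hat y\ket{v}=0$; hence $\hat x,\hat y$ are supported on $\mathcal H^n$, i.e.\ $\hat x,\hat y\in\mathcal D_+^n$, and extremality of $\hat\rho$ in $\mathcal D_+^n$ then gives $\hat x=\hat y=\hat\rho$. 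Moreover $\mathcal D_+^n$ is a closed bounded subset of the finite-dimensional real vector space of Hermitian operators on $\mathcal H^n$, hence compact, so the finite-dimensional theorem gives $\mathcal D_+^n=\mathrm{Conv}(\mathrm{Extr}(\mathcal D_+^n))$.

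The main obstacle is the remaining step: proving that $\bigcup_n\mathcal D_+^n$ is trace-norm dense in $\mathcal D_+$. This is genuinely delicate, because the naive truncation $\hat P_n\hat\rho\hat P_n$ of a WPS (suitably renormalised) need not remain Wigner-non-negative. I would instead regularise: given $\hat\rho\in\mathcal D_+$, first apply a pure-loss channel $\mathcal E_\eta$, which preserves Wigner-non-negativity (its action on the Wigner function is a rescaling followed by convolution with a Gaussian) and satisfies $\mathcal E_\eta[\hat\rho]\to\hat\rho$ in trace norm as $\eta\to1$; the smoothed state has an analytic, rapidly decaying Wigner function, and one then argues that its Fock-truncation can be corrected into a bona fide element of some $\mathcal D_+^m$ by admixing a Fock-bounded WPS of vanishing weight, with both modifications controlled in trace norm. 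Turning this outline into quantitative estimates is the technical heart of the proof.

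Granting density, the conclusion is a short chain of inclusions: $\mathcal D_+=\overline{\bigcup_n\mathcal D_+^n}=\overline{\bigcup_n\mathrm{Conv}(\mathrm{Extr}(\mathcal D_+^n))}\subseteq\overline{\mathrm{Conv}}(\mathrm{Extr}(\mathcal D_+))\subseteq\mathcal D_+$, where the first equality uses density and closedness of $\mathcal D_+$, the second uses the finite-dimensional step, the first inclusion uses the face property $\mathrm{Extr}(\mathcal D_+^n)\subseteq\mathrm{Extr}(\mathcal D_+)$, and the last uses that $\mathcal D_+$ is closed and convex; hence all containments are equalities. Finally, I would point out a soft alternative that bypasses the density estimate altogether: $\mathcal D_+$ is a closed bounded convex subset of the Banach space $\mathcal B_1(\mathcal H)$, which is a separable dual space (it equals $\mathcal K(\mathcal H)^\ast$, the dual of the compact operators) and therefore has the Radon–Nikodym property; the classical theorem that in an RNP space every closed bounded convex set equals the closed convex hull of its (strongly exposed, hence extreme) points then yields the claim directly.
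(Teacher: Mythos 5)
First, a point of comparison: the paper does not actually prove this statement --- Theorem~\ref{th:Krein-Milman} is explicitly imported as an informal restatement of Theorem~40 of the companion paper~\cite{mathpaper}, so there is no in-paper proof to measure your argument against; I can only assess it on its own terms.

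Your primary route (exhaust $\mathcal D_+$ by the compact finite-dimensional faces $\mathcal D_+^n$, apply Minkowski's theorem to each, and glue) has a genuine gap exactly where you flag it: the trace-norm density of $\bigcup_n\mathcal D_+^n$ in $\mathcal D_+$. The preliminary steps are fine --- closedness of $\mathcal D_+$ via $\lvert W_{\hat A}(\alpha)\rvert\le\tfrac2\pi\lVert\hat A\rVert_1$, the face property of $\mathcal D_+^n$ (which is the paper's Lemma~\ref{lemma:extreme_subset_projector}), and compactness in finite dimension. But the regularise-truncate-patch sketch is not a proof. Uniform convergence of $W_{\hat P_n\mathcal E_\eta[\hat\rho]\hat P_n}$ to the strictly positive limit only controls the sign on a fixed compact set; outside that set the truncation's Wigner function is a degree-$2n$ polynomial times $\e^{-2\lvert\alpha\rvert^2}$ whose negative part you propose to dominate by admixing a vanishing weight of some Fock-bounded WPS such as $\hat\sigma(n,0)$. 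Making that dominance hold with weight $\epsilon_n\to0$ requires quantitative decay of the truncation's monomial coefficients that you neither state nor establish, and it is not obvious the bookkeeping closes (the natural estimates produce factors like $n!\,2^n$ against the normalization of $\hat\sigma(n,0)$). As written, the ``technical heart'' is simply absent.

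The soft alternative you append, however, is a complete and correct proof on its own, and you should lead with it rather than bury it: $\mathcal B_1(\mathcal H)$ is a separable dual space (the dual of $\mathcal K(\mathcal H)$), hence has the Radon--Nikodym property, which passes to the closed real subspace of Hermitian trace-class operators; by the Lindenstrauss--Phelps--Bourgain theorem every closed bounded convex subset of an RNP space is the closed convex hull of its strongly exposed (a fortiori extreme) points, and your preliminary observations show $\mathcal D_+$ is closed, bounded, and convex in trace norm. This argument needs nothing specific to Wigner positivity beyond trace-norm closedness, so it proves the statement in full generality with standard citations. I cannot tell whether it coincides with the route taken in~\cite{mathpaper}, but it stands on its own; I would present it as the proof and relegate the exhaustion-by-$\mathcal D_+^n$ picture to a structural remark, or else supply the missing density estimate.
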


\noindent This result implies that it is sufficient to characterize the extreme points of $\mathcal D_+$ to obtain an operational description of $\mathcal D_+$.

Let us now state two simple yet useful properties.
\begin{lemma}
    Let $\mathcal{C},\mathcal{C}'$ be general convex sets such that $\mathcal{C}'\subseteq\mathcal{C}$.
    Then, $\mathrm{Extr}(\mathcal{C})\cap\mathcal{C}'\subseteq\mathrm{Extr}(\mathcal{C}')$.
    \label{lemma:extreme_superset}
\end{lemma}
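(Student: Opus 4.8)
The plan is to prove the statement directly from the definition of extreme point. Let $a \in \mathrm{Extr}(\mathcal{C}) \cap \mathcal{C}'$; we must show $a \in \mathrm{Extr}(\mathcal{C}')$. Suppose $a = \frac{1}{2}(x+y)$ with $x, y \in \mathcal{C}'$. Since $\mathcal{C}' \subseteq \mathcal{C}$, we also have $x, y \in \mathcal{C}$. But $a$ is an extreme point of $\mathcal{C}$, so the defining property of extremality in $\mathcal{C}$ forces $x = y = a$. Hence $a$ satisfies the extremality condition within $\mathcal{C}'$, i.e.\ $a \in \mathrm{Extr}(\mathcal{C}')$.

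I would present this in two short sentences: first noting the inclusion $\mathcal{C}' \subseteq \mathcal{C}$ lets us lift any midpoint decomposition in $\mathcal{C}'$ to one in $\mathcal{C}$, and then invoking that $a$ being extreme in the larger set kills the decomposition. Equivalently, in the geometric phrasing: any line segment in $\mathcal{C}'$ containing $a$ in its interior is also a line segment in $\mathcal{C}$ containing $a$ in its interior, which cannot exist; so no such segment exists in $\mathcal{C}'$ either.

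There is essentially no obstacle here — the result is a one-line consequence of the definition, and in fact holds for arbitrary convex sets with no topological assumptions (no compactness, no closure needed). The only thing worth being slightly careful about is that $a$ must genuinely belong to $\mathcal{C}'$ for the statement to make sense, which is exactly why the intersection with $\mathcal{C}'$ appears on the left-hand side; an extreme point of $\mathcal{C}$ that happens to lie outside $\mathcal{C}'$ says nothing. So the proof is complete once the midpoint decomposition is transported upward and extremality in $\mathcal{C}$ is applied.
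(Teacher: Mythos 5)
Your proof is correct and is essentially the same argument as the paper's: both transport a midpoint decomposition from $\mathcal{C}'$ into $\mathcal{C}$ via the inclusion $\mathcal{C}'\subseteq\mathcal{C}$ and then invoke extremality of the point in $\mathcal{C}$ (the paper merely phrases it as a proof by contradiction rather than a direct implication). No gaps.
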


\noindent In simple words, considering the extreme points of $\mathcal{C}$ that belong to subset $\mathcal{C}'$ does not make them loose extremality within subset $\mathcal{C}'$.

\begin{lemma}
    Let $\mathcal{C}\subseteq\mathcal{D}$ be a convex subset of the convex set of quantum states $\mathcal D$ and $\hat{P}:\mathcal{D}\to\mathcal{D}$ a projector.
    Let $\mathcal{C}'=\hat{P}\mathcal{D}\hat{P}\cap\mathcal{C}$.
    Then, $\mathrm{Extr(\mathcal{C'})}\subseteq\mathrm{Extr}(\mathcal{C})$.
    \label{lemma:extreme_subset_projector}
\end{lemma}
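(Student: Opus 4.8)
\emph{Proof plan.} The key structural fact I would invoke is that the set of states supported on a fixed subspace is a \emph{face} of $\mathcal{D}$: if a convex combination of states lies in such a set, then so do the summands. Granting this, the lemma is immediate, since splitting a point of $\mathcal{C}'$ as a midpoint of two elements of $\mathcal{C}\subseteq\mathcal{D}$ forces both elements back into $\hat P\mathcal D\hat P$, hence into $\mathcal{C}'=\hat P\mathcal D\hat P\cap\mathcal{C}$, where extremality does the rest. Before anything, I would unpack the notation: writing $V=\mathrm{range}(\hat P)$, the set $\hat P\mathcal D\hat P$ is $\{\hat\rho\in\mathcal D:\hat P\hat\rho\hat P=\hat\rho\}$, equivalently the states whose support lies in $V$; this is the intersection of $\mathcal D$ with the linear subspace $\{\hat A:\hat P\hat A\hat P=\hat A\}$ and hence convex, and so is $\mathcal C'$.

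Next I would establish the face property. The underlying algebraic fact is that for positive semi-definite $\hat A,\hat B\in\mathcal B_1$ one has $\ker(\hat A+\hat B)=\ker\hat A\cap\ker\hat B$: the inclusion $\supseteq$ is trivial, and for $\subseteq$, if $(\hat A+\hat B)\ket\psi=0$ then $\bra\psi\hat A\ket\psi+\bra\psi\hat B\ket\psi=0$ forces $\bra\psi\hat A\ket\psi=\bra\psi\hat B\ket\psi=0$, hence $\hat A^{1/2}\ket\psi=\hat B^{1/2}\ket\psi=0$ and $\hat A\ket\psi=\hat B\ket\psi=0$. Consequently, if $\hat\rho=\tfrac12(\hat\sigma_1+\hat\sigma_2)$ with $\hat\sigma_1,\hat\sigma_2\in\mathcal D$ and $\mathrm{supp}(\hat\rho)\subseteq V$, i.e.\ $V^\perp\subseteq\ker\hat\rho=\ker\hat\sigma_1\cap\ker\hat\sigma_2$, then $V^\perp\subseteq\ker\hat\sigma_i$ for $i=1,2$, so $\hat\sigma_1,\hat\sigma_2\in\hat P\mathcal D\hat P$.

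Finally I would conclude. Let $a\in\mathrm{Extr}(\mathcal C')$ and suppose $a=\tfrac12(x+y)$ with $x,y\in\mathcal C\subseteq\mathcal D$. Since $a\in\mathcal C'\subseteq\hat P\mathcal D\hat P$, the support of $a$ lies in $V$; by the face property just established, $x,y\in\hat P\mathcal D\hat P$, and since also $x,y\in\mathcal C$, we get $x,y\in\mathcal C'$. Extremality of $a$ within $\mathcal C'$ then yields $x=y=a$, so $a\in\mathrm{Extr}(\mathcal C)$, which is the claim. (If $\mathcal C'=\emptyset$ the statement is vacuous.)

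The only genuinely delicate point is the face property: one must phrase ``support'' as the orthogonal complement of the kernel and make sure the identity $\ker(\hat A+\hat B)=\ker\hat A\cap\ker\hat B$ and the factorization through $\hat A^{1/2}$ remain valid for (possibly infinite-rank) trace-class positive operators on $\mathcal H$, which they do. Everything else is routine bookkeeping with intersections of convex sets.
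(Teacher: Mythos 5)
Your proof is correct. The overall skeleton is the same as the paper's: decompose an extreme point of $\mathcal{C}'$ as a midpoint of two elements of $\mathcal{C}$, show those elements are forced back into $\hat{P}\mathcal{D}\hat{P}$ (hence into $\mathcal{C}'$), and let extremality within $\mathcal{C}'$ finish the job. The one place where you diverge is in how you establish that $\hat{P}\mathcal{D}\hat{P}$ absorbs the summands. You prove directly that it is a face of $\mathcal{D}$ via the kernel identity $\ker(\hat A+\hat B)=\ker\hat A\cap\ker\hat B$ for positive operators, whereas the paper uses the affine functional $\hat\rho\mapsto\mathrm{Tr}[\hat P\hat\rho]$: it is bounded by $1$ on $\mathcal D$ and equals $1$ exactly on $\hat P\mathcal D\hat P$, so the average being $1$ forces each term to saturate the bound (i.e., the paper exhibits $\hat P\mathcal D\hat P$ as an \emph{exposed} face). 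Both mechanisms hinge on positive semi-definiteness in the same essential way --- this is precisely the ingredient the paper flags as the reason the lemma fails for quasi-states --- and both are valid for trace-class operators on an infinite-dimensional $\mathcal H$. Your version is marginally more self-contained (no need to argue separately that $\mathrm{Tr}[\hat P\hat\sigma]=1$ implies $\hat\sigma=\hat P\hat\sigma\hat P$); the paper's is marginally shorter. Your reading of $\hat P\mathcal D\hat P$ as the fixed-point set $\{\hat\rho\in\mathcal D:\hat P\hat\rho\hat P=\hat\rho\}$ is the intended one and consistent with how the paper uses it.
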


\noindent In simple words, considering the extreme states of the restricted subset of $\mathcal{C}$ that belongs to the image of projector $P$ (the eigenspace of $P$ with eigenvalue one) does not make them loose extremality within the entire set $\mathcal{C}$.

We provide quick proofs of the two Lemmas in~\cref{app:proofs} (see also~\cite{mathpaper}).
These results motivate the consideration of subsets and supersets of $\mathcal D_+$ to identify some of its extreme points.
We will be particularly interested in Fock-bounded operators, i.e.,~operators with bounded support in the Fock basis, including both states and quasi-states.

Using the notations in Table~\ref{table:notations_convex_sets}, a notable consequence of Lemma~\ref{lemma:extreme_subset_projector}, for $\mathcal C=\mathcal{D}_{+}$, $\hat{P}=\sum_{k=0}^{n}\ket{k}\!\bra{k}$, and $\mathcal C'=\mathcal{D}_{+}^n$, is the inclusion:
\begin{align}
    \mathrm{Extr}(\mathcal{D}^{n}_{+})
    \subset
    \mathrm{Extr}(\mathcal{D}_{+}).
    \label{eq:fock-bounded_extreme_states_are_extreme}
\end{align}
This relation allows us to identify extreme points of $\mathcal{D}_{+}$ by restricting to the subset $\mathcal{D}^{n}_{+}$.
Importantly, Eq.~\eqref{eq:fock-bounded_extreme_states_are_extreme} does not generalize to quasi-states, i.e.,  $\mathrm{Extr}(\mathcal{A}^{n}_{+})\not\subset\mathrm{Extr}(\mathcal{A}_{+})$, as they lack positive semi-definiteness, which is a key ingredient to the proof of Lemma~\ref{lemma:extreme_subset_projector}.
Further, they do not form a convex subset of $\mathcal D$ so that $\mathrm{Extr}(\mathcal{A}^{n}_{+})\not\subset\mathrm{Extr}(\mathcal{D}_{+})$.
However, we can use Lemma~\ref{lemma:extreme_superset}, for $\mathcal C=\mathcal{A}^{n}_{+}$ and $\mathcal C'=\mathcal D_+^n$, to show that
\begin{align}
    \mathrm{Extr}(\mathcal{A}^{n}_{+})\cap\mathcal D_+^n\subset\mathrm{Extr}(\mathcal{D}_{+}^n).
    \label{eq:extreme_quasi-states_that_are_states_are_extreme_bounded}
\end{align}
Combined with Eq.~\eqref{eq:fock-bounded_extreme_states_are_extreme}, this implies
\begin{align}
    \mathrm{Extr}(\mathcal{A}^{n}_{+})\cap\mathcal{D}\subset\mathrm{Extr}(\mathcal{D}_{+}),
    \label{eq:extreme_quasi-states_that_are_states_are_extreme}
\end{align}
where we used $\mathrm{Extr}(\mathcal{A}^{n}_{+})\cap\mathcal{D}=\mathrm{Extr}(\mathcal{A}^{n}_{+})\cap\mathcal{D}_+^n$.
Eq.~\eqref{eq:extreme_quasi-states_that_are_states_are_extreme} summarizes the approach that we use to generate extreme WPS. We start from the extreme WPQS that are Fock-bounded and then keep those that are also states (positive semidefinite operators): according to Eq.~\eqref{eq:extreme_quasi-states_that_are_states_are_extreme}, these are guaranteed to be extreme WPS. Thus, the two main underlying questions can be stated as:
\begin{enumerate}
    \item[(Q1)] \emph{What is the set of extreme Fock-bounded quasi-states?}
          \\We address this question in~\cref{sec:extreme_wigpos_quasistates}.
    \item[(Q2)]
          \emph{When are such extreme quasi-states also extreme states?}
          \\We address this question in~\cref{sec:vertigo}.
\end{enumerate}

\noindent Providing answers to questions (Q1) and (Q2) allows us to generate a large class of extreme WPS via Eq.~\eqref{eq:extreme_quasi-states_that_are_states_are_extreme}, as we detail in~\cref{sec:generating}.

\section{Extreme Wigner-positive quasi-states (\texorpdfstring{$\mathcal A_+$}{})}
\label{sec:extreme_wigpos_quasistates}

We consider here the set of WPQS, $\mathcal{A}_{+}$.
Note that we do not expect the set $\mathcal{A}_{+}$ to possess any extreme points.
Indeed, any non-negative Wigner function can be written as a convex mixture of Dirac deltas, which correspond to the operators $2\hat{\Pi}(\alpha)$.
Although these operators are not quasi-states (they are not trace-class), they can be approximated arbitrarily well by quasi-states.
This suggests that $\mathcal{A}_{+}$ has features of an open convex set and therefore contains no extreme points.
We leave this question for future work, as this has no bearing on the validity of our results.


Hence, in order to find extreme points of $\mathcal{D}_+$, we instead look for extreme points of some  subsets of $\mathcal{A}_{+}$.
We consider simple subsets of the set of bounded support WPQS $\mathcal{A}_+^{n}$, whose extreme points are easily identified.
However, the full characterization of $\mathrm{Extr}(\mathcal{A}_+^{n})$ happens to be intractable (as we will see, this is a consequence of the absence of a characterization of non-negative polyanalytic polynomials). Hence, we focus on the smaller subset   $\mathcal A_{\oplus}^n$, i.e., the Fock-bounded WPQS that are phase-invariant. The problem of finding $\mathrm{Extr}(\mathcal{A}_{\oplus}^{n})$ for a given $n$ then becomes tractable by using elementary symmetric polynomials.

\subsection{Bounded support quasi-states (\texorpdfstring{$\mathcal A_+^n$}{})}

Let us focus on a Fock-bounded subset of quasi-states, namely the quasi-states with support in the Fock basis up to (and including) $n\in\mathbb N$, denoted by $\mathcal{A}^{n}$.  Consider a quasi-state $\hat{A}\in\mathcal{A}^{n}$, with Fock matrix elements $A_{k\ell}=\bra{k}\hat{A}\ket{\ell}$.  The Wigner function of $\hat{A}$ is here expressed as:
\begin{align}
    W_{\hat A}(\alpha,\alpha^{\ast})
     & =
    \sum\limits_{k,\ell=0}^{n}
    A_{k\ell}\
    W_{k\ell}(\alpha,\alpha^{\ast})
    \\
     & =
    \mathrm{\exp}(-2\abs{\alpha}^2)
    \underbrace{
        \sum\limits_{k,\ell=0}^{n}
        P_{k\ell}
        \,
        \alpha^{\ast k}\alpha^{\ell}}_{P(\alpha,\alpha^{\ast})}\,, \label{Wigner_is_poly_times_Gauss}
\end{align}
where the scalars $P_{k\ell} \in \mathbb C$ are uniquely determined by $\hat{A}$.  These numbers can be seen as a regrouping of the expansion coefficients of the Wigner function in the Laguerre basis (see Appendix~\ref{apd:vertigo_in_state_space}).
Hence the Wigner function of $\hat{A} \in \mathcal A^n$ in Eq. \eqref{Wigner_is_poly_times_Gauss} is a particular polynomial $P(\alpha,\alpha^{\ast})$ with a Gaussian envelope.  As such, this polynomial fully determines the Wigner-positive or Wigner-negative character of $\hat{A}$.

Hereafter, we note the distinction between a polynomial of a complex variable $\alpha$, which we write as $p(\alpha)=\sum_{k} p_k\alpha^{k}$, and a \textit{polyanalytic polynomial} of a complex variable $\alpha$ \cite{Balk1970-ay, Balk1991-ga}, which we write as $p(\alpha,\alpha^{\ast}) = \sum_{k\ell}P_{k\ell}\alpha^{k}\alpha^{\ast\ell}$.  We also introduce a compact matrix notation for polyanalytic polynomials as follows.  Define the $n$-dimensional complex-valued column vector $\bm{\alpha}\vcentcolon=(1,\alpha,\alpha^{2},\dots,\alpha^{n})^{\intercal}$ for any $\alpha\in\mathbb{C}$.  From the polyanalytic polynomial $P(\alpha,\alpha^{\ast}) = \sum_{l,\ell}P_{k\ell}\alpha^{k}\alpha^{\ast\ell}$, define the matrix $\bm{P}=(P_{k\ell})_{0\leq k,\ell\leq n}$, which we call the monomial matrix of $\hat{A}$.
We have then:
\begin{align}
    P(\alpha,\alpha^{\ast})
      & =
    \sum\limits_{k=0}^{n}
    \;
    \sum\limits_{\ell=0}^{n}
    \
    P_{k\ell}
    \
    \alpha^{\ast k}
    \
    \alpha^{\ell}
    \\[0.5em]
    = &
    \renewcommand{\arraystretch}{0.8}
    \begin{pmatrix}
        1 \\ \vdots\\ \alpha^{\ast n}
    \end{pmatrix}^{\intercal}
    \renewcommand{\arraystretch}{0.8}
    \begin{pmatrix}
        P_{00} & \cdots & P_{0n} \\
        \vdots & \ddots & \vdots \\
        P_{n0} & \cdots & P_{nn}
    \end{pmatrix}
    \renewcommand{\arraystretch}{0.8}
    \begin{pmatrix}
        1 \\ \vdots\\ \alpha^{n}
    \end{pmatrix}
    \\[0.8em]
    = & \
    \bm{\alpha}^{\dagger}\,
    \bm{P}\,
    \bm{\alpha}.
\end{align}
For the monomial matrix $\bm{P}$ to represent a quasi-state, it must define a real and normalized Wigner function. Reality of $W_{\hat{A}}$ corresponds to Hermiticity of $\bm{P}$, and normalization imposes
\begin{align}
    \int\; \bm{\alpha}^{\dagger}\bm{P}\bm{\alpha}\  \mathrm{exp}({-2|\alpha|^2})\ \mathrm{d}^2\alpha = 1,
    \label{eq:normalization_polyanalytic_polynoial}
\end{align}
an underlying assumption we shall make henceforth.
Using the generalized pure loss channel \eqref{eq:def_plc} we can go back and forth between quasi-states and monomial matrices with the following rules:

\begin{align}
    P_{k\ell}
    \ = & \
    \frac{2}{\pi}\
    \sqrt{\frac{2^{k+\ell}}{k!\ell!}}
    \
    \bra{k}
    \mathcal{E}_2
    \big[
        \hat{A}
        \big]
    \ket{\ell},
    \\[0.8em]
    \hat{A}
    \ = & \
    \frac{\pi}{2}\
    \mathcal{E}_{\frac12}
    \left[
        \sum\limits_{k,\ell=0}^{n}
        P_{k\ell}
        \
        \sqrt{\frac{k!\ell!}{2^{k+\ell}}}
        \
        \ket{k}\!\bra{\ell}
        \right].
\end{align}

Identifying the extreme points of the set $\mathcal{A}^{n}_{+}$ then amounts to finding the extreme points of the set of matrices $\bm{P}$ satisfying the inequality $\bm{\alpha}^{\dagger}\bm{P}\bm{\alpha}\geq 0\ \forall\alpha\in\mathbb{C}$.  Note that this condition is weaker than the matrix $\bm P$ being positive semi-definite, as we explore shortly.

\subsubsection{Absolute square of polynomial}

\begin{table}[]
    \centering
    \resizebox{\columnwidth}{!}{%
        \renewcommand{\arraystretch}{1.4}
        \begin{tabular}{|c|c|c|c|}
            \hline
            Set                        & $P(\alpha,\alpha^{\ast})$    & $\bm{P}$                                       & Examples                     \\ \hhline{|=|=|=|=|}
            $\mathcal{A}^{n}_{\sigma}$ & $\abs{p(\alpha)}^2$          & $\bm{q}\bm{q}^{\dagger}$                       & $\hat{\sigma}(\psi,0)$       \\ \hline
            $\mathcal{A}^{n}_{\Sigma}$ & $\abs{p(\alpha,\alpha^*)}^2$ & $\bm{Q}\ast\bm{Q}^{\dagger}$                   & $\hat{\sigma}(\psi,\varphi)$ \\ \hline
            $\mathcal{A}^{n}_{+}$      & $p(\alpha,\alpha^*)\geq 0$   & $\bm{\alpha}^{\dagger}\bm{P}\bm{\alpha}\geq 0$ & $\hat{A}_{\mathrm{Motzkin}}$ \\ \hline
        \end{tabular}%
    }
    \caption{\small
        \textbf{Subsets of }$\mathcal{A}_{+}^{n}$\textbf{.}
        The set $\mathcal{A}_{\sigma}^{n}$ contains quasi-states such that their associated matrix $\bm{P}$ is PSD.
        The set $\mathcal{A}_{\Sigma}^{n}$ contains quasi-states such that their associated polynomial is a finite sum of squared-moduli of polyanalytic polynomials.
        The examples illustrate the inclusion relation $\mathcal{A}^{n}_{\sigma}\subsetneq\mathcal{A}^{n}_{\Sigma}\subsetneq\mathcal{A}^{n}_{+}$.
    }
\end{table}

Clearly, positive semi-definiteness of $\bm{P}$ is a sufficient condition for the non-negativity of $\bm{\alpha}^{\dagger}\bm{P}\bm{\alpha}$.
Extreme points of the set of PSD matrices are Hermitian rank-1 matrices of the form $\bm{P}=\bm{q}\bm{q}^{\dagger}$, where $\bm{q}\in\mathbb{C}^{n}$ is an arbitrary vector.  We then have $\bm{\alpha}^{\dagger}\bm{q}\bm{q}^{\dagger}\bm{\alpha} = \vert\bm{q}^{\dagger}\bm{\alpha}\vert^{2}=\vert\sum_{n}q_{n}^{\ast}\alpha^{n}\vert^2\equiv\vert p(\alpha)\vert^2\geq 0$.
Hence $\bm P$ being PSD corresponds to $P(\alpha,\alpha^*)$ being a convex mixture of absolute squares of polynomials of the form $p(\alpha)=\sum q^{\ast}_n\alpha^{n}$.

Note that any function of the form $\mathrm{exp}(-\abs{\alpha}^2)\abs{p(\alpha)}^2$ defines a Husimi function of some pure state $\ket{\psi}$ \cite{chabaud2020stellar}.
Eq.~\eqref{eq:WtoQ} then implies that such a function is also the Wigner function of the beam-splitter state $\hat{\sigma}(\psi,0)$ up to rescaling.
Defining $\mathcal{A}^{n}_{\sigma}\subset\mathcal{A}^{n}_{+}$ as the set of Fock-bounded quasi-states whose associated polynomial $P(\alpha,\alpha^*)$ is a finite sum of absolute squares of polynomials, we then have:
\begin{align}
    \mathrm{Extr}\left(\mathcal{A}^{n}_{\sigma}\right)
     & =
    \left\lbrace
    \hat{\sigma}(\psi,0)
    \right\rbrace_{\psi\in\mathcal{H}^{n}}
    \\[0.8em]
    \mathcal{A}^{n}_{\sigma}
     & =
    \left\lbrace
    \hat{\sigma}(\hat{\rho},0)
    \right\rbrace_{\hat{\rho}\in\mathcal{D}^{n}}.
\end{align}
As a consequence, all quasi-states with such structure are quantum states, i.e.\ $\mathcal{A}^{n}_{\sigma}\subset\mathcal{D}_{+}$.
Note that the states $\hat{\sigma}(\psi,0)$ are provably extreme within $\mathcal{A}^{n}_{\sigma}$, but it is unknown whether they remain extreme within $\mathcal{D}_{+}$.

\subsubsection{Absolute square of polyanalytic polynomial}
The vector $\bm{\alpha}=(1,\alpha,\alpha^2,...,\alpha^{n})^{\intercal}$ has a particular structure allowing $\bm{\alpha}^{\dagger}\bm{P}\bm{\alpha}$ to be non-negative even for some non-PSD matrices $\bm{P}$.
As a simple example, consider the matrix $\bm{P}=\mathrm{diag}(1,-2,1)$.
Despite not being PSD, it is associated with the polynomial $1-2\abs{\alpha}^2+\abs{\alpha}^{4}=\abs{1-\alpha\alpha^{\ast}}^2\geq 0$.
This particular example illustrates that any polyanalytic polynomial of the form $p(\alpha,\alpha^{\ast})=\abs{q(\alpha, \alpha^{\ast})}^2$ is necessarily non-negative.
Notably, all beam-splitter states $\hat{\sigma}(\psi, \varphi)$ fall into this category, since their Wigner function corresponds to the absolute square of a cross Wigner function \cite{Van-Herstraeten2024-lc}.  For instance, the non-PSD matrix $\bm P = \text{diag}(1, -4, 4)$ corresponds to the beam-splitter state $\hat \sigma(1,1) =  \frac12(\ketbra{0}{0} + \ketbra{2}{2})$.

Defining $\mathcal{A}^{n}_{\Sigma} \subset \mathcal{A}^{n}_{+}$ as the set of Fock-bounded quasi-states whose associated polynomial
$P(\alpha,\alpha^{*})$ can be written as a finite sum of absolute squares  of polyanalytic polynomials, subject to the normalization condition~\eqref{eq:normalization_polyanalytic_polynoial}, we get a larger subset of Fock-bounded WPQS:
clearly, $\mathcal{A}^{n}_{\sigma} \subset \mathcal{A}^{n}_{\Sigma}$.
By definition of extremality, $\mathrm{Extr}\left(\mathcal{A}^{n}_{\Sigma}\right)$   consists of those quasi-states for which $P(\alpha,\alpha^{*})$
reduces to the absolute square of a single (normalized) polyanalytic polynomial, that is, those admitting no nontrivial decomposition into a sum of such terms.

Let us briefly discuss the monomial matrix of an operator associated with the absolute square of a polyanalytic polynomial. Let $\bm{A}$ and $\bm{B}$ be matrices of size $(n+1)\times (n+1)$ and $(m+1)\times (m+1)$, respectively. A straightforward calculation shows that
\begin{align}
    \left(
    \bm{\alpha}^{\dagger}
    \bm{A}
    \bm{\alpha}
    \right)
    \cdot
    \left(
    \bm{\alpha}^{\dagger}
    \bm{B}
    \bm{\alpha}
    \right)
    =
    \bm{\alpha}^{\dagger}
    \left(
    \bm{A}\ast\bm{B}
    \right)
    \bm{\alpha},
    \label{eq:polynomial-product-is-matrix-convolution}
\end{align}
where $\bm{A}\ast\bm{B}$ is a $(n+m+1)\times (n+m+1)$ matrix defined by $(\bm{A}\ast\bm{B})_{k\ell}=\sum_{k',\ell'}A_{k',\ell'}B_{k-k',\ell-\ell'}$, with $0\leq k,l\leq n+m$. Here, $\bm{\alpha}$ is understood to be of the appropriate dimension required by the matrix it is acted upon by.
Equation~\eqref{eq:polynomial-product-is-matrix-convolution} implies that the monomial matrix of any quasi-state in $\mathcal{A}_{\Sigma}^n$ takes the form $\bm{P} = \sum_i \bm{Q}_i \ast \bm{Q}_i^{\dagger}$, where each $\bm{Q}_i$ arises from a polyanalytic component of the polynomial decomposition, and where $\bm{P}$ satisfies the normalization
Equation~\eqref{eq:polynomial-product-is-matrix-convolution} implies that the monomial matrix of any quasi-state in $\mathcal{A}_{\Sigma}^n$ takes the form $\bm{P} = \sum_i \bm{Q}_i \ast \bm{Q}_i^{\dagger}$, where each $\bm{Q}_i$ arises from a polyanalytic component of the polynomial decomposition, and where $\bm{P}$ satisfies the normalization
condition~\eqref{eq:normalization_polyanalytic_polynoial}. Although it goes beyond all beam-splitter states $\hat{\sigma}(\psi, \varphi)$, this set of WPQS generated with the absolute square of a polyanalytic polynomial remains well controlled.
Note that it is unknown whether the states $\hat{\sigma}(\psi,\varphi)$ are extreme within $\mathcal{A}^{n}_{\Sigma}$, and even more so within $\mathcal{D}_{+}$.

\subsubsection{Beyond absolute squares}

The situation would remain simple if every non-negative polynomial was a finite sum of absolute squares of polyanalytic polynomials — but this is not the case.
To illustrate this, consider the Motzkin polynomial \cite{Shisha1967-xv, Reznick1989-ec} defined as $p_{\mathrm{Motzkin}}(x,y)=x^4y^2+x^2y^4-3x^2y^2+1$.
This polynomial has the property to be non-negative while not being a sum of squares.
Setting $x=\alpha+\alpha^{\ast}$ and $y=-i(\alpha-\alpha^{\ast})$, we convert $p_{\mathrm{Motzkin}}$ into a polyanalytic polynomial with monomial matrix:
\begin{align}
    \renewcommand{\arraystretch}{0.8}
    \bm{P}_{\mathrm{Motzkin}}=
    \begin{pmatrix}
        1     & \cdot & \cdot & \cdot & 3     & \cdot \\
        \cdot & \cdot & \cdot & \cdot & \cdot & -4    \\
        \cdot & \cdot & -6    & \cdot & \cdot & \cdot \\
        \cdot & \cdot & \cdot & 8     & \cdot & \cdot \\
        3     & \cdot & \cdot & \cdot & \cdot & \cdot \\
        \cdot & -4    & \cdot & \cdot & \cdot & \cdot \\
    \end{pmatrix}
\end{align}
where the empty entries are zero.
Thus, the monomial matrix $\bm{P}_{\mathrm{Motzkin}}$ yields a Wigner-positive quasi-state that does not belong to $\mathcal{A}_{\Sigma}$, implying the strict inclusion $\mathcal{A}_{\Sigma}\subsetneq\mathcal{A}_{+}$.

Computing the associated quantum operator to $\bm{P}_{\mathrm{Motzkin}}$ yields the strict quasi-state $\hat{A}_{\mathrm{Motzkin}}$, i.e.\ $\hat{A}_{\mathrm{Motzkin}}\not\in\mathcal{D}$.
In fact, we have found no example of Wigner-positive state belonging to $\mathcal{A}_{+}\setminus\mathcal{A}_{\Sigma}$.
If no such state exists, this would mean that $\mathcal{D}_{+}$ is equal to
$\mathcal{A}_{\Sigma}\cap\mathcal{D}$, and substantially simplify the characterization of the WPS set. However, without a proof of this fact, we have followed another path, consisting in adding another restriction on the considered subset of quasi-states.

\subsection{Phase-invariant quasi-states (\texorpdfstring{$\mathcal A_{\oplus}^n$}{})}

As we have seen, the situation is complex when it comes to identifying the extreme points of $\mathcal{A}^{n}_{+}$.
However, when considering its subset of phase-invariant operators, denoted $\mathcal{A}^{n}_{\oplus}$, the problem becomes easier to deal with, as we show hereafter.
In phase space, phase-invariant operators are associated with radial distributions, which are thus described by a univariate polynomial in the real radial variable $r=\abs{\alpha}$ for Fock-bounded operators.
Univariate polynomials can always be factorized, allowing for a simple characterization.

Before diving into the characterization of $\mathrm{Extr}(\mathcal{A}^{n}_{\oplus})$, we should make a preliminary observation:
the extreme points of $\mathcal{A}^{n}_{\oplus}$ are not guaranteed to be extreme points of $\mathcal{A}^{n}_{+}$.
We address this concern in the following Lemma, where we show that extreme phase-invariant quasi-states with a non-zero $n$-particle component remain extreme when phase invariance is relaxed.
\begin{lemma}
    \label{lemma:extreme-fock-diagonal-quasi-states-are-extreme}
    $\mathrm{Extr}(\mathcal{A}^{n}_{\oplus})\setminus\mathcal{A}_{\oplus}^{n-1}\subset\mathrm{Extr}(\mathcal{A}^{n}_{+})$
\end{lemma}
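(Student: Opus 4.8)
The plan is to argue by contrapositive: suppose $\hat{A}\in\mathcal{A}^{n}_{\oplus}$ has a non-zero $n$-particle component (i.e.\ $A_{nn}\neq 0$) but is \emph{not} extreme in $\mathcal{A}^{n}_{+}$; I want to show it is then not extreme in $\mathcal{A}^{n}_{\oplus}$ either. By assumption we can write $\hat{A}=\tfrac{1}{2}(\hat{B}+\hat{C})$ with $\hat{B},\hat{C}\in\mathcal{A}^{n}_{+}$ distinct. The natural move is to symmetrize over the phase: apply the completely dephasing channel $\mathcal{R}$ from Table~\ref{table:notations_convex_sets} (equivalently, average over $e^{i\theta\hat{n}}\,\cdot\,e^{-i\theta\hat{n}}$). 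Since $\mathcal{R}$ is a quantum channel preserving trace, it maps quasi-states to quasi-states; it preserves the Fock bound (it only kills off-diagonal Fock elements, so $\hat{P}_n\mathcal{R}[\hat{X}]\hat{P}_n=\mathcal{R}[\hat{X}]$ still holds); and, crucially, it preserves Wigner-non-negativity because the Wigner function of $\mathcal{R}[\hat{X}]$ is the angular average of $W_{\hat{X}}$, hence still $\geq 0$. Thus $\mathcal{R}[\hat{B}],\mathcal{R}[\hat{C}]\in\mathcal{A}^{n}_{\oplus}$, and since $\mathcal{R}[\hat{A}]=\hat{A}$ we get $\hat{A}=\tfrac12(\mathcal{R}[\hat{B}]+\mathcal{R}[\hat{C}])$.

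**The main obstacle and how to handle it.**

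The difficulty is that symmetrizing may collapse the decomposition: it is entirely possible that $\mathcal{R}[\hat{B}]=\mathcal{R}[\hat{C}]=\hat{A}$ even though $\hat{B}\neq\hat{C}$, in which case we have learned nothing. This is exactly where the hypothesis $\hat{A}\notin\mathcal{A}^{n-1}_{\oplus}$ must be used. The idea is: if the phase-averaged decomposition is trivial, then $\hat{B}$ and $\hat{C}$ differ only in their off-diagonal (in the Fock basis) parts, i.e.\ $\hat{C}=\hat{B}+\hat{K}$ where $\hat{K}=\hat{C}-\hat{B}$ is a non-zero traceless Hermitian operator with $\mathcal{R}[\hat{K}]=0$ (purely off-diagonal in Fock basis) and with $\hat{P}_n\hat{K}\hat{P}_n=\hat{K}$. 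I then want to build a genuine non-trivial decomposition of $\hat{A}$ \emph{within} $\mathcal{A}^{n}_{\oplus}$ out of this. The key observation is that applying a phase rotation $U_\theta=e^{i\theta\hat{n}}$ to such a purely-off-diagonal $\hat{K}$ multiplies the matrix element $K_{k\ell}$ by $e^{i\theta(\ell-k)}$; averaging $\hat{B}+U_\theta\hat{K}U_\theta^\dagger$ over a suitable finite set of phases (or a sign flip $\hat{K}\mapsto-\hat{K}$, which is $U_\pi$ on appropriate components) recovers $\hat{A}$. But these rotated operators are not phase-invariant, so this alone does not finish the argument.

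**Completing the argument.**

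To actually produce a phase-invariant perturbation, the plan is to exploit that $\hat{A}$ has $A_{nn}\neq 0$, so there is ``room'' to move mass along the Fock diagonal. Concretely, I expect the argument to run as follows: since $\hat B, \hat C \in \mathcal{A}^n_+$ and $\mathcal{A}^n_+$ is an open-ish set in the Fock-diagonal directions near a point with full diagonal support (every small Fock-diagonal Hermitian traceless perturbation of $\hat A$ keeps the governing radial polynomial strictly positive on its support, because $A_{nn}>0$ controls the leading coefficient — or use that positivity of the radial polynomial is preserved under small perturbations away from the boundary), one can find a non-zero Fock-diagonal (hence phase-invariant) traceless Hermitian $\hat{L}$ with $\hat{A}\pm\hat{L}\in\mathcal{A}^{n}_{+}$; then $\hat{A}\pm\hat{L}\in\mathcal{A}^{n}_{\oplus}$ and $\hat A = \tfrac12((\hat A+\hat L)+(\hat A-\hat L))$ is a non-trivial phase-invariant decomposition, contradicting extremality of $\hat A$ in $\mathcal{A}^n_\oplus$. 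The one case this misses is when no such $\hat L$ exists, i.e.\ $\hat A$ is already extreme in the \emph{Fock-diagonal slice}; but then $\hat A$ being non-extreme in $\mathcal{A}^n_+$ forces the difference $\hat K = \hat C - \hat B$ to be purely off-diagonal and non-zero, and I would rule this out by a direct Wigner-positivity argument: a purely off-diagonal perturbation with top index $n$ present generically destroys non-negativity of the Wigner function at large $|\alpha|$ unless it is compensated, and the compensation would have to be diagonal, contradicting $\mathcal{R}[\hat K]=0$. I expect this last dichotomy — carefully separating the ``diagonal slack'' case from the ``purely off-diagonal difference'' case, and nailing down why $A_{nn}\neq 0$ rules out the latter — to be the technical heart of the proof, and likely where the elementary symmetric polynomial / radial factorization machinery promised for $\mathrm{Extr}(\mathcal{A}^n_\oplus)$ gets invoked to make the boundary analysis precise.
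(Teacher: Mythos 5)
Your first half matches the paper's proof exactly: dephase the hypothetical decomposition $\hat{A}=\tfrac12(\hat{B}+\hat{C})$ with $\mathcal{R}$, use extremality of $\hat{A}$ in $\mathcal{A}^{n}_{\oplus}$ to force $\mathcal{R}[\hat{B}]=\mathcal{R}[\hat{C}]=\hat{A}$, and conclude that the difference $\hat{K}=\hat{C}-\hat{B}$ is purely off-diagonal in the Fock basis with $\hat{B}-\hat{A}=-(\hat{C}-\hat{A})$. Up to there the argument is correct and is the paper's argument.

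The second half is where there is a genuine gap. First, the ``dichotomy'' you introduce is spurious: once dephasing collapses the decomposition, you are \emph{necessarily} in the purely off-diagonal case, and the task is simply to show that this case cannot occur (i.e.\ $\hat{K}=0$); there is no separate ``diagonal slack'' branch, and the construction of a diagonal perturbation $\hat{L}$ is not needed for the statement (if $\mathcal{R}[\hat{B}]\neq\hat{A}$ you are already done). Second, and more importantly, your heuristic for killing the off-diagonal part --- that it ``generically destroys non-negativity at large $\abs{\alpha}$'' --- does not work and misses the actual mechanism. The correct argument is: Wigner positivity of both $\hat{B}=\hat{A}+\hat{K}/2$ and $\hat{C}=\hat{A}-\hat{K}/2$ gives the \emph{pointwise} bound $\abs{W_{\hat{K}/2}(\alpha)}\leq W_{\hat{A}}(\alpha)$ for all $\alpha$. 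Restricting to a ray $\alpha=re^{i\varphi}$ and stripping the Gaussian, the polynomial of $\hat{A}$ has degree exactly $2n$ in $r$ (this is where $\hat{A}\notin\mathcal{A}^{n-1}_{\oplus}$ enters) and, by Theorem~\ref{th:extrAo+n}, \emph{all} $2n$ of its roots are real; the off-diagonal polynomial has degree at most $2n-1$ (indices $k\neq\ell$ with $k,\ell\leq n$ give $k+\ell\leq 2n-1$). A domination inequality $\abs{Q}\leq c\abs{P}$ forces $Q$ to vanish at every root of $P$ with the same multiplicity, so a $Q$ of strictly smaller degree must be identically zero (Lemma~\ref{lemma:poly_with_same_roots}). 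A mere degree comparison ``at large $\abs{\alpha}$'' cannot conclude: a polynomial of degree $<2n$ dominated by one of degree $2n$ need not vanish unless the dominating polynomial has all its roots real --- which is precisely the structural input from the characterization of $\mathrm{Extr}(\mathcal{A}^{n}_{\oplus})$ that your sketch gestures at but never actually deploys.
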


The proof of this result uses the characterization of $\mathrm{Extr}(\mathcal{A}_{\oplus}^n)$ (see Theorem~\ref{th:extrAo+n} below) and is deferred to Appendix~\ref{app:extrAo+n}.

Lemma~\ref{lemma:extreme-fock-diagonal-quasi-states-are-extreme} motivates the interest to study the extreme points of $\mathcal{A}^{n}_{\oplus}$.
Hereafter, we identify all the extreme points of $\mathcal{A}^{n}_{\oplus}$ for arbitrary $n$, using elementary symmetric polynomials $e_{j}$ in $m$ variables given by
\begin{align}
    e_j(\bm{\mu})
    =
    \sum\limits_{1\leq k_1<k_2<\dots< k_j\leq m}
    \mu_{k_1}\mu_{k_2}\cdots\mu_{k_j}.
\end{align}
This result is encapsulated in the following theorem.

\begin{theorem}[Extreme points of $\mathcal{A}^{n}_{\oplus}$]
    \label{th:extrAo+n}
    The following propositions are equivalent:
    \begin{itemize}
        \item[(i)]\
              $\hat{A}\in\mathrm{Extr}(\mathcal{A}^{n}_{\oplus})$.
        \item[(ii)]\ $W_{\hat{A}}\big(\alpha\big)=P\big(\abs{\alpha}^2\big)\;\mathrm{exp}(-2\abs{\alpha}^2)$ where $P$ is a polynomial of the form $P(t)=c\  t^{k}\prod_{i=1}^{\ell}(t-\lambda_i)^{2}$ such that $\lambda_i>0\ \forall i$, $k+2\ell\leq n$ and $c$ is a positive normalization constant.
        \item[(iii)]\
              $\hat{A}=(-1)^{m}\sum_{j=0}^{m}(-1)^{j}j!\, e_{m-j}(\bm{\mu})\,\hat{\sigma}(j,0)$, where $e_{j}$ is the ${j}^{\textit{th}}$ elementary symmetric polynomial and the vector $\bm{\mu}$ has $m\leq n$ non-negative components, such that strictly positive components come by pairs, i.e.\ $\bm{\mu}=\big(\underbrace{0,...,0}_{k},\underbrace{\lambda_1,\lambda_1,...,\lambda_\ell,\lambda_\ell}_{2\ell}\big)$.
    \end{itemize}
\end{theorem}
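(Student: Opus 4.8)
I would reduce the problem to a statement about univariate polynomials that are non-negative on a half-line, settle it by factorization, and then translate the answer back into the two explicit descriptions. For the reduction, note that phase invariance $\mathcal{R}[\hat{A}]=\hat{A}$ forces $\hat{A}=\sum_{k=0}^{n}A_{kk}\ketbra{k}{k}$ with $A_{kk}\in\mathbb{R}$ (Hermiticity) and $\sum_{k}A_{kk}=1$ (unit trace). Using $W_{\ketbra{k}{k}}(\alpha)=\tfrac{2}{\pi}(-1)^{k}L_{k}(4\abs{\alpha}^{2})\,\mathrm{e}^{-2\abs{\alpha}^{2}}$ and the fact that $\{L_{k}\}_{k\le n}$ is a basis of the real polynomials of degree $\le n$, the map $\hat{A}\mapsto P$ defined by $W_{\hat{A}}(\alpha)=P(\abs{\alpha}^{2})\,\mathrm{e}^{-2\abs{\alpha}^{2}}$ is an affine bijection from $\mathcal{A}^{n}_{\oplus}$ onto the convex set $\mathcal{P}^{n}$ of real polynomials of degree $\le n$ that are non-negative on $[0,\infty)$ and satisfy $\pi\int_{0}^{\infty}\mathrm{e}^{-2t}P(t)\,\mathrm{d}t=1$ --- here Wigner-positivity of $\hat{A}$ corresponds to $P\ge 0$ on $[0,\infty)$ (the Gaussian is positive and $\abs{\alpha}^{2}$ sweeps $[0,\infty)$) and unit trace corresponds to the integral condition. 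Affine bijections preserve extreme points, so it remains to identify $\mathrm{Extr}(\mathcal{P}^{n})$.

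Next I would pass to the convex cone $\mathcal{C}^{n}$ of degree-$\le n$ polynomials non-negative on $[0,\infty)$: since the functional $L(P)=\pi\int_{0}^{\infty}\mathrm{e}^{-2t}P(t)\,\mathrm{d}t$ is strictly positive on $\mathcal{C}^{n}\setminus\{0\}$, the slice $\mathcal{P}^{n}=\{P\in\mathcal{C}^{n}:L(P)=1\}$ is a base of $\mathcal{C}^{n}$, and $\mathrm{Extr}(\mathcal{P}^{n})$ consists precisely of the normalized generators $P/L(P)$ of the extreme rays of $\mathcal{C}^{n}$. The claim is that these extreme rays are generated exactly by $P(t)=t^{k}\prod_{i=1}^{\ell}(t-\lambda_{i})^{2}$ with $\lambda_{i}>0$ and $k+2\ell\le n$ (repetitions allowed), which is (ii) with $c=1/L(P)$. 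To see such $P$ is extreme: if $P=Q_{1}+Q_{2}$ with $Q_{j}\in\mathcal{C}^{n}$ then $0\le Q_{1}\le P$ on $[0,\infty)$, and comparing orders of vanishing at $t=0$ and at each interior root $\lambda_{i}$ --- where the sandwich forces $Q_{1}$ to vanish at least as fast as $P$, else the quotient would blow up --- shows $P\mid Q_{1}$; writing $Q_{1}=PR$, the quotient $R=Q_{1}/P$ is a polynomial with $0\le R\le 1$ on the \emph{unbounded} set $[0,\infty)$, hence constant, so $Q_{1}\propto P$. To see nothing else is extreme: a polynomial in $\mathcal{C}^{n}$ has positive leading coefficient and only even-multiplicity roots in $(0,\infty)$; a root $\rho<0$ makes $P=(t+\abs{\rho})\tilde{P}=t\tilde{P}+\abs{\rho}\tilde{P}$ a sum of two non-proportional elements of $\mathcal{C}^{n}$, and an irreducible quadratic factor $t^{2}+bt+c=(t+b/2)^{2}+(c-b^{2}/4)$ splits $P$ non-trivially in the same way; so an extreme $P$ keeps only the root $0$ and even-multiplicity positive roots, i.e.\ the claimed shape. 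This establishes (i)$\Leftrightarrow$(ii).

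For (ii)$\Leftrightarrow$(iii) I would just rewrite the polynomial in the binomial-state basis. The binomial state $\hat{\sigma}(j,0)$ is Fock-diagonal with $W_{\hat{\sigma}(j,0)}(\alpha)=\tfrac{2^{j+1}}{\pi j!}\abs{\alpha}^{2j}\mathrm{e}^{-2\abs{\alpha}^{2}}$ (from $\hat{\sigma}(j,0)=\mathcal{E}_{1/2}[\ketbra{j}{j}]$ and \eqref{eq:WtoQ}, equivalently from $\sum_{k}\binom{j}{k}(-1)^{k}L_{k}(x)=x^{j}/j!$). Writing the monic part of the polynomial in (ii) as $\prod_{i=1}^{m}(t-\mu_{i})$ with $m=k+2\ell$ and $\bm{\mu}=(0,\dots,0,\lambda_{1},\lambda_{1},\dots,\lambda_{\ell},\lambda_{\ell})$ (the positive roots doubled, the overall scale fixed by the $2^{j}$ appearing above), expanding $\prod_{i}(t-\mu_{i})=\sum_{j}(-1)^{m-j}e_{m-j}(\bm{\mu})\,t^{j}$ and matching against $W_{\hat{A}}=\sum_{j}a_{j}W_{\hat{\sigma}(j,0)}$ gives, after collecting constants, $\hat{A}\propto(-1)^{m}\sum_{j=0}^{m}(-1)^{j}j!\,e_{m-j}(\bm{\mu})\,\hat{\sigma}(j,0)$, which is (iii) (up to the overall normalization constant); the converse is the same computation run in reverse.

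\textbf{Main obstacle.} The crux is the extreme-ray characterization in the second step. The ``extreme'' direction rests on the observation that a polynomial squeezed between $0$ and a fixed polynomial on the \emph{non-compact} half-line $[0,\infty)$ can differ from it only by a multiplicative constant --- this is where non-compactness is genuinely used. The ``nothing else'' direction needs the elementary but slightly delicate fact that the only indecomposable non-negative factors on $[0,\infty)$ are $t$ and the perfect squares $(t-\lambda)^{2}$ with $\lambda>0$, every negative real root or irreducible quadratic peeling off an extra summand. The surrounding reductions --- the Laguerre dictionary of the first step and the elementary-symmetric bookkeeping of the third --- are routine.
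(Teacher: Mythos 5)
Your proposal is correct and follows essentially the same route as the paper: reduce via the Laguerre/Fock-diagonal dictionary to extreme non-negative polynomials of degree $\leq n$ on $[0,\infty)$, show that negative real roots and irreducible quadratic factors always yield a non-trivial decomposition while $t^{k}\prod_i(t-\lambda_i)^2$ is indecomposable by a root-multiplicity sandwich argument, and then translate into the binomial-state basis with elementary symmetric polynomials. The only (cosmetic) differences are that you phrase the indecomposability step as ``$P\mid Q_1$ and a polynomial bounded on $[0,\infty)$ is constant'' where the paper invokes its Lemma~\ref{lemma:poly_with_same_roots} via degree counting, and you decompose non-extreme polynomials directly where the paper runs the equivalent convex-mixing construction in the opposite direction.
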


We provide the proof in Appendix~\ref{app:extrAo+n}.
Theorem~\ref{th:extrAo+n} immediately yields the inclusion chain
$\mathrm{Extr}(\mathcal{A}^{n}_{\oplus}) \subset \mathrm{Extr}(\mathcal{A}^{n+1}_{\oplus})$.
This nesting of extreme sets does not carry over to $\mathcal{A}_{+}$: in general, $\mathrm{Extr}(\mathcal{A}_{+}^{n}) \not\subset \mathrm{Extr}(\mathcal{A}^{n+1}_{+})$, as discussed in Appendix~\ref{app:extrAo+n}.
Finally, we observe that every extreme point of $\mathcal{A}^{n}_{\oplus}$ arises from the absolute square of a
polyanalytic polynomial, which in turn implies the inclusion $\mathcal{A}^{n}_{\oplus} \subset \mathcal{A}^{n}_{\Sigma}$.

Having characterized a large class of extreme WPQS, we introduce in the following section a map which sends extreme WPQS to extreme WPS.

\section{The Vertigo map: from quasi-states to states}
\label{sec:vertigo}

This section is devoted to a central contribution of this paper: the definition of a new quantum map -- the \textit{Vertigo map} -- whose action is carefully designed to rescale the polynomial part of the Wigner function of Fock-bounded operators.
As we are going to show, it plays a particular role with respect to the set of Wigner-positive states and quasi-states.

\subsection{Definition and properties}
We define the Vertigo map $\mathcal{V}_{t}$ as the probabilistic map that acts on Wigner functions as follows:
\begin{align}
    W_{\mathcal{V}_t[\hat{\rho}]}(\alpha)
    =
    W_{\hat{\rho}}(\sqrt{t}\alpha)\
    \exp(2(t-1)\abs{\alpha}^2),
    \label{eq:def_vertigo_map_wigner}
\end{align}
where $t\geq 0$ is a tunable rescaling parameter.
The map $\mathcal{V}$ is defined in such a way that it rescales the polynomial part of a Fock-bounded operator quasi-state $\hat{A}\in\mathcal{A}^n$.
Such an operator $\hat{A}$ has a Wigner transform of the form $W_{\hat{A}}(\alpha)=P_{\hat{A}}(\alpha,\alpha^{\ast})\;\mathrm{exp}(-2\abs{\alpha}^2)$, where $P_{\hat{A}}$ is a polyanalytic polynomial of degree $n$ in $\alpha$ and $\alpha^{\ast}$.
More precisely, when the operator $\hat{A}$ evolves under the Vertigo map $\mathcal{V}_{t}$,
the action on its associated Wigner function $W(\alpha)$,
polyanalytic polynomial $P_{\hat{A}}(\alpha,\alpha^{\ast})$, and monomial matrix $\bm{P}$ is as follows:

\begin{align*}
    \hat{A}
    \quad & \longrightarrow\quad
    \mathcal{V}_{t}\big[\hat{A}\big]
    \\
    W(\alpha)
    \quad & \longrightarrow\quad
    W(\sqrt{t}\alpha)\exp(2(t-1)\abs{\alpha}^2)
    \\
    P(\alpha,\alpha^{\ast})
    \quad & \longrightarrow\quad
    P(\sqrt{t}\alpha,\sqrt{t}\alpha^{\ast})
    \\
    (\bm{P})_{k\ell}
    \quad & \longrightarrow\quad
    \sqrt{t}^{k+\ell}
    (\bm{P})_{k\ell}.
\end{align*}

The choice to name this map ‘‘Vertigo'' stems from its dual, opposing actions.
As indicated in Eq.~\eqref{eq:def_vertigo_map_wigner}, when
$t>1$, the Wigner function is compressed toward the origin of phase space.
Simultaneously, it is multiplied by a function that amplifies the contribution of regions further away from the origin.
The resulting action of these two effects on the Wigner function is to rescale its polynomial part, while preserving exactly its exponential decay.
This mechanism resembles the cinematic ‘‘Vertigo effect'' (also known as ‘‘Dolly zoom'') famously used in Hitchcock's film \cite{hitchcock1958, Federico2012-rr}.
The Vertigo effect consists in zooming in (resp.\ out) the frame while moving the camera away (resp.\ closer).
Doing so, the subject stays roughly in the same size in the frame, while the background shrinks or expands.

As we show in Appendix~\ref{apd:vertigo_in_state_space}, the action of the Vertigo map, described in the phase-space picture by Eq.~\eqref{eq:def_vertigo_map_wigner}, can be formulated in state space as:
\begin{align}
    \mathcal{V}_t\big[\hat{\rho}\big]
    =
    \sum\limits_{k=0}^{\infty}
    \
    \frac{(t-1)^k}{k!}
    \sqrt{t}^{\,\hat{n}}
    \hat{a}^{k}
    \;
    \hat{\rho}
    \;
    \hat{a}^{\dagger k}
    \sqrt{t}^{\,\hat{n}},
    \label{eq:def_vertigo_map_channel}
\end{align}
which provide an equivalent definition of the map $\mathcal{V}_{t}$.

The Vertigo map can be decomposed as the composition of a Pure Loss Channel (PLC) and Noiseless Linear Amplifier (NLA) (see~\cref{fig:vertigo_physical_implementation}).
The PLC is the channel $\mathcal{E}_{\eta}[\hat{\rho}]=\mathrm{Tr}_2\big[\hat{U}_{\eta}(\hat{\rho}\otimes\ket{0}\!\bra{0})\hat{U}^{\dagger}_{\eta}\big]$ where $\eta$ is the transmittance, and the NLA is the map $\mathcal{N}_{g}[\hat{\rho}]=\sqrt{g}^{\,\hat{n}}\,\hat{\rho}\,\sqrt{g}^{\,\hat{n}}$ where $g$ is the gain.
We have the relation:
\begin{align}
    \mathcal{V}_{t}
    = & \;
    \mathcal{E}_{\eta}
    \circ
    \mathcal{N}_{g},
\end{align}
where the transmittance of the PLC is $\eta=t/(2t-1)$ and the gain of the NLA is $g=2t-1$.
Note that this decomposition is not unique (see Appendix~\ref{apd:vertigo_in_state_space}).
Note also that the Vertigo map acts on the Wigner function the same way as the NLA acts on the Husimi function \cite{Ralph2009-ue, Gagatsos2014-xa, Zhao2017-fj}.
This translates into the fact that $\mathcal{V}_{t}[\hat{\sigma}(\hat{\rho},0)]=\hat{\sigma}(\mathcal{N}_{t}[\hat{\rho}],0)$.

A few observations should be made about $\mathcal{V}_{t}$.
First, observe that it is not a trace-preserving map, since the following can be computed: $\mathrm{Tr}[\mathcal{V}_t[\hat{\rho}]]=\mathrm{Tr}[(2t-1)^{\hat{n}}\hat{\rho}]$.
In fact, for some inputs the trace even diverges, e.g.~for thermal states of high temperature.
From Eq.~\eqref{eq:def_vertigo_map_channel}, we see that the Vertigo map never creates higher components in the Fock basis, and preserves the Fock support of the input.
For an input $\hat{A}\in\mathcal{A}^{n}$, the trace of the output always converges, and we focus on such inputs in the following.
Because $\mathcal{V}_{t}$ is not trace-preserving, it does not map quasi-states to quasi-states.
In the following we introduce a normalized, trace-preserving version of $\mathcal{V}_{t}$, but before doing so let us investigate when we can divide by the trace.
Using Eq.~\eqref{eq:def_vertigo_map_wigner} and $\mathrm{Tr}[\hat{A}]=\pi\int W_{\hat{A}}(\alpha)\mathrm{d}^2\alpha$, it is apparent that applying the Vertigo map to a Wigner-positive quasi-state yields an operator whose trace remains strictly positive.
More precisely, for all $\hat{A}\in\mathcal{A}^{n}_{+}$ we have that $\mathrm{Tr}\big[\mathcal{V}_{t}[\hat{A}]\big]>0$ for all $t>0$.
We define the notation $\mathcal{V}^{\mathrm{norm}}_t[\hat{\rho}]\vcentcolon=\mathcal{V}_t[\hat{\rho}]/\mathrm{Tr}[\mathcal{V}_t[\hat{\rho}]]$.
We can then write:
\begin{align}
    \mathcal{V}^{\mathrm{norm}}_{t}:
    \mathcal{A}^{n}_{+}\rightarrow\mathcal{A}^{n}_{+},
\end{align}
which holds for all $t>0$.
In the limiting case $t=0$, the trace of $\mathcal{V}_{t}[\hat{A}]$ vanishes if $W_{\hat{A}}(0)=0$, in which case the map is ill-defined.
We also note that Wigner-negative quasi-states are mapped to Wigner-negative quasi-states; that is,
$\mathcal{V}_{t}^{\mathrm{norm}}$ sends $\mathcal{A}^{n}\setminus\mathcal{A}^{n}_{+}$ into itself for every $t\in(0,\infty)$,
as long as the trace does not vanish (otherwise the map is ill defined).


Second, as it appears from Eq.~\eqref{eq:def_vertigo_map_channel}$, \mathcal{V}_t$ preserves positive semi-definiteness when $t\geq 1$, but not necessarily when $0<t<1$. In particular, for $t\geq 1$ we have
\begin{align}
    \mathcal{V}^{\mathrm{norm}}_{t\geq 1}:
    \mathcal{D}^{n}\rightarrow\mathcal{D}^{n}.
\end{align}

\subsection{Vertigo trajectories}

Observe that $(\lbrace\mathcal{V}_{t}\rbrace_{t\in\mathbb{R}_{>0}},\circ)$ has the same group structure as $(\mathbb{R}_{>0},\times)$.
In particular, it is closed under composition:
\begin{align}
    \mathcal{V}^{\mathrm{norm}}_{t_2}\circ\mathcal{V}^{\mathrm{norm}}_{t_1}=\mathcal{V}^{\mathrm{norm}}_{t_1 t_2}.
\end{align}
This allows us to define the \textit{Vertigo trajectory} of a quantum operator $\hat{A}$ as the set of operators $\big\lbrace\mathcal{V}_{t}^{\mathrm{norm}}[\hat{A}]\big\rbrace_{t>0}$.
We refer to the subset corresponding to $t \geq 1$ as the \textit{forward} trajectory, and to the subset corresponding to $0 < t \leq 1$ as the \textit{backward} trajectory.
Remarkably, the Vertigo trajectory of an extreme WPQS contains only extreme WPQS, as per the forthcoming lemma.

\begin{lemma}[Extreme Vertigo trajectories]
    \label{lemma:vertigo_extreme_trajectories}
    If $\hat{A}\in\mathrm{Extr}(\mathcal{A}^{n}_{+})$, then $\mathcal{V}^{\mathrm{norm}}_t[\hat{A}]\in\mathrm{Extr}(\mathcal{A}^{n}_{+})$ for all $t>0$.
\end{lemma}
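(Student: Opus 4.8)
The plan is to exploit the fact that $\mathcal{V}^{\mathrm{norm}}_t$ is an invertible affine bijection of $\mathcal{A}^n_+$ onto itself, since affine bijections preserve extreme points. First I would establish that for every $t>0$, the normalized Vertigo map restricts to a bijection $\mathcal{A}^n_+\to\mathcal{A}^n_+$. The excerpt already tells us $\mathcal{V}^{\mathrm{norm}}_t$ maps $\mathcal{A}^n_+$ into itself for all $t>0$; by the group law $\mathcal{V}^{\mathrm{norm}}_{t_2}\circ\mathcal{V}^{\mathrm{norm}}_{t_1}=\mathcal{V}^{\mathrm{norm}}_{t_1 t_2}$, the map $\mathcal{V}^{\mathrm{norm}}_{1/t}$ provides a two-sided inverse (with $\mathcal{V}^{\mathrm{norm}}_1=\mathrm{id}$), so $\mathcal{V}^{\mathrm{norm}}_t:\mathcal{A}^n_+\to\mathcal{A}^n_+$ is a bijection with inverse $\mathcal{V}^{\mathrm{norm}}_{1/t}$, which is also defined everywhere on $\mathcal{A}^n_+$.

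Second, I would check that $\mathcal{V}^{\mathrm{norm}}_t$ is an \emph{affine} map on $\mathcal{A}^n_+$. The unnormalized $\mathcal{V}_t$ is linear by inspection of Eq.~\eqref{eq:def_vertigo_map_channel} (or equivalently Eq.~\eqref{eq:def_vertigo_map_wigner}). Given $\hat A=\lambda\hat B+(1-\lambda)\hat C$ with $\hat B,\hat C\in\mathcal{A}^n_+$ and $\lambda\in[0,1]$, write $\tau_X\vcentcolon=\mathrm{Tr}[\mathcal{V}_t[\hat X]]$, which is strictly positive for WPQS. Then linearity gives $\mathcal{V}_t[\hat A]=\lambda\mathcal{V}_t[\hat B]+(1-\lambda)\mathcal{V}_t[\hat C]$ and $\tau_A=\lambda\tau_B+(1-\lambda)\tau_C$, so
\begin{align*}
\mathcal{V}^{\mathrm{norm}}_t[\hat A]
=\frac{\lambda\tau_B}{\tau_A}\,\mathcal{V}^{\mathrm{norm}}_t[\hat B]+\frac{(1-\lambda)\tau_C}{\tau_A}\,\mathcal{V}^{\mathrm{norm}}_t[\hat C],
\end{align*}
a convex combination since the two coefficients are non-negative and sum to one. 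Hence $\mathcal{V}^{\mathrm{norm}}_t$ maps line segments to line segments (possibly reparametrized), i.e.\ it is affine in the appropriate sense on the convex set $\mathcal{A}^n_+$.

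Third, I would conclude by the standard fact that an affine bijection between convex sets carries extreme points to extreme points. Concretely: suppose $\hat A\in\mathrm{Extr}(\mathcal{A}^n_+)$ and suppose for contradiction that $\mathcal{V}^{\mathrm{norm}}_t[\hat A]=\tfrac12(\hat x+\hat y)$ with $\hat x,\hat y\in\mathcal{A}^n_+$ and $\hat x\neq\hat y$. Apply the inverse $\mathcal{V}^{\mathrm{norm}}_{1/t}$, which is also affine by the same argument; then $\hat A=\mathcal{V}^{\mathrm{norm}}_{1/t}[\tfrac12(\hat x+\hat y)]$ is a proper convex combination of the two distinct points $\mathcal{V}^{\mathrm{norm}}_{1/t}[\hat x]$ and $\mathcal{V}^{\mathrm{norm}}_{1/t}[\hat y]$ of $\mathcal{A}^n_+$ (distinct because $\mathcal{V}^{\mathrm{norm}}_{1/t}$ is injective), lying on a nontrivial segment inside $\mathcal{A}^n_+$, contradicting extremality of $\hat A$. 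Therefore $\mathcal{V}^{\mathrm{norm}}_t[\hat A]\in\mathrm{Extr}(\mathcal{A}^n_+)$.

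I expect the main subtlety to be purely bookkeeping: verifying that the trace-renormalization does not spoil affinity, i.e.\ that the reparametrization of the segment keeps the convex-combination coefficients in $[0,1]$ — this is exactly where strict positivity of $\mathrm{Tr}[\mathcal{V}_t[\hat A]]$ on $\mathcal{A}^n_+$ (already established in the excerpt) is needed, so that no division by zero occurs and the map is genuinely well-defined and bijective on all of $\mathcal{A}^n_+$. Everything else is the textbook ``affine isomorphisms preserve $\mathrm{Extr}$'' argument.
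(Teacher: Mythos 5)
Your proof is correct and follows essentially the same route as the paper's: both arguments apply the inverse map $\mathcal{V}_{1/t}$ (well-defined on $\mathcal{A}^n_+$ since the trace stays strictly positive there) to a hypothetical nontrivial decomposition of $\mathcal{V}^{\mathrm{norm}}_t[\hat A]$ and contradict the extremality of $\hat A$, using that the Vertigo map preserves Fock support and Wigner positivity. The only difference is that you spell out the trace-renormalization bookkeeping explicitly, whereas the paper compresses it into proportionality signs.
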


This lemma is a straightforward consequence of the fact that extreme non-negative polynomials remain extreme under rescaling.
Interestingly, some quasi-states are invariant under the Vertigo map $\mathcal{V}^{\mathrm{norm}}_{t}$.
As we show in Appendix~\ref{app:vertigo_traj} (see Lemma~\ref{lem:eig-op}),
such quasi-states are identified with quasi-states associated with a homogeneous polynomial $P(\alpha,\alpha^{\ast})$, i.e.,\ such that $P(\alpha,\alpha^{\ast})=\sum_{k=0}^{N}c_{k}\alpha^{k}\alpha^{\ast N-k}$.
While some of these operators are strict quasi-states, some are states.
Remarkably, all the quantum states with homogeneous polynomial have a Wigner function of the form (see Lemma~\ref{lem:eig-dens-op}):
\begin{align}
    W_{\hat{\sigma}(n,0)}
    (\alpha)
    =
    \frac{2}{\pi}\
    \frac{1}{n!}
    \left(2\abs{\alpha}^2\right)^{n}\
    \exp(-2\abs{\alpha}^2).
    \label{eq:wigner_function_binomial_state}
\end{align}
The above Wigner function is in fact that of a beam-splitter state $\hat{\sigma}(n,0)$, which we call \textit{binomial} beam-splitter state as it is a binomial mixture of Fock states:
\begin{align}
    \hat{\sigma}(n,0)
    =
    \frac{1}{n!}
    \sum\limits_{k=0}^{n}
    \binom{n}{k}
    \ket{k}\!\bra{k}.
\end{align}
Note that these are completely dephased versions of pure binomial states introduced in the context of bosonic quantum error correction \cite{michael2016new}.
Observe that $\hat{\sigma}(n,0)$ has rank $n$ and is a full-rank positive-definite operator in $\mathcal{D}^{n}$, so that $\hat{\sigma}(n,0)\in\mathrm{int}(\mathcal{D}^{n})$.
Since any quasi-state $\hat{A}\in\mathcal{A}^{n}$ such that $\bra{n}\hat{A}\ket{n}>0$ converges to $\hat{\sigma}(n,0)$ (corollary of Lemma~\ref{lem:fixed_point}), this means that there exists a finite $t_0\in\mathbb{R}_{+}$ such that $\mathcal{V}^{\mathrm{norm}}_{t_0}[\hat{A}]\in\mathcal{D}^{n}$.
Similarly, for all $\hat{\rho}\in\mathcal{D}^{n}\setminus\mathcal{D}^{n-1}$:
\begin{align}
    \lim\limits_{t\rightarrow\infty}
    \
    \mathcal{V}^{\mathrm{norm}}_{t}
    \left[
        \hat{\rho}
        \right]
    =
    \hat{\sigma}(n,0).
\end{align}

We can combine the previous observations into the following lemma:

\begin{lemma}[$\mathcal{V}$ maps quasi-states to states]
    \label{lemma:vertigo_quasi-states_to_states}
    Let $\hat{A}$ be a quasi-state of $\mathcal{A}^{n}$ such that $\bra{n}\hat{A}\ket{n}\neq 0$.
    Then, there exists $t_0>0$ such that $\mathcal{V}^{\mathrm{norm}}_t[\hat{A}]\in\mathcal{D}^{n}$, $\forall t\geq t_0$.
\end{lemma}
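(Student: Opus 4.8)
The plan is to exploit the convergence of Vertigo trajectories to the binomial state $\hat\sigma(n,0)$ together with the fact that $\hat\sigma(n,0)$ lies in the interior of $\mathcal D^n$. First I would observe that $\hat\sigma(n,0)$ is, by the preceding discussion, a full-rank positive-definite operator on $\mathcal H^n$, hence an interior point of $\mathcal D^n$: there exists $\epsilon>0$ such that every unit-trace Hermitian operator supported on $\mathcal H^n$ within trace-norm distance $\epsilon$ of $\hat\sigma(n,0)$ is itself a state. Next, since $\bra n\hat A\ket n\neq 0$ — and in fact $\bra n\hat A\ket n>0$, because the diagonal Fock element of a Wigner-positive quasi-state equals $\pi\int W_{\hat A}(\alpha)W_{\ket n\bra n}(\alpha)\,\mathrm d^2\alpha$ with $W_{\ket n\bra n}\geq 0$ near the relevant region; more directly one uses the stated corollary of Lemma~\ref{lem:fixed_point} — we have that $\hat A\in\mathcal A^n\setminus\mathcal A^{n-1}$, so the corollary of Lemma~\ref{lem:fixed_point} gives $\mathcal V^{\mathrm{norm}}_t[\hat A]\to\hat\sigma(n,0)$ in trace norm as $t\to\infty$.

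Combining these two facts, there exists $t_0>0$ such that for all $t\geq t_0$ the operator $\mathcal V^{\mathrm{norm}}_t[\hat A]$ lies within trace-norm distance $\epsilon$ of $\hat\sigma(n,0)$. Since the Vertigo map preserves Fock support (as noted after Eq.~\eqref{eq:def_vertigo_map_channel}), $\mathcal V^{\mathrm{norm}}_t[\hat A]\in\mathcal A^n$, i.e.\ it is a unit-trace Hermitian operator supported on $\mathcal H^n$; being within $\epsilon$ of the interior point $\hat\sigma(n,0)$, it is therefore PSD, hence lies in $\mathcal D^n$. This establishes the claim. One subtlety worth flagging: the corollary of Lemma~\ref{lem:fixed_point} as quoted requires only $\bra n\hat A\ket n>0$, but a priori one might worry about $\bra n\hat A\ket n<0$ for a strict quasi-state; however, for $\hat A\in\mathcal A^n_+$ this cannot happen, and more generally one can note that if $\bra n\hat A\ket n<0$ then the trace of $\mathcal V_t[\hat A]=\mathrm{Tr}[(2t-1)^{\hat n}\hat A]$ is dominated by the $n$-particle term and becomes negative for large $t$, so the argument would need $W_{\hat A}$-positivity to normalize — consistent with the lemma's phrasing once one restricts attention to the relevant inputs (Wigner-positive quasi-states) as in the subsequent application.

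The main obstacle is making the convergence statement rigorous in trace norm on the finite-dimensional space $\mathcal H^n$, i.e.\ justifying the corollary of Lemma~\ref{lem:fixed_point}: one must show that $\mathcal V^{\mathrm{norm}}_t$ acts on the monomial matrix $\bm P$ (equivalently on the Fock matrix elements) by suppressing all components except the top homogeneous part $\alpha^n\alpha^{\ast n}$ of the polyanalytic polynomial. Explicitly, from the rescaling rule $(\bm P)_{k\ell}\mapsto t^{(k+\ell)/2}(\bm P)_{k\ell}$ followed by normalization, every entry with $k+\ell<2n$ is divided by a factor growing at least like $\sqrt t$ relative to the $(n,n)$ entry, so after normalization $\bm P/\mathrm{Tr}\to \mathrm{const}\cdot E_{nn}$, which back-transforms (via the generalized PLC $\mathcal E_{1/2}$) precisely to $\hat\sigma(n,0)$ with Wigner function~\eqref{eq:wigner_function_binomial_state}. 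Since everything happens in the fixed finite-dimensional space $\mathcal H^n$, entrywise convergence of the Fock matrix is equivalent to trace-norm convergence, closing the gap. I would present the interior-point argument in full and cite Lemma~\ref{lem:fixed_point} (and its corollary) for the convergence, as that is where the genuine content lies.
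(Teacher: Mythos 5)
Your proposal is correct and follows essentially the same route as the paper's own proof: establish convergence of $\mathcal{V}^{\mathrm{norm}}_t[\hat{A}]$ to the binomial state $\hat{\sigma}(n,0)$ via the monomial-matrix rescaling $(\bm{P})_{k\ell}\mapsto\sqrt{t}^{\,k+\ell}(\bm{P})_{k\ell}$ (so that the $(n,n)$ entry dominates after normalization), note that $\hat{\sigma}(n,0)$ is full-rank and hence an interior point of $\mathcal{D}^{n}$, and conclude by an $\epsilon$-ball/continuity argument on the fixed finite-dimensional support. Your side remarks on the sign of $\bra{n}\hat{A}\ket{n}$ are harmless (normalization by the trace cancels the sign in the limit), and the paper additionally observes that PSD-preservation of $\mathcal{V}_t$ for $t\geq 1$ keeps the trajectory inside $\mathcal{D}^{n}$ once it enters, but the substance of the two arguments is the same.
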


At this stage, we have established that the Vertigo map possesses a family of fixed points that attract both states and quasi-states as $t \to \infty$. We also showed that these fixed points lie strictly inside the set of density operators, which forces Vertigo trajectories starting in $\mathcal{A}$ to eventually enter $\mathcal{D}$ (Lemma~\ref{lemma:vertigo_quasi-states_to_states}). In addition, the Vertigo map preserves extremality with respect to Wigner-positivity, since extreme WPQS remain extreme along their evolution (Lemma~\ref{lemma:vertigo_extreme_trajectories}). Combining these two observations leads directly to the main result of this section, stated in the following theorem.

\begin{theorem}[$\mathcal{V}$ maps extreme WPQS to extreme WPS]\label{th:vertigoWPQStoWPS}
    Let $\hat{A}$ be an extreme quasi-state of $\mathcal{A}^{n}_{+}$ such that $\bra{n}\hat{A}\ket{n}>0$.
    Then, there exists $t_0>0$ such that $\mathcal{V}^{\mathrm{norm}}_t[\hat{A}]\in\mathrm{Extr}(\mathcal{D}^{n}_{+})\subset\mathrm{Extr}(\mathcal{D}_{+})$, $\forall t\geq t_0$.
    \label{theorem:vertigo-extreme-quasi-states-to-extreme-states}
\end{theorem}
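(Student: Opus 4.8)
The plan is to obtain the theorem as a direct corollary of the two structural lemmas established in this section, combined with the set-theoretic inclusions~\eqref{eq:extreme_quasi-states_that_are_states_are_extreme_bounded} and~\eqref{eq:extreme_quasi-states_that_are_states_are_extreme} from the preliminaries. Concretely, fix $\hat{A}\in\mathrm{Extr}(\mathcal{A}^{n}_{+})$ with $\bra{n}\hat{A}\ket{n}>0$. First I would invoke Lemma~\ref{lemma:vertigo_quasi-states_to_states}: since $\bra{n}\hat{A}\ket{n}\neq 0$, there exists a threshold $t_0>0$ such that $\mathcal{V}^{\mathrm{norm}}_t[\hat{A}]\in\mathcal{D}^{n}$ for all $t\geq t_0$. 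Next I would invoke Lemma~\ref{lemma:vertigo_extreme_trajectories}: the normalized Vertigo trajectory of an extreme WPQS consists only of extreme WPQS, so $\mathcal{V}^{\mathrm{norm}}_t[\hat{A}]\in\mathrm{Extr}(\mathcal{A}^{n}_{+})$ for every $t>0$, in particular for every $t\geq t_0$.

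Combining the two facts, for each $t\geq t_0$ the operator $\mathcal{V}^{\mathrm{norm}}_t[\hat{A}]$ lies in $\mathrm{Extr}(\mathcal{A}^{n}_{+})\cap\mathcal{D}$. Using $\mathrm{Extr}(\mathcal{A}^{n}_{+})\cap\mathcal{D}=\mathrm{Extr}(\mathcal{A}^{n}_{+})\cap\mathcal{D}^{n}_{+}$ together with~\eqref{eq:extreme_quasi-states_that_are_states_are_extreme_bounded}, i.e.\ $\mathrm{Extr}(\mathcal{A}^{n}_{+})\cap\mathcal{D}^{n}_{+}\subset\mathrm{Extr}(\mathcal{D}^{n}_{+})$, this gives $\mathcal{V}^{\mathrm{norm}}_t[\hat{A}]\in\mathrm{Extr}(\mathcal{D}^{n}_{+})$; and~\eqref{eq:fock-bounded_extreme_states_are_extreme} (equivalently the chain~\eqref{eq:extreme_quasi-states_that_are_states_are_extreme}) yields $\mathrm{Extr}(\mathcal{D}^{n}_{+})\subset\mathrm{Extr}(\mathcal{D}_{+})$, which is exactly the claimed conclusion. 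The one point requiring a moment's care is that the statement is asserted for all $t\geq t_0$ rather than merely at $t=t_0$; this is already contained in Lemma~\ref{lemma:vertigo_quasi-states_to_states}, but it can also be read off directly from the semigroup identity $\mathcal{V}^{\mathrm{norm}}_{t}=\mathcal{V}^{\mathrm{norm}}_{t/t_0}\circ\mathcal{V}^{\mathrm{norm}}_{t_0}$ together with $\mathcal{V}^{\mathrm{norm}}_{s}:\mathcal{D}^{n}\to\mathcal{D}^{n}$ for $s\geq 1$.

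Since the theorem reduces to assembling results already proved, the real obstacle lies not in this argument but in the two lemmas it invokes. Lemma~\ref{lemma:vertigo_extreme_trajectories} rests on the fact that the monomial-matrix rescaling $(\bm{P})_{k\ell}\mapsto\sqrt{t}^{\,k+\ell}(\bm{P})_{k\ell}$ is an affine bijection of the set of non-negative polyanalytic polynomials, hence maps extreme points to extreme points. Lemma~\ref{lemma:vertigo_quasi-states_to_states} is the more delicate ingredient: it requires showing that any $\hat{A}\in\mathcal{A}^{n}$ with $\bra{n}\hat{A}\ket{n}\neq 0$ flows under $\mathcal{V}^{\mathrm{norm}}_t$ toward the binomial beam-splitter state $\hat{\sigma}(n,0)$, that $\hat{\sigma}(n,0)$ lies in $\mathrm{int}(\mathcal{D}^{n})$ — so that a sufficiently advanced point of the forward trajectory is genuinely a density operator — and that the convergence is controlled enough to produce a single finite $t_0$ beyond which the trajectory never leaves $\mathcal{D}^{n}$. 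Granting those two lemmas, the present theorem follows in the few lines above.
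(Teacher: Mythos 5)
Your proposal is correct and follows exactly the paper's own argument: the theorem is obtained by combining Lemma~\ref{lemma:vertigo_quasi-states_to_states} (the trajectory enters $\mathcal{D}^{n}$ beyond some $t_0$), Lemma~\ref{lemma:vertigo_extreme_trajectories} (extremality in $\mathcal{A}^{n}_{+}$ is preserved along the trajectory), and the inclusions~(\ref{eq:extreme_quasi-states_that_are_states_are_extreme_bounded})--(\ref{eq:extreme_quasi-states_that_are_states_are_extreme}). Your additional remark on why the conclusion holds for all $t\geq t_0$ via the semigroup property is a correct, if redundant, observation already subsumed by Lemma~\ref{lemma:vertigo_quasi-states_to_states}.
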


Theorem~\ref{th:vertigoWPQStoWPS} is a straightforward consequence of Lemma~\ref{lemma:vertigo_extreme_trajectories}, Lemma~\ref{lemma:vertigo_quasi-states_to_states} and Eqs.~(\ref{eq:extreme_quasi-states_that_are_states_are_extreme_bounded},~\ref{eq:extreme_quasi-states_that_are_states_are_extreme}).

The strength of Theorem~\ref{th:vertigoWPQStoWPS} is that it provides a direct bridge between the characterization of extreme WPQS and that of extreme WPS. As discussed in Section~\ref{sec:extreme_wigpos_quasistates}, identifying extreme WPQS is a significantly more tractable problem than characterizing extreme WPS directly. In the next section, we exploit Theorem~\ref{th:vertigoWPQStoWPS} together with Theorem~\ref{th:extrAo+n} to obtain a constructive procedure that generates a large family of extreme Wigner-positive states.


\section{Application: generating extreme Wigner-positive states}
\label{sec:generating}

\begin{figure}
    \centering
    \begin{mybox}{Algorithm for generating extreme WPS}\label{box:algorithm}
        \justifying{
            \noindent\textbf{Extreme starting points}
            \begin{enumerate}
                \item[${\bm1.}$] Start from a quasi-state in $\mathrm{Extr}(\mathcal{A}^{n}_{\oplus})$.
                      This set is exactly characterized by Theorem~\ref{th:extrAo+n}.
                \item[${\color{orange}\bm2.}$] Apply $\mathcal{V}^{\mathrm{norm}}_{t}$ beyond the critical~$t_0$.
                      The resulting \textit{state} belongs to $\mathrm{Extr}(\mathcal{D}_{\oplus}^n$), as per Theorem~\ref{theorem:vertigo-extreme-quasi-states-to-extreme-states}.
            \end{enumerate}
            \textbf{Extreme orbits}
            \begin{itemize}
                \item[${\color{ForestGreen}\bm3.}$] Apply $\bar{D}^{\mathrm{norm}}_{\alpha}$ from Eq.~(\ref{eq:fock_bounded_dispalcement_operator}) for any $\alpha\in\mathbb{C}$.
                      This adds coherence to the state while keeping it in $\mathrm{Extr}(\mathcal{D}^{n}_{+})$.
                \item[${\color{Goldenrod}\bm4.}$] Apply any Gaussian unitary $\hat{G}$.
                      This brings the state outside $\mathcal{D}^{n}_{+}$ but keeps it within $\mathrm{Extr}(\mathcal{D}_{+})$.
            \end{itemize}
        }
    \end{mybox}
    \label{fig:placeholder}
\end{figure}

\begin{figure}[t]
    \centering
    \includegraphics[width=\linewidth]{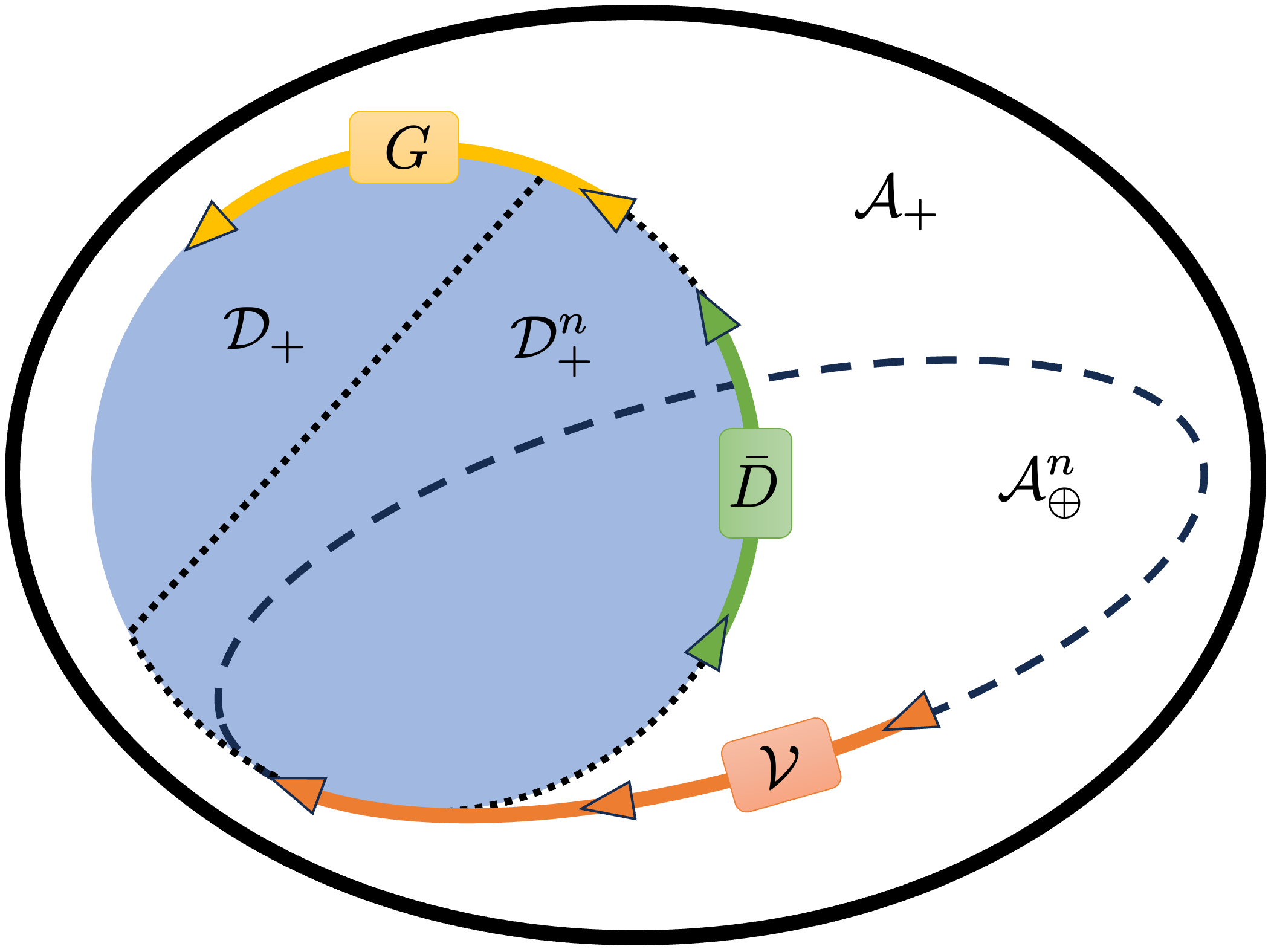}
    \caption{\small \textbf{Pictorial depiction of our method for generating extreme WPS.} We refer to Table~\ref{table:notations_convex_sets} for notations.
        The set of WPS $\mathcal{D}_{+}$ (blue-shaded disc) is contained in the set of WPQS $\mathcal{A}_{+}$ (outer solid-line black ellipse).
        The subset of Fock-bounded phase-invariant WPQS $\mathcal{A}^{n}_{\oplus}$ (inner dashed ellipse) intersects with the set of Fock-bounded WPS $\mathcal{D}^{n}_{+}$ (dotted curve).
        The Vertigo map $\mathcal{V}$ (orange arrows) maps extreme points of $\mathcal{A}^{n}_{\oplus}$ to extreme points of $\mathcal{D}^{n}_\oplus$.
        The Fock-bounded displacement-like operator $\bar{D}$ (green arrows) sends extreme states of $\mathcal{D}^{n}_{\oplus}=\mathcal{A}^{n}_{\oplus}\cap\mathcal{D}^{n}_{+}$ to extreme states of $\mathcal{D}^{n}_{+}$.
        Finally, Gaussian unitary operations $G$ (yellow arrows) send extreme states of $\mathcal{D}^{n}_{+}$ to extreme states of $\mathcal{D}_{+}$.
    }
    \label{fig:extreme_state_generation}
\end{figure}

In this section, we bring together the pieces we have constructed in the previous sections and develop a general method to generate extreme Wigner-positive states.
Our construction relies on our ability to identify classes of extreme WPS and extremality-preserving maps:

$(i)$ Using the properties of the Vertigo map, we produce a large class of extreme states in $\mathrm{Extr}(\mathcal{A}^{n}_{\oplus})\cap\mathcal{D}$.
We detail this step in subsection~\ref{subsec:starting_points}.

$(ii)$ Then, we find quantum maps with the property to preserve the extremality of Wigner-positive states.
These maps trace orbits of extreme Wigner-positive states, while relaxing phase-invariance and Fock-boundedness. This yields extreme Wigner-positive states in $\mathcal{D}_+$, going beyond $\mathcal{D}^{n}_{\oplus}$.
We detail this step in subsection~\ref{subsec:extreme_orbits}.

The algorithm implementing this procedure is summarized above, and illustrated in~\cref{fig:extreme_state_generation,fig:proof_structure}.

\begin{figure*}
    \centering
    \includegraphics[width=0.7\linewidth]{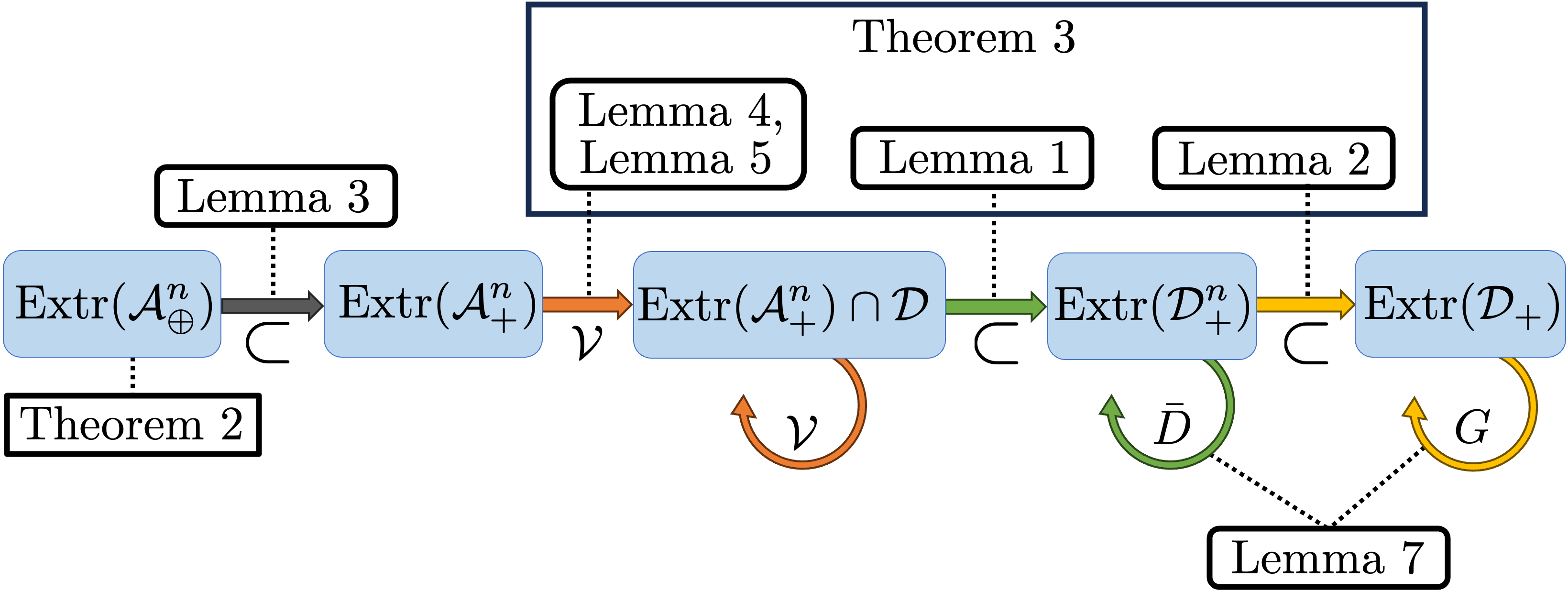}
    \caption{\small \textbf{Chain of results for constructing extreme WPS.}
    Any quasi-state $ \hat{A} \in \mathrm{Extr}(\mathcal{A}^{n}_{\oplus}) $, as characterized by Theorem~\ref{th:extrAo+n}, is also an extreme point of $ \mathcal{A}^{n}_{+} $ (Lemma~\ref{lemma:extreme-fock-diagonal-quasi-states-are-extreme}).
    Under Vertigo evolution, $\hat{A}$ enters the set $\mathcal{D}$ at some finite $t_0$ (Lemma~\ref{lemma:vertigo_quasi-states_to_states}), while remaining extreme in $\mathcal{A}^{n}_{+}$ (Lemma~\ref{lemma:vertigo_extreme_trajectories}).
    Then, for $t\geq t_0$, Lemma~\ref{lemma:extreme_superset} ensures that $ \mathcal{V}^{\mathrm{norm}}_{t}[\hat{A}]$ belongs to $\mathrm{Extr}(\mathcal{D}^{n}_{+})$, and Lemma~\ref{lemma:extremality_preserving_map_v2} guarantees that extremality is preserved under any $ \bar{D} $-orbit.
    Finally, Lemma~\ref{lemma:extreme_subset_projector} shows that the resulting state is also extreme in $ \mathcal{D}_{+} $, and remains so under any Gaussian orbit (Lemma~\ref{lemma:extremality_preserving_map_v2}).
    }
    \label{fig:proof_structure}
\end{figure*}

\subsection{Extreme starting points}
\label{subsec:starting_points}

The results of Section~\ref{sec:extreme_wigpos_quasistates} together with Theorem~\ref{theorem:vertigo-extreme-quasi-states-to-extreme-states} allow us to generate a large class of extreme states.
From Lemma~\ref{lemma:extreme-fock-diagonal-quasi-states-are-extreme}, we know that the extreme points of $\mathcal{A}_{\oplus}^{n}$, as identified from Theorem~\ref{th:extrAo+n}, are also extreme points of $\mathcal{A}^{n}_{+}$.
Then, from Theorem~\ref{theorem:vertigo-extreme-quasi-states-to-extreme-states}, we know that there exists a finite $t_{0}>0$ such that the Vertigo map brings them inside $\mathcal{D}$.
In mathematical terms, we write:
\begin{align}
    \forall\hat{A}\in\mathrm{Extr}(\mathcal{A}^{n}_{\oplus}),
    \quad
    \exists t_{0}>0:
    \nonumber
    \\[0.6em]
    \mathcal{V}^{\mathrm{norm}}_{t}
    \big[
        \hat{A}
        \big]
    \in\mathrm{Extr}(\mathcal{D}^{n}_{\oplus})
    \quad
    \forall t\geq t_0.
\end{align}

The construction presented here requires the use of the Vertigo map in order to ensure that the resulting operator is a valid quantum state.
However, in some cases we can ensure that the starting point is already a quantum state.
Consider the set of beam-splitter states of the form $\hat{\sigma}(m,n)$.
Their Wigner function is:
\begin{align}
    W_{\hat{\sigma}(m,n)}
    (\alpha)
    = &
    \frac{2}{\pi}
    \frac{m!}{n!}
    \left(
    2\abs{\alpha}^2
    \right)^{n-m}
    \exp(-2\abs{\alpha}^2)
    \nonumber
    \\&\times
    \left[
        L^{(n-m)}_{m}
        \left(2\abs{\alpha}^2\right)
        \right]^2,
    \label{eq:wigner_function_bs_state}
\end{align}
as can be computed from the interference formula~\cite{Van-Herstraeten2024-lc} (assuming $n\geq m$).
We see that the Wigner functions described by Eq.~\eqref{eq:wigner_function_bs_state} are associated to polynomials of degree $n+m$ in $\abs{\alpha}^2$, having all their zeros on the non-negative real axis.
Therefore, they meet the condition of Theorem~\ref{th:extrAo+n}, and we conclude that $\hat{\sigma}(m,n)\in\mathrm{Extr}(\mathcal{A}^{m+n}_{\oplus})$. This yields the following theorem:

\begin{theorem}
    The beam-splitter states $\hat{\sigma}(m,n)$ are extreme WPS $\forall m,n\in\mathbb{N}$:
    \begin{align}
        \big\lbrace
        \hat{\sigma}(m,n)
        \big\rbrace_{m,n\in\mathbb{N}}
         & \subset\mathrm{Extr}(\mathcal{D}_{+}).
    \end{align}
    \label{theorem:beam-splitter-states-are-extreme}
\end{theorem}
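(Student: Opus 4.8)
The plan is to recognize the Wigner function of $\hat{\sigma}(m,n)$ in \eqref{eq:wigner_function_bs_state} as an instance of the form characterized in item (ii) of \cref{th:extrAo+n}, conclude extremality in $\mathcal{A}^{m+n}_{\oplus}$, and then climb the inclusion chain $\mathrm{Extr}(\mathcal{A}^{N}_{\oplus})\to\mathrm{Extr}(\mathcal{A}^{N}_{+})\to\mathrm{Extr}(\mathcal{D}_{+})$ assembled in \cref{sec:preli,sec:extreme_wigpos_quasistates}. No use of the Vertigo map is required here, because $\hat{\sigma}(m,n)$ is already a bona fide density operator.

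Concretely, I would first fix $(m,n)$ and relabel so that $n\geq m$, which costs nothing. By \eqref{eq:wigner_function_bs_state} we then have $W_{\hat{\sigma}(m,n)}(\alpha)=P(\abs{\alpha}^2)\,\exp(-2\abs{\alpha}^2)$ with $P(t)=c\,t^{n-m}\bigl[L^{(n-m)}_{m}(2t)\bigr]^2$ and $c=\tfrac{2}{\pi}\tfrac{m!}{n!}2^{n-m}>0$. The key input is the classical fact that, for parameter $n-m\geq 0>-1$, the associated Laguerre polynomial $L^{(n-m)}_{m}$ has $m$ real, simple, strictly positive zeros and does not vanish at the origin ($L^{(n-m)}_{m}(0)=\binom{n}{m}\neq 0$). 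Hence $P(t)=c'\,t^{k}\prod_{i=1}^{\ell}(t-\lambda_i)^2$ with $c'>0$, $k=n-m$, $\ell=m$, all $\lambda_i>0$ (the $\lambda_i$ being half the zeros of $L^{(n-m)}_{m}$), and $k+2\ell=m+n=:N$ --- precisely condition (ii) of \cref{th:extrAo+n} with Fock bound $N$. Since $\hat{\sigma}(m,n)$ is moreover Hermitian, unit-trace, Wigner-non-negative, and phase-invariant (radial Wigner function), it lies in $\mathcal{A}^{N}_{\oplus}$, so the implication (ii)$\Rightarrow$(i) of \cref{th:extrAo+n} yields $\hat{\sigma}(m,n)\in\mathrm{Extr}(\mathcal{A}^{m+n}_{\oplus})$.

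It then remains to promote this to extremality in $\mathcal{D}_{+}$. Because the leading term of $t^{n-m}\bigl[L^{(n-m)}_{m}(2t)\bigr]^2$ does not vanish, $P$ has degree exactly $N=m+n$ in $t=\abs{\alpha}^2$; by the correspondence in \eqref{Wigner_is_poly_times_Gauss} (applied to Fock-diagonal operators, for which the degree of the radial Wigner polynomial equals the highest occupied Fock level) this means the $(m+n)$-photon component of $\hat{\sigma}(m,n)$ is nonzero, i.e. $\hat{\sigma}(m,n)\notin\mathcal{A}^{m+n-1}_{\oplus}$. \cref{lemma:extreme-fock-diagonal-quasi-states-are-extreme} then gives $\hat{\sigma}(m,n)\in\mathrm{Extr}(\mathcal{A}^{m+n}_{+})$. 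Finally, $\hat{\sigma}(m,n)$ is a genuine quantum state --- it is the partial trace of a unitary acting on $\ket{m}\otimes\ket{n}$ --- so $\hat{\sigma}(m,n)\in\mathrm{Extr}(\mathcal{A}^{m+n}_{+})\cap\mathcal{D}$, and \eqref{eq:extreme_quasi-states_that_are_states_are_extreme} delivers $\hat{\sigma}(m,n)\in\mathrm{Extr}(\mathcal{D}_{+})$. Letting $(m,n)$ range over $\mathbb{N}^2$ finishes the argument.

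I expect the only nontrivial ingredient to be the statement about the zeros of $L^{(n-m)}_{m}$: realness, simplicity, strict positivity and non-vanishing at $0$ are exactly what is needed to match the factorized form of \cref{th:extrAo+n}(ii), namely a root at the origin of some prescribed multiplicity $k$ together with all remaining roots strictly positive and of even order. This is a standard orthogonal-polynomial fact, but it is the load-bearing step; everything else is bookkeeping with the already-established inclusions, together with the harmless reduction to $n\geq m$, under which \eqref{eq:wigner_function_bs_state} is stated.
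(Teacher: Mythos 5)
Your proof is correct and follows essentially the same route as the paper's: match the radial polynomial of \eqref{eq:wigner_function_bs_state} to condition (ii) of \cref{th:extrAo+n}, then climb through \cref{lemma:extreme-fock-diagonal-quasi-states-are-extreme}, \cref{lemma:extreme_superset} and \cref{lemma:extreme_subset_projector} to land in $\mathrm{Extr}(\mathcal{D}_{+})$. The only difference is that you make explicit the facts the paper asserts without proof --- that the zeros of $L^{(n-m)}_{m}$ are real, simple and strictly positive with $L^{(n-m)}_{m}(0)=\binom{n}{m}\neq 0$, and that the top Fock component is nonzero so that $\hat{\sigma}(m,n)\notin\mathcal{A}^{m+n-1}_{\oplus}$ --- which is a welcome tightening of the argument.
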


Interestingly, Eq.~\eqref{eq:wigner_function_bs_state} also provides insight into the trajectories of beam-splitter states under the Vertigo map.
In particular, we observe that the Wigner function of $\hat{\sigma}(m,n)$ exhibits a total of $m+n$ zeros (counting multiplicities), including a zero of multiplicity $\abs{n-m}$ located at the origin.
This structure implies that the Vertigo trajectory $\mathcal{V}^{\mathrm{norm}}_t[\hat{\sigma}(m,n)]$ converges to $\hat{\sigma}(m+n,0)$ in the limit $t \to \infty$, and to $\hat{\sigma}(\abs{m-n},0)$ in the limit $t \to 0$.

Let us mention a last remarkable property of beam-splitter states.
As highlighted by Theorem~\ref{theorem:beam-splitter-states-are-extreme}, the states $\hat{\sigma}(m,n)$ are always extreme WPS.
As such, they extremize linear functionals over the set $\mathcal{D}_{+}^{m+n}$, as illustrated by the following lemma.

\begin{lemma}
    \label{lemma:bs_states_fidelity_fock_states}
    Beam-splitter states maximize the fidelity with Fock states among Wigner-positive states:
    \begin{align}
        \hat{\sigma}\big(\lceil n/2\rceil,\lfloor n/2\rfloor\big)
         & =
        \arg\max_{\hat{\rho} \in \mathcal{D}^{n}_{\oplus}}
        \bra{n} \hat{\rho} \ket{n}
    \end{align}
\end{lemma}

\noindent Note that the optimizations are over phase-invariant states without loss of generality, since Fock states are phase-invariant. This lemma is a consequence of a result from Ref.~\cite{Chabaud2021witnessing} and is proven in Appendix~\ref{app:beam-splitter-states}.

\subsection{Extreme orbits}

\label{subsec:extreme_orbits}

We have now presented two techniques to obtain extreme Wigner-positive states, via the Vertigo map or via beam-splitter states.
Here, we go one step further and present classes of channels that preserve the extremality of Wigner-positive states, and thus produce orbits of extreme Wigner-positive states.
The central ingredient we use to derive these maps is the following lemma.

\begin{lemma}[Extremality-preserving map]
    \label{lemma:extremality_preserving_map_v2}
    Let $\mathcal{M}$ be a linear, invertible (not necessarily trace-preserving) map.
    Assume:
    \begin{enumerate}
        \item $\mathcal{M},\mathcal{M}^{-1}$ preserve positive semi-definiteness,
        \item $\mathcal{M},\mathcal{M}^{-1}$ preserve Wigner positivity.
    \end{enumerate}
    Define $\mathcal{M}^{\mathrm{norm}}[\hat{\rho}]=\mathcal{M}[\hat{\rho}]/\mathrm{Tr}\big[\mathcal{M}[\hat{\rho}]\big]$.
    Then, $\mathcal{M}^{\mathrm{norm}}:\mathrm{Extr}(\mathcal{D}_{+})\rightarrow\mathrm{Extr}(\mathcal{D}_{+})$.
\end{lemma}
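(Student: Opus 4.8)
\textbf{Proof plan for Lemma~\ref{lemma:extremality_preserving_map_v2}.}

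The plan is to show that $\mathcal{M}^{\mathrm{norm}}$ is a well-defined bijection of $\mathcal{D}_+$ onto itself whose inverse is also of the same type, and then to argue that any bijection of a convex set onto itself that preserves convex combinations (in the appropriate projective sense) must map extreme points to extreme points. First I would check that $\mathcal{M}^{\mathrm{norm}}$ actually sends $\mathcal{D}_+$ into $\mathcal{D}_+$: given $\hat\rho\in\mathcal{D}_+$, assumption~1 gives $\mathcal{M}[\hat\rho]\succeq 0$, assumption~2 gives $W_{\mathcal{M}[\hat\rho]}\geq 0$, and since a non-negative Wigner function has strictly positive integral unless the operator is zero (and $\mathcal{M}$ is invertible, so $\mathcal{M}[\hat\rho]\neq 0$), we get $\mathrm{Tr}[\mathcal{M}[\hat\rho]]=\pi\int W_{\mathcal{M}[\hat\rho]}(\alpha)\,\mathrm d^2\alpha>0$; hence the normalization is legitimate and $\mathcal{M}^{\mathrm{norm}}[\hat\rho]\in\mathcal{D}_+$. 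The same argument applied to $\mathcal{M}^{-1}$ (which satisfies the identical hypotheses) shows $(\mathcal{M}^{-1})^{\mathrm{norm}}$ maps $\mathcal{D}_+$ into $\mathcal{D}_+$, and a short computation shows $(\mathcal{M}^{-1})^{\mathrm{norm}}$ and $\mathcal{M}^{\mathrm{norm}}$ are mutual inverses on $\mathcal{D}_+$ (normalization constants cancel), so $\mathcal{M}^{\mathrm{norm}}$ is a bijection of $\mathcal{D}_+$.

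Next I would establish the extremality transfer. Let $\hat\rho\in\mathrm{Extr}(\mathcal{D}_+)$ and suppose $\mathcal{M}^{\mathrm{norm}}[\hat\rho]=\tfrac12(\hat x+\hat y)$ with $\hat x,\hat y\in\mathcal{D}_+$. Apply $(\mathcal{M}^{-1})^{\mathrm{norm}}$: by linearity of $\mathcal{M}^{-1}$ one has $\mathcal{M}^{-1}[\mathcal{M}^{\mathrm{norm}}[\hat\rho]] = \tfrac{1}{\mathrm{Tr}[\mathcal{M}[\hat\rho]]}\hat\rho$ on one side, and $\tfrac12(\mathcal{M}^{-1}[\hat x]+\mathcal{M}^{-1}[\hat y])$ on the other; normalizing each term, $(\mathcal{M}^{-1})^{\mathrm{norm}}[\hat x]$ and $(\mathcal{M}^{-1})^{\mathrm{norm}}[\hat y]$ are elements of $\mathcal{D}_+$, and one checks that $\hat\rho$ is a convex combination of them — the weights are positive because they are ratios of the (positive) traces $\mathrm{Tr}[\mathcal{M}^{-1}[\hat x]]$, $\mathrm{Tr}[\mathcal{M}^{-1}[\hat y]]$, and they sum to one after using $\hat x+\hat y = 2\mathcal{M}^{\mathrm{norm}}[\hat\rho]$ and the linearity of the trace. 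Extremality of $\hat\rho$ then forces $(\mathcal{M}^{-1})^{\mathrm{norm}}[\hat x]=(\mathcal{M}^{-1})^{\mathrm{norm}}[\hat y]=\hat\rho$, and applying $\mathcal{M}^{\mathrm{norm}}$ back gives $\hat x=\hat y=\mathcal{M}^{\mathrm{norm}}[\hat\rho]$; hence $\mathcal{M}^{\mathrm{norm}}[\hat\rho]\in\mathrm{Extr}(\mathcal{D}_+)$.

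The step I expect to require the most care is the bookkeeping of the normalization constants when pulling back a midpoint: unlike in a genuinely affine setting, $\mathcal{M}^{\mathrm{norm}}$ is only \emph{projectively} linear, so $\tfrac12(\hat x+\hat y)$ does not map to $\tfrac12$ of the images; one has to track that $\mathcal{M}^{-1}[\hat x]$ and $\mathcal{M}^{-1}[\hat y]$ get re-weighted by $\mathrm{Tr}[\mathcal{M}^{-1}[\hat x]]$ and $\mathrm{Tr}[\mathcal{M}^{-1}[\hat y]]$ respectively, and verify these weights are strictly positive (again from Wigner-positivity plus invertibility, so $\hat x,\hat y\neq 0$) and sum to one. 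Once this is done carefully the argument is essentially a soft convex-geometry fact: an order-and-Wigner-positivity-preserving linear isomorphism induces a projective automorphism of the cone, and such automorphisms preserve the extreme-ray structure, hence the extreme points of the normalized base $\mathcal{D}_+$. I would also remark that the hypotheses are exactly what is needed for both $\bar D$ (Fock-bounded displacement-like operator) and Gaussian unitaries to qualify, which is why the lemma is stated in this abstract form.
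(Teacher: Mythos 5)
Your proposal is correct and follows essentially the same route as the paper's proof: pull a hypothetical decomposition $\mathcal{M}^{\mathrm{norm}}[\hat\rho]=\tfrac12(\hat x+\hat y)$ back through $\mathcal{M}^{-1}$, re-normalize the two terms, check that the resulting weights are positive (via Wigner-positivity of the traces) and sum to one, and invoke extremality of $\hat\rho$. Your treatment is in fact slightly more careful than the paper's, since you explicitly justify that $\mathrm{Tr}[\mathcal{M}[\hat\rho]]>0$ (so the normalization is well defined) and explicitly close the contradiction by mapping back with $\mathcal{M}^{\mathrm{norm}}$.
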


As a simple illustration of this lemma, consider Gaussian unitaries.
Channels of the form $\mathcal{M}[\hat{\rho}]=\hat{U}_{\mathrm{G}}\hat{\rho}\hat{U}^{\dagger}_{\mathrm{G}}$, where $\hat{U}_{\mathrm{G}}$ is a Gaussian unitary, satisfy the conditions of
Lemma~\ref{lemma:extremality_preserving_map_v2}.
Thus, if $\hat{\rho}\in\mathrm{Extr}(\mathcal{D}_{+})$, then $\hat{U}_{\mathrm{G}}\hat{\rho}\hat{U}^{\dagger}_{\mathrm{G}}\in\mathrm{Extr}(\mathcal{D}_{+})$.
This simple result is a consequence of the fact that affine symplectic transformations (displacement, rotation, squeezing) do not change the nature of the quantum state with respect to Wigner-positivity.

Following the same idea that lead to the construction of the Vertigo map, we introduce another type of transformation, namely a displacement of the polynomial part of the Wigner function.
We define the Fock-bounded displacement map $\bar{D}_{\beta}$ from its action on Wigner function as follows:
\begin{align}
    W_{\bar{D}_\beta[\hat{\rho}]}(\alpha)
    = & \
    W_{\hat{\rho}}(\alpha-\beta)\nonumber
    \\
      & \times\exp\big(
    2\abs{\alpha-\beta}^2-2\abs{\alpha}^2
    \big).
    \label{eq:def_fock-bounded_disp_wigner}
\end{align}
Remarkably, the action of $\bar{D}_{\beta}$ in operator space is simply described as:
\begin{align}
    \bar{D}_{\beta}\big[\hat{\rho}\big]
    =
    \exp\big(-\beta^{\ast}\hat{a}\big)\;
    \hat{\rho}\;
    \exp\big(-\beta\hat{a}^{\dagger}\big).
    \label{eq:fock_bounded_dispalcement_operator}
\end{align}
as we show in Appendix~\ref{apd:fock-bounded_disp}.
It is clear from Eq.~\eqref{eq:def_fock-bounded_disp_wigner} that $\bar{D}_{\beta}$ preserves Wigner positivity, and from Eq.~\eqref{eq:fock_bounded_dispalcement_operator} that it preserves positive semi-definiteness and the Fock support.
In addition, it is easily seen that $\bar{D}_{\beta}^{-1}=\bar{D}_{-\beta}$.
These observations imply that $\bar{D}_{\beta}$ satisfies the conditions of Lemma~\ref{lemma:extremality_preserving_map_v2}, so that, after defining $\bar{D}^{\mathrm{norm}}_{\beta}[\hat{\rho}]=\bar{D}_{\beta}[\hat{\rho}]/\mathrm{Tr}\big[\bar{D}_{\beta}[\hat{\rho}]\big]$, we have $\bar{D}^{\mathrm{norm}}_{\beta}:\mathrm{Extr}(\mathcal{D}^{n}_{+})\rightarrow\mathrm{Extr}(\mathcal{D}^{n}_{+})$.


\subsection{Low-dimensional extreme points}

As we show in this section, the tools we have developed so far are, in some low-dimensional cases, sufficient to generate the whole set of extreme WPS.

\begin{figure}
    \centering
    \makebox[\linewidth]{\includegraphics[width=1.1\linewidth]{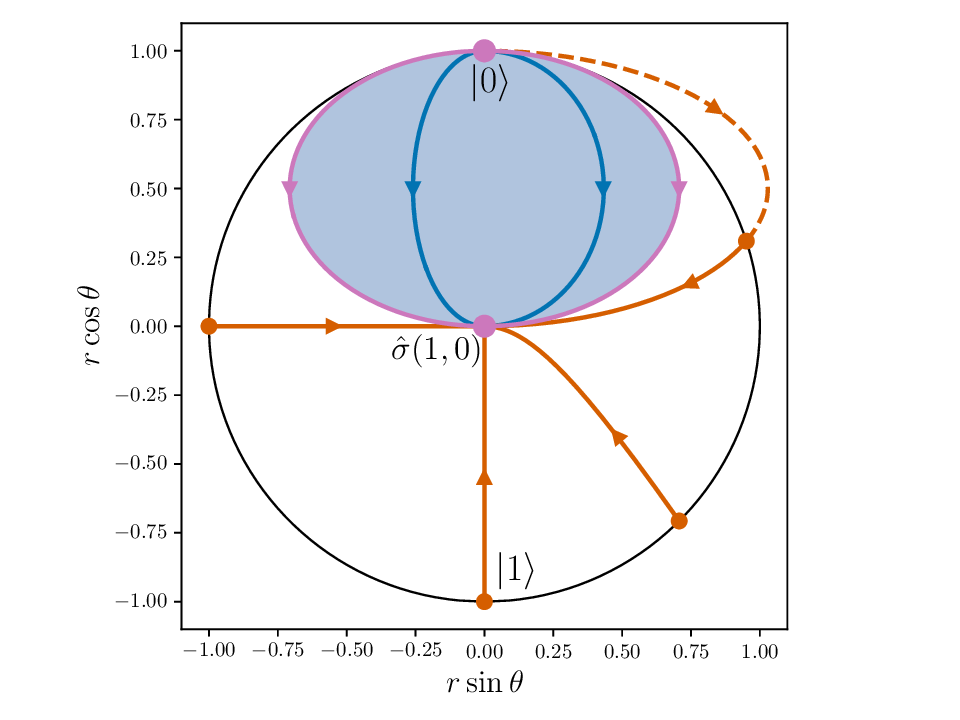}}
    \caption{\small
        Slice of the Bloch sphere representing $\mathcal{D}^{1}$, where the blue area denotes $\mathcal{D}^{1}_{+}$. Each point corresponds to a state $\hat{\rho}=r\,\ket{\psi(\theta)}\!\bra{\psi(\theta)}+(1-r)(\ket{0}\!\bra{0}+\ket{1}\!\bra{1})/2$, with $\ket{\psi(\theta)}=\cos(\theta/2)\ket{0}+\sin(\theta/2)\ket{1}$. The North and South poles correspond to $\ket{0}$ and $\ket{1}$, and the centre is the binomial state $\hat{\sigma}(1,0)$. Vertigo trajectories are shown as curves with arrows indicating their direction. Blue trajectories correspond to Wigner-positive states, orange trajectories to Wigner-negative \mbox{(quasi-)}states, and purple trajectories to the beam-splitter states forming the boundary of $\mathcal{D}^{1}_{+}$. Solid lines denote states; dashed lines denote quasi-states.
    }
    \label{fig:bloch_sphere}
\end{figure}

\begin{figure}
    \centering
    \makebox[\linewidth]{\includegraphics[width=1.1\linewidth]{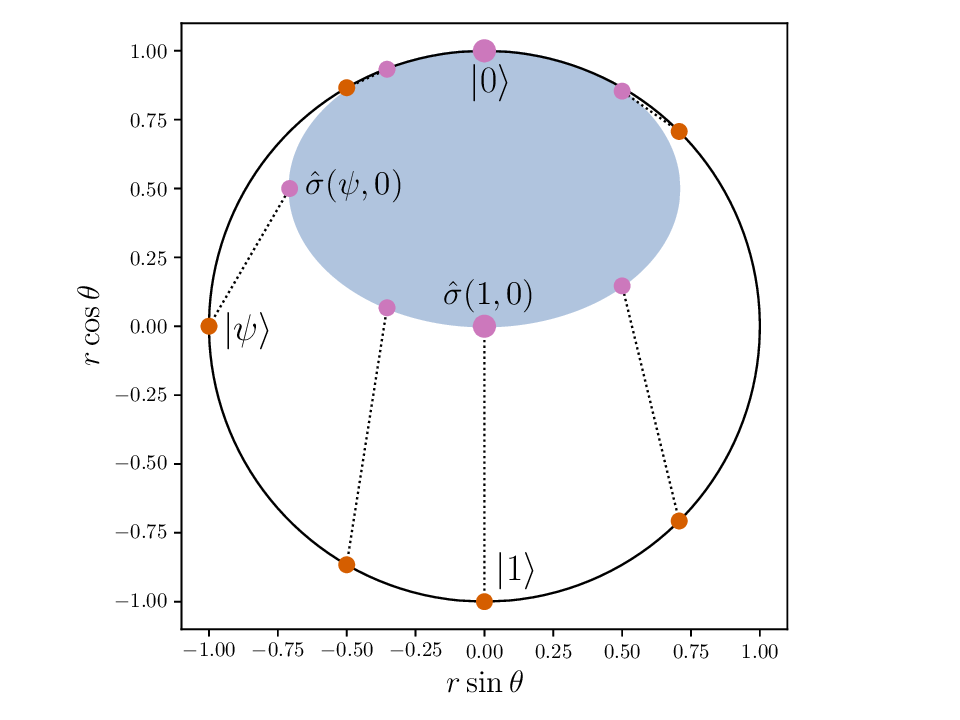}}
    \caption{\small
        Representation of $\mathcal{D}^{1}$ and $\mathcal{D}^{1}_{+}$ (blue area)
        over a slice of the Bloch sphere.
        North pole is $\ket{0}$, South pole is $\ket{1}$.
        Pure states $\ket{\psi_{i}}$ (orange points) are connected (dotted lines) to beam-splitter states $\hat{\sigma}(\psi_i,0)$ (purple points).
    }
    \label{fig:projviaBS}
\end{figure}

As a first example, we focus on the set of WPS $\mathcal{D}^1_{+}$.
In Fig.~\ref{fig:bloch_sphere}, we illustrate the effect of the Vertigo map on states in (a slice of) the Bloch sphere. Starting from a Wigner-positive neighborhood of the vacuum state, we observe that the Vertigo map generates all the extreme points of $\mathcal{D}^{1}_{+}$. Namely, defining $\ket{\psi_{\delta, \phi}}\coloneqq\sqrt{1-\delta}\ket{0}+e^{i\phi}\sqrt{\delta}\ket{1}$ for $\delta\ge0$ and $\phi\in[0,2\pi)$, any extreme point of $\mathcal{D}^{1}_{+}$ can be reached via the Vertigo map starting from some beam-splitter state $\hat\sigma(\psi_{\delta, \phi},0)$ with $\delta\le\varepsilon$, for any $\varepsilon>0$.

Another characterization of the set $\mathrm{Extr}(\mathcal{D}^{1}_{+})$ can be obtained. As can be seen from Fig.~\ref{fig:projviaBS}, each extreme point is of the form $\hat{\sigma}(\psi,0)$ where $\ket{\psi}\in\mathcal{H}^1$.
Now, recall that the Wigner function of $\hat{\sigma}(\psi,0)$ is equal to a rescaled version of the Husimi function of $\ket{\psi}$, and has thus exactly one zero (or no zero if $\ket{\psi}=\ket{0}$, in which case $\hat{\sigma}(\psi,0)=\ket{0}\!\bra{0}$).
In conclusion, the extreme points of $\mathcal{D}^{1}_{+}$ are in one-to-one correspondence with the location of their zero (and vacuum is the only extreme state with no zero).
Note that points in $\mathcal{D}^{1}_{+}$ whose Wigner function vanishes at a single point are necessarily extreme points.

\begin{figure*}[t]
    \centering
    \includegraphics[width=0.6\linewidth]{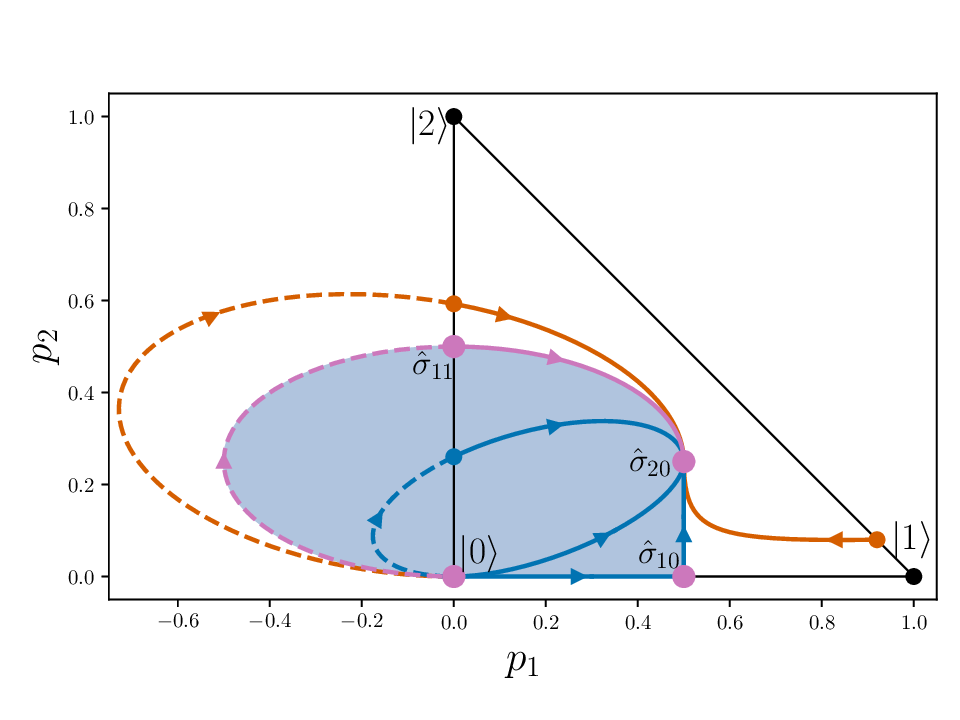}
    \caption{\small
    We consider the set of phase-invariant Fock-bounded quasi-states $\mathcal{A}^{2}_{\odot}$, parametrized as $\hat{A}=(1-p_{1}-p_{2})\ket{0}\!\bra{0}+p_{1}\ket{1}\!\bra{1}+p_{2}\ket{2}\!\bra{2}$. The blue-shaded region is the Wigner-positive subset $\mathcal{A}^{2}_{\oplus}$, and the interior of the triangle corresponds to the density operators $\mathcal{D}^{2}_{\odot}$. Purple dots denote the beam-splitter states $\hat{\sigma}_{mn}\vcentcolon=\hat{\sigma}(m,n)$. The binomial states $\hat{\sigma}_{20}$, $\hat{\sigma}_{10}$, and $\hat{\sigma}_{00}=\ket{0}\!\bra{0}$ are fixed points of $\mathcal{V}^{\mathrm{norm}}_{t}$.
    Vertigo trajectories are shown as curves with arrows indicating their direction: blue curves correspond to Wigner-positive (quasi-)states, orange curves to Wigner-negative (quasi-)states, and purple curves to beam-splitter states. Solid lines denote states; dashed lines denote quasi-states.
    Observe that the Vertigo trajectory of $\hat{\sigma}_{11}$ (purple curve) coincides with the boundary of $\mathcal{A}^{2}_{\oplus}$.
    }
    \label{fig:mixtures_012}
\end{figure*}

In addition, we provide still another characterization of the set $\mathrm{Extr}(\mathcal{D}^{1}_{+})$ as follows.
Start from the extreme point $\hat{\sigma}(1,0)$, whose Wigner function has a single zero at $\alpha=0$.
Then, displace it with the map $\bar{D}^{\mathrm{norm}}_{\beta}$ to bring its zero at $\alpha=\beta$.
This construction allows us to create all the extreme points of $\mathcal{D}^{1}_{+}$ that have a single zero.
Finally, vacuum can be obtained as the limiting case $\bar{D}^{\mathrm{norm}}_{\beta}[\hat{\sigma}(1,0)]$ for $\beta\to\infty$ (see Appendix~\ref{apd:fock-bounded_disp}).
We thus obtain three equivalent characterizations of the extreme points of $\mathcal{D}^{1}_{+}$:
\begin{align}
    \mathrm{Extr}\left(
    \mathcal{D}^{1}_{+}\right)
    = & \big\lbrace\mathcal V^\mathrm{norm}_t[\hat{\sigma}(\psi_{\delta, \phi},0)]\big\rbrace_{\substack{0\leq\delta\leq\varepsilon \\[0.2em]\phi\in[0,2\pi)\\[0.2em]t\geq 1}}\\
    = &
    \big\lbrace\hat{\sigma}(\psi,0)\big\rbrace_{\ket{\psi}\in\mathcal{H}^{1}}
    \label{eq:D1+_beam_splitter_states}
    \\[1em]
    = &
    \big\lbrace\bar{D}^{\mathrm{norm}}_{\beta}[\hat{\sigma}(1,0)]\big\rbrace_{\beta\in\bar{\mathbb{C}}},
\end{align}
where the first equation includes $t\to\infty$ and holds for any choice of $\varepsilon>0$, and where $\bar{\mathbb{C}}=\mathbb{C}\cup\lbrace\infty\rbrace$ is the extended complex plane.
The jump to the extended complex plane is needed in order to ensure that $\ket{0}\!\bra{0}$ is included.

The parametrization \eqref{eq:D1+_beam_splitter_states} of $\mathrm{Extr}(\mathcal{D}^{1}_{+})$ as the set of beam-splitter
states $\hat{\sigma}(\psi,0)=\mathcal{E}_{1/2}[\ket{\psi}\!\bra{\psi}]$ echoes a similar construction presented in our companion
paper~\cite{mathpaper}. There, we introduce a map $F$ sending the boundary of $\mathcal{D}$ onto the boundary of
$\mathcal{D}_{+}$ by classical mixing with the vacuum, and exhibiting behaviour analogous to that of $\mathcal{E}_{1/2}$. Although, in general,
$\mathcal{E}_{1/2}[\ket{\psi}\!\bra{\psi}] \neq F[\ket{\psi}\!\bra{\psi}]$, both maps send $\mathrm{Extr}(\mathcal{D}^{1})$ into
$\mathrm{Extr}(\mathcal{D}^{1}_{+})$.

As a second example, consider now the phase-invariant set $\mathcal{D}^{2}_{\oplus}$, illustrated on Fig.~\ref{fig:mixtures_012}. In agreement with the results of \cite{Van_Herstraeten2021-nj},
its extreme points are the 3 binomial beam-splitter states ($\hat{\sigma}(0,0)$, $\hat{\sigma}(1,0)$, $\hat{\sigma}(2,0)$) and the Vertigo trajectory of the beam-splitter state $\hat{\sigma}(1,1)$.
They can be parameterized as follows:
\begin{align}
    \mathrm{Extr}(\mathcal{D}^{2}_{\oplus})
    =
    \Big\lbrace
    \mathcal{V}^{\mathrm{norm}}_{t}
    \left[
        \hat{\sigma}(m,n)
        \right]
    \Big\rbrace_{\substack{m+n\leq 2 \\[0.2em]t\geq 1}}.
    \label{eq:description_D2oplus_vertigo}
\end{align}
Note that this parameterization is slightly redundant since $\hat{\sigma}(n,0)$ is invariant under the Vertigo map.
Furthermore, the the extreme points of $\mathcal{A}^{2}_{\oplus}$ can be parameterized in a very similar fashion as follows:
\begin{align}
    \mathrm{Extr}(\mathcal{A}^{2}_{\oplus})
    =
    \Big\lbrace
    \mathcal{V}^{\mathrm{norm}}_{t}
    \left[
        \hat{\sigma}(m,n)
        \right]
    \Big\rbrace_{\substack{m+n\leq 2 \\[0.2em]t> 0}},
    \label{eq:description_A2oplus_vertigo}
\end{align}
where the only difference with respect to Eq.~\eqref{eq:description_D2oplus_vertigo} is that the parameter $t$ can take any positive value.

From numerical investigations, we can further analyze the structure of the sets $\mathcal{D}^{3}_{\oplus}$ and $\mathcal{A}^{3}_{\oplus}$.
We observe a qualitative difference between the Vertigo trajectories of $\hat{\sigma}(1,1)$ and $\hat{\sigma}(2,1)$: whereas $\mathcal{V}_{t}[\hat{\sigma}(1,1)]$ exits the PSD cone as soon as $t<1$, the backwards-evolved state $\mathcal{V}^{\mathrm{norm}}_{t}[\hat{\sigma}(2,1)]$ remains physical for some $t<1$.
This difference stems from the fact that $\hat{\sigma}(2,1)$ is full-rank in $\mathcal{D}^3$, whereas $\hat{\sigma}(1,1)$ is not full-rank in $\mathcal{D}^{2}$.
As a result, no analog of Eq.~\eqref{eq:description_D2oplus_vertigo} appears to hold for $\mathrm{Extr}(\mathcal{D}^{3}_{\oplus})$.
In contrast, the case of $\mathrm{Extr}(\mathcal{A}^{3}_{\oplus})$ is more tractable, and we observe that it admits the following explicit characterization:
\begin{align}
    \mathrm{Extr}(\mathcal{A}^{3}_{\oplus})
    =
    \Big\lbrace
    \mathcal{V}^{\mathrm{norm}}_{t}
    \left[
        \hat{\sigma}(m,n)
        \right]
    \Big\rbrace_{\substack{m+n\leq 3 \\[0.2em]t> 0}},
    \label{eq:description_A3oplus_vertigo}
\end{align}
which incrementally generalizes
Eq.~\eqref{eq:description_A2oplus_vertigo}.

It may be tempting to extrapolate Eqs.~\eqref{eq:description_A2oplus_vertigo} and ~\eqref{eq:description_A3oplus_vertigo} to higher orders, but these are in fact specific features that can only occur when the Fock support is sufficiently low (i.e.\ $n\leq 3$).
Indeed, Theorem~\ref{th:extrAo+n} shows that quasi-states of $\mathrm{Extr}(\mathcal{A}^{n}_{\oplus})$ have Wigner functions with at most $\lfloor n/2\rfloor$ circles of zeros (zeros at the origin are not counted as circles).
For $n\leq 3$, such quasi-states have at most $1$ circle of zeros, so that they can all be connected by a single Vertigo trajectory, as the map $\mathcal{V}_{t}$ simply rescales the radius of the circle.
For $n\geq 4$, however, extreme quasi-states may exhibit two or more circles of zeros.
In this case, they can no longer be interconnected by a single Vertigo trajectory, since $\mathcal{V}_t$ preserves the ratio between the radii of the zeros.
For instance, $\hat{\sigma}(2,2)$ is the only beam-splitter state in $\mathrm{Extr}(\mathcal{A}^{4}_{\oplus})$ with two circles of zeros, and its Vertigo trajectory only connects those quasi-states of $\mathrm{Extr}(\mathcal{A}^{4}_{\oplus})$ that share the same ratio of radii as $\hat{\sigma}(2,2)$.

Nevertheless, the following inclusion relations hold true for any $k\in\mathbb{N}$:
\begin{align}
    \Big\lbrace
    \mathcal{V}^{\mathrm{norm}}_{t}
    \left[
        \hat{\sigma}(m,n)
        \right]
    \Big\rbrace_{\substack{m+n\leq k \\[0.2em]t\geq 1}}
     & \subseteq
    \mathrm{Extr}(\mathcal{D}^{k}_{\oplus})
    ,
    \\[0.8em]
    \Big\lbrace
    \mathcal{V}^{\mathrm{norm}}_{t}
    \left[
        \hat{\sigma}(m,n)
        \right]
    \Big\rbrace_{\substack{m+n\leq k \\[0.2em]t> 0}}
     & \subseteq
    \mathrm{Extr}(\mathcal{A}^{k}_{\oplus})
    .
\end{align}
Remember also that by Lemma~\ref{lemma:extreme-fock-diagonal-quasi-states-are-extreme}, we have $\mathrm{Extr}(\mathcal{A}^{k}_{\oplus})\setminus\mathcal{A}^{k-1}_{\oplus}\subset\mathrm{Extr}(\mathcal{A}^{k}_{+})$.

\section{Conclusion and open problems}
\label{sec:concl}

In this work, building upon the characterization of the topological and geometrical properties of the set of WPS uncovered in a companion paper~\cite{mathpaper}, we have introduced new tools for studying and generating extreme WPS.
Our main contributions are the introduction of the Vertigo map, which maps WPQS to WPS and generates trajectories of extreme WPS, together with a characterization of a large class of extreme WPQS based on polynomial analysis.

Our findings reveal that the family of beam-splitter states plays a particular role in the structure of the set of WPS: beamsplitter states obtained by mixing a Fock state with the vacuum are the attracting points of the Vertigo trajectories, while beamsplitter states obtained by mixing two Fock states are extreme WPS. This leads us to formulate the following open problem: for any two Fock-bounded pure states $\ket\psi,\ket\phi$ is the beamsplitter state $\hat\sigma(\psi,\phi)$ an extreme WPS? We note that the answer is negative when the Fock-bounded condition is removed, as can be seen by considering the beam-splitter state obtained from two squeezed vacuum states of orthogonal quadratures.

As an application of our results, we have obtained a general method for constructing extreme WPS, which yields a complete operational parametrization of the extreme WPS in small dimensions. However,this extreme WPS generation method only leads to Wigner functions featuring concentric ellipses of zeros, and there exists WPS with more complicated zero sets. An interesting avenue for future research is to develop more general parametrizations of extreme WPS valid in higher dimensions, for instance based on the set of zeros their Wigner function \cite{abreu2025inverse,mathpaper}. In particular, while our work highlights the challenges of obtaining a complete operational description of the infinite-dimensional WPS set, it also reveals the presence of a rich convex structure that can be exploited to understand this set.




\section*{Acknowledgements}

U.C., Z.V.H.\ and J.D.\ acknowledge funding from the European Union’s Horizon Europe Framework Programme (EIC Pathfinder Challenge project Veriqub) under Grant Agreement No.~101114899.
N.C.D.\ and J.N.P.\ were partially supported by the Fundação para a Ciência e Tecnologia (Portugal) via the research centre GFM, reference UID/00208/2025.
N.J.C.\ acknowledges support from the F.R.S.–FNRS under project
CHEQS within the Excellence of Science (EOS) program.


\bibliographystyle{ieeetr}
\bibliography{biblio}


\appendix
\onecolumn

\newpage

\section{Vertigo map in state space}
\label{apd:vertigo_in_state_space}

The Vertigo map is originally defined from its action in phase space over Wigner functions.
Here, we derive the corresponding representation in terms of quantum operators and quantum maps.

\subsection{Operator-sum representation}

The Wigner function of the Fock transition operators $\ket{m}\!\bra{n}$ is denoted as $W_{\ket{m}\bra{n}}(\alpha)$ and evaluates to:
\begin{align}
    W_{\ket{m}\bra{n}}(\alpha)
    = &
    \frac{2}{\pi}
    (-1)^{m}
    \sqrt{\frac{m!}{n!}}\
    (2\alpha)^{n-m}
    L^{(n-m)}_m\big(4\abs{\alpha}^2\big)
    \exp\big(-2\abs{\alpha}^2\big)
    \label{eq:wigner_funcion_fock_mn}
\end{align}
for $m\leq n$, otherwise use the relation $W_{\ket{n}\bra{m}}=W^{\ast}_{\ket{m}\bra{n}}$.
In order to derive the operator-sum representation of the Vertigo map, we compute the terms $\mathcal{V}_{t}\big[\ket{m}\!\bra{n}\big]$.
From Def.~\ref{eq:def_vertigo_map_wigner}, we compute the Wigner function of $\mathcal{V}_{t}\big[\ket{m}\!\bra{n}\big]$ to be the following:

\begin{align}
    W_{\mathcal{V}_{t}[\ket{m}\bra{n}]}(\alpha)
    =
    \frac{2}{\pi}
    (-1)^{m}
    \sqrt{\frac{m!}{n!}}\
    \big(2\sqrt{t}\alpha\big)^{n-m}
    L^{(n-m)}_m\big(4t\abs{\alpha}^2\big)
    \exp\big(-2\abs{\alpha}^2\big)
    \label{eq:wigner_vertigo_fock_element}
\end{align}

We then use the relation obtained from \cite[\href{https://dlmf.nist.gov/18.18.E12}{(18.18.12)}]{NIST:DLMF}:
\begin{align}
    \frac{L^{(\alpha)}_{n}(\lambda x)}{L^{(\alpha)}_{n}(0)}
    =
    \sum\limits_{k=0}^{n}
    \binom{n}{k}
    \lambda^{k}
    (1-\lambda)^{n-\lambda}
    \frac{L^{(\alpha)}_{k}(x)}{L^{(\alpha)}_{k}(0)}
\end{align}
together with $L^{(\alpha)}_{n}(0)=\binom{n+\alpha}{n}$ to compute:
\begin{align}
    L^{(\alpha)}_{n}(\lambda x)
    =
    (1-\lambda)^{n}
    \sum\limits_{k=0}^{n}
    \binom{n+\alpha}{n-k}\left(\frac{\lambda}{1-\lambda}\right)^{k}
    L^{(\alpha)}_{k}(x).
    \label{eq:rescaling_laguerre}
\end{align}

Injecting \eqref{eq:rescaling_laguerre} in \eqref{eq:wigner_vertigo_fock_element} and some rewriting yields the relation:
\begin{align}
    W_{\mathcal{V}_{t}[\ket{m}\bra{n}]}(\alpha)
    =
    \sqrt{t}^{n-m}
    (t-1)^{m}
    \sqrt{\frac{m!}{n!}}
    \sum\limits_{k=0}^{m}
    \binom{n}{m-k}
    \left(\frac{t}{t-1}\right)^k
    \sqrt{\frac{(n-m+k)!}{k!}}
    W_{\ket{k}\bra{n-m+k}}(\alpha)
\end{align}
which can equivalently be written in terms of quantum operators:
\begin{align}
    \mathcal{V}_{t}[\ket{m}\!\bra{n}]
     & =
    \sqrt{t}^{n-m}
    (t-1)^{m}
    \sqrt{\frac{m!}{n!}}
    \sum\limits_{k=0}^{m}
    \binom{n}{m-k}
    \left(\frac{t}{t-1}\right)^k
    \sqrt{\frac{(n-m+k)!}{k!}}
    \ket{k}\!\bra{n-m+k}
    \\
     & =
    \sqrt{t}^{n-m}
    (t-1)^{m}
    \sqrt{\frac{m!}{n!}}
    \sum\limits_{k=0}^{m}
    \binom{n}{k}
    \left(\frac{t}{t-1}\right)^{m-k}
    \sqrt{\frac{(n-k)!}{(m-k)!}}
    \ket{m-k}\!\bra{n-k}
\end{align}

Then, using the rule $\hat{a}^{\ell}\ket{n}=\sqrt{n!/(n-\ell)!}\ket{n-\ell}$, we can write after some simplification:
\begin{align}
    \mathcal{V}_{t}[\ket{m}\!\bra{n}]
     & =
    \sqrt{t}^{n-m}
    (t-1)^{m}
    \sum\limits_{k=0}^{m}
    \frac{1}{k!}
    \left(\frac{t}{t-1}\right)^{m-k}
    \hat{a}^{k}
    \ket{m}\!\bra{n}
    \hat{a}^{\dagger k}
    \\&=
    \sum\limits_{k=0}^{m}
    \frac{1}{k!}
    \left(\frac{t}{t-1}\right)^{-k}
    \hat{a}^{k}
    \sqrt{t}^{\, m}
    \ket{m}\!\bra{n}
    \sqrt{t}^{\, n}
    \hat{a}^{\dagger k}
    \\&=
    \sum\limits_{k=0}^{\infty}
    \frac{1}{k!}
    \left(\frac{t-1}{t}\right)^{k}
    \hat{a}^{k}
    \sqrt{t}^{\,\hat{n}}
    \ket{m}\!\bra{n}
    \sqrt{t}^{\,\hat{n}}
    \hat{a}^{\dagger k}
\end{align}

By linearity, we can extend the action of the Vertigo map to any operator as:
\begin{align}
    \mathcal{V}_{t}
    \big[\hat{A}\big]
     & =
    \sum\limits_{k=0}^{\infty}
    \frac{1}{k!}
    \left(\frac{t-1}{t}\right)^{k}
    \,
    \hat{a}^{k}
    \,
    \sqrt{t}^{\,\hat{n}}
    \,
    \hat{A}
    \,
    \sqrt{t}^{\,\hat{n}}
    \,
    \hat{a}^{\dagger k}
    \\
     & =
    \sum\limits_{k=0}^{\infty}
    \frac{(t-1)^k}{k!}
    \
    \sqrt{t}^{\,\hat{n}}
    \,
    \hat{a}^{k}
    \,
    \hat{A}
    \,
    \hat{a}^{\dagger k}
    \,
    \sqrt{t}^{\,\hat{n}}
    \label{eq:vertigo_operator-sum}
\end{align}
where we have used the relation $\hat{a}^{k}\sqrt{t}^{\,\hat{n}}=\sqrt{t}^{\,\hat{n}+k}\hat{a}^{k}$.

\subsection{As a combination of the PLC and the NLA}

Eq.~\eqref{eq:vertigo_operator-sum} looks very similar to the Kraus decomposition of the PLC, which we recall hereafter:
\begin{align}
    \mathcal{E}_{\eta}
    \big[
        \hat{A}
        \big]
     & =
    \sum\limits_{k=0}^{\infty}
    \frac{(1-\eta)^k}{k!}
    \sqrt{\eta}^{\,\hat{n}}
    \ \hat{a}^k
    \ \hat{\rho}\
    \hat{a}^{\dagger k}\
    \sqrt{\eta}^{\,\hat{n}}.
\end{align}

In fact, it is possible to express the action of the Vertigo map $\mathcal{V}_{t}$ as a combination of the PLC $\mathcal{E}_{\eta}$ and NLA $\mathcal{N}_{g}$ defined as $\mathcal{N}_{g}\big[\hat{A}\big]=\sqrt{g}^{\,\hat{n}}\,\hat{A}\,\sqrt{g}^{\,\hat{n}}$.
Let us first observe that the PLC and NLA obey the following commutation rules:
\begin{align}
    \mathcal{E}_{\eta}\circ\mathcal{N}_{g}
    =
    \mathcal{N}_{\frac{\eta g}{1-g+\eta g}}\circ\mathcal{E}_{1-g+\eta g}\ ,
    \qquad\qquad
    \mathcal{N}_{g}\circ\mathcal{E}_{\eta}
    =
    \mathcal{E}_{\frac{\eta g}{1-\eta+\eta g}}\circ\mathcal{N}_{1-\eta+\eta g}\ .
\end{align}
Using this, we are able to find various equivalent representations of the Vertigo map.
Let us mention a few of them:
\begin{align}
    \mathcal{V}_{t}
    \ \ =\ \
    \mathcal{N}_{t}
    \circ
    \mathcal{E}_{1/t}
    \circ
    \mathcal{N}_{t}
    \ \ =\ \
    \mathcal{E}_{2}\circ\mathcal{N}_{t}\circ\mathcal{E}_{1/2}
    \ \ =\ \
    \mathcal{E}_{\frac{t}{2t-1}}
    \circ
    \mathcal{N}_{2t-1}
    \ \ =\ \
    \mathcal{N}_{\frac{t}{2-t}}
    \circ
    \mathcal{E}_{2-t}
\end{align}

\section{Identifying extreme points}
\label{app:proofs}

\paragraph{Lemma~\ref{lemma:extreme_superset}.}
\textit{
    Let $\mathcal{C},\mathcal{C}'$ be convex sets such that $\mathcal{C}'\subseteq\mathcal{C}$.
    Then, $\mathrm{Extr}(\mathcal{C})\cap\mathcal{C}'\subseteq\mathrm{Extr}(\mathcal{C}')$.
}

\begin{proof}
    By contradiction.
    Assume $x\in\mathrm{Extr}(\mathcal{C})$ and $x\in\mathcal{C}'$.
    If $x\not\in\mathrm{Extr}(\mathcal{C}')$, then $\exists y_1,y_2\in\mathcal{C}'\setminus\lbrace x\rbrace$ such that $x=\frac12(y_1+y_2)$.
    Since $y_1,y_2\in\mathcal{C}'\subset\mathcal{C}$, this contradicts $x\in\mathrm{Extr}(\mathcal{C})$.
\end{proof}

\paragraph{Lemma~\ref{lemma:extreme_subset_projector}}.
\textit{
    Let $\mathcal{C}\subseteq\mathcal{D}$ be a convex subset of $\mathcal D$ and $\hat{P}$ a projector.
    Let $\mathcal{C}'=\hat{P}\mathcal{D}\hat{P}\cap\mathcal{C}$.
    Then, $\mathrm{Extr(\mathcal{C'})}\subseteq\mathrm{Extr}(\mathcal{C})$.
}
\begin{proof}
    By contradiction.
    Assume that $\hat{\rho}\in\mathrm{Extr}(\mathcal{C}')$ and that $\hat{\rho}\not\in\mathrm{Extr}(\mathcal{C})$.
    This implies that $\exists\hat{\sigma}_1,\hat{\sigma}_2\in\mathcal{C}\setminus\lbrace\hat{\rho}\rbrace$ such that $\hat{\rho}=\frac12(\hat{\sigma}_1+\hat{\sigma}_2)$.
    We then have $\mathrm{Tr}[\hat{P}\hat{\rho}]=\frac12\mathrm{Tr}[\hat{P}\hat{\sigma}_1]+\frac12\mathrm{Tr}[\hat{P}\hat{\sigma}_2]$.
    Since $\hat{\rho}\in\mathcal{C}'$ and $\hat{\sigma}_1,\hat{\sigma}_2\in\mathcal{D}$, this implies $\mathrm{Tr}[\hat{P}\hat{\rho}]=\mathrm{Tr}[\hat{P}\hat{\sigma}_1]=\mathrm{Tr}[\hat{P}\hat{\sigma}_2]=1$ so that $\hat{\sigma}_1,\hat{\sigma}_2\in\mathcal{C}'$, which contradicts $\hat{\rho}\in\mathrm{Extr}(\mathcal{C}')$.
\end{proof}

\section{Phase-invariant quasi-states}
\label{app:extrAo+n}

\paragraph{Lemma~\ref{lemma:extreme-fock-diagonal-quasi-states-are-extreme}.}
\textit{$\mathrm{Extr}(\mathcal{A}^{n}_{\oplus})\setminus\mathcal{A}_{\oplus}^{n-1}\subset\mathrm{Extr}(\mathcal{A}^{n}_{+})$.}

\begin{proof}
    By contradiction.
    Assume $\hat{a}\in\mathrm{Extr}(\mathcal{A}^{n}_{\oplus})\setminus\mathcal{A}^{n-1}_{\oplus}$ and $\hat{a}\not\in\mathrm{Extr}(\mathcal{A}^{n}_{+})$.
    Then $\exists\hat{b}_1,\hat{b}_2\in\mathcal{A}^{n}_{+}\setminus\lbrace{\hat{a}\rbrace}$ such that $\hat{a}=\smash{\frac12}(\hat{b}_1+\hat{b}_2)$.
    Using the dephasing channel $\mathcal{R}$, we find that $\hat{a}=\smash{\frac12\big(\mathcal{R}(\hat{b}_1)+\mathcal{R}(\hat{b}_2)\big)}$.
    Since $\hat{a}\in\mathrm{Extr}(\mathcal{A}^{n}_{\oplus})$, it must be that $\mathcal{R}(\hat{b}_1)=\mathcal{R}(\hat{b}_2)=\hat{a}$.
    Define the off-diagonal operators $\hat{b}^{\mathrm{ang}}_{1}=\hat{b}_1-\hat{a}$ and $\hat{b}^{\mathrm{ang}}_2=\hat{b}_2-\hat{a}$.
    Since $\hat{b}_1,\hat{b}_2$ are Wigner-non-negative, it must be that:
    \begin{align}
        -W_{\hat{b}^{\mathrm{ang}}_1}(\alpha)
        \leq
        W_{\hat{a}}(\alpha),
        \\
        -W_{\hat{b}^{\mathrm{ang}}_2}(\alpha)
        \leq
        W_{\hat{a}}(\alpha).
    \end{align}
    Moreover, since $\hat{a}=\frac12(\hat{b}_1+\hat{b}_2)$, we have that $\hat{b}^{\mathrm{ang}}_1=-\hat{b}^{\mathrm{ang}}_2$.
    Define $P_{\hat{x}}(\alpha)=W_{\hat{x}}(\alpha)\;\mathrm{exp}(2\abs{\alpha}^2)$ for $\hat{x}=\hat{a},\hat{b}_{1}^{\mathrm{ang}},\hat{b}_{2}^{\mathrm{ang}}$.
    It follows that:
    \begin{align}
        \abs{P_{\hat{b}^{\mathrm{ang}}_{1}}(re^{i\varphi})}
        =
        \abs{P_{\hat{b}^{\mathrm{ang}}_{2}}(re^{i\varphi})}
        \leq
        \abs{P_{\hat{a}}(r)}
    \end{align}
    Consider the above expressions as polynomials in the variable $r$.
    From Theorem~\ref{th:extrAo+n}, all the zeros of $P_{\hat{a}}$ are real.
    Since $\hat{a}\in\mathcal{A}^{n}_{\oplus}\setminus\mathcal{A}^{n-1}_{\oplus}$, it must be that $\deg(P_{\hat{b}^{\mathrm{ang}}_1})=\deg(P_{\hat{b}^{\mathrm{ang}}_2})<\deg(P_{\hat{a}})=2n$.
    As a consequence from Lemma~\ref{lemma:poly_with_same_roots}, both $\smash{P_{\hat{b}^{\mathrm{ang}}_1}}$ and $\smash{P_{\hat{b}^{\mathrm{ang}}_1}}$ must be zero.
    We conclude that $\hat{b}_1=\hat{b}_2=\hat{a}$, which contradicts the original assumption.
\end{proof}

\begin{lemma}
    \label{lemma:poly_with_same_roots}
    Let $P,Q\in\mathbb{R}[X]$ be polynomials.
    Let $P$ have roots $\lbrace\lambda_k\rbrace$ such that
    $\lambda_k\in\mathbb{R}\ \forall k$ (resp. $\lambda_k\in\mathbb{R}_{\geq 0}\ \forall k$).
    If $\exists c>0$ such that $\abs{Q(t)}\leq c\abs{P(t)}$ $\forall t\in\mathbb{R}$ (resp. $\forall t\in\mathbb{R}_{\geq 0}$), then $Q\;\propto\; P$.
    Moreover, if $\deg(Q)<\deg(P)$, then $Q(t)=0$.
\end{lemma}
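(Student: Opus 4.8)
The plan is to convert the pointwise inequality $|Q(t)|\le c\,|P(t)|$ into the polynomial divisibility statement $P\mid Q$, and then to eliminate any nonconstant quotient by a growth argument. Throughout I may assume $P\not\equiv 0$, since $P\equiv 0$ forces $Q\equiv 0$ trivially.

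\emph{Step 1 (divisibility, root by root).} Fix a root $\lambda$ of $P$ of multiplicity $m$ and write $P(X)=(X-\lambda)^{m}g(X)$ with $g(\lambda)\ne 0$. Since $\lambda$ is real (and $\lambda\ge 0$ in the ``resp.'' case), there is a bounded neighbourhood $I$ of $\lambda$ inside the domain on which the hypothesis holds—two-sided in general, one-sided of the form $[0,\varepsilon)$ only in the degenerate subcase $\lambda=0$—on which $|g|$ is bounded, so that $|Q(t)|\le c\,|P(t)|=c\,|t-\lambda|^{m}|g(t)|\le C\,|t-\lambda|^{m}$ on $I$ for a suitable constant $C$. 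Writing $Q(X)=(X-\lambda)^{j}h(X)$ with $h(\lambda)\ne 0$ (and $j=0$ if $\lambda$ is not a root of $Q$), this estimate forces $j\ge m$: otherwise dividing by $|t-\lambda|^{j}$ and letting $t\to\lambda$ within $I$ gives $|h(\lambda)|=\lim_{t\to\lambda}|h(t)|\le 0$, contradicting $h(\lambda)\ne 0$. Hence $(X-\lambda)^{m}\mid Q$. Because all roots of $P$ are real (resp.\ nonnegative real), $P$ splits over $\mathbb{R}$ as $P=a\prod_{k}(X-\lambda_{k})^{m_{k}}$ over its distinct roots, and the factors $(X-\lambda_{k})^{m_{k}}$ are pairwise coprime; applying the above to each $\lambda_{k}$ yields $P\mid Q$, i.e.\ $Q=PR$ for some $R\in\mathbb{R}[X]$.

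\emph{Step 2 (the quotient is constant, and the refinement).} On the set where $P(t)\ne 0$—which is cofinite in the domain and therefore unbounded—dividing $|Q(t)|\le c\,|P(t)|$ by $|P(t)|$ gives $|R(t)|\le c$. A real polynomial that is bounded on an unbounded subset of $\mathbb{R}$ must be constant, so $R\equiv r_{0}$ and $Q=r_{0}P$, which is the asserted proportionality. If in addition $\deg Q<\deg P$, then $r_{0}=0$ (otherwise $\deg Q=\deg P$), hence $Q\equiv 0$.

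I do not expect a genuine obstacle here; the point most in need of care is the one-sided neighbourhood when $\lambda=0$ in the ``resp.'' statement, but the conclusion of Step 1 still holds there, since for $Q(t)=\sum_{j}a_{j}t^{j}$ the successive one-sided limits $\lim_{t\to 0^{+}}Q(t)/t^{j}$ force $a_{0}=\dots=a_{m-1}=0$. The remainder is bookkeeping of multiplicities in Step 1 and the elementary fact that a polynomial diverging at infinity cannot be bounded on an unbounded set.
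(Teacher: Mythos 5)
Your proof is correct and follows essentially the same route as the paper's: you establish that each root of $P$ is a root of $Q$ with at least the same multiplicity via the local vanishing order (exploiting that all roots of $P$ lie in the domain where the inequality holds), and then control the quotient using the behaviour at infinity. The only cosmetic difference is that you phrase the final step as boundedness of the quotient $R=Q/P$ rather than the degree comparison $\deg Q\le\deg P$ used in the paper, and you are slightly more careful about the one-sided limit at $\lambda=0$ in the $\mathbb{R}_{\geq 0}$ case.
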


\begin{proof}
    Let $\mu_k$ be the multiplicity associated to each root $\lambda_k$.
    Observe that $\forall k$ we may write:
    \begin{align}
        \lim_{t\rightarrow\lambda_k} \frac{Q(t)}{(t-\lambda_k)^{\mu_k-1}}=0
    \end{align}
    This implies that the roots of $P$ are also roots of $Q$, with same multiplicity.
    Moreover, in the limit $t\rightarrow\infty$, $\abs{Q(t)}\leq c\abs{P(t)}$ implies that $\deg(Q)\leq\deg(P)$, so that $Q$ has at most the same number of roots (with multiplicity) as $P$.
    Thus, if $\deg(Q)=\deg(P)$, then $P\;\propto\;Q$; if $\deg(Q)<\deg(P)$, then $Q(t)=0$.
\end{proof}

\begin{lemma}[Extreme non-negative polynomials over $\mathbb{R}_{\geq 0}$]
    \label{lemma:extreme_poly_R+}
    Let $\mathcal{P}$ be the set of polynomials $P\in\mathbb{R}[X]$ such that $\int_{0}^{\infty} P(t)\exp(-2t)=1$.
    Let $\mathcal{P}_{+}\subset\mathcal{P}$ be the subset of non-negative polynomials $P(t)\geq 0\ \forall t\in\mathbb{R}_{\geq 0}$.
    Then the following statements are equivalent:
    \begin{itemize}
        \item[(i)]
              $P\in\mathrm{Extr}(\mathcal{P}_+)$
        \item[(ii)]
              $P(t)=c\ t^{k}\;\prod_{i=1}^{\ell}(t-\lambda_i)^{2}$, where $k,\ell\in\mathbb{N}$, $\lambda_i>0\;\forall i$ and $c>0$ is a normalization constant
    \end{itemize}
\end{lemma}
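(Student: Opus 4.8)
The plan is to prove the two implications (ii)$\Rightarrow$(i) and (i)$\Rightarrow$(ii) separately; the first is a quick consequence of Lemma~\ref{lemma:poly_with_same_roots}, and the second I would establish by contraposition through an elementary factorization analysis of univariate polynomials non-negative on $\mathbb{R}_{\geq 0}$. For (ii)$\Rightarrow$(i), suppose $P(t)=c\,t^{k}\prod_{i=1}^{\ell}(t-\lambda_{i})^{2}$ with $\lambda_{i}>0$ and $c>0$ fixed by normalization, and suppose $P=\tfrac12(P_{1}+P_{2})$ with $P_{1},P_{2}\in\mathcal{P}_{+}$. Then $P_{2}\geq 0$ on $\mathbb{R}_{\geq 0}$ forces $0\leq P_{1}\leq 2P$ there, hence $\lvert P_{1}(t)\rvert\leq 2\lvert P(t)\rvert$ for all $t\in\mathbb{R}_{\geq 0}$. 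Since all roots of $P$ lie in $\mathbb{R}_{\geq 0}$, Lemma~\ref{lemma:poly_with_same_roots} gives $P_{1}\propto P$, and the shared normalization $\int_{0}^{\infty}P_{1}\,e^{-2t}\,\mathrm{d}t=\int_{0}^{\infty}P\,e^{-2t}\,\mathrm{d}t=1$ then forces $P_{1}=P$; likewise $P_{2}=P$. Hence $P$ is extreme.

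For (i)$\Rightarrow$(ii), I would argue the contrapositive: given $P\in\mathcal{P}_{+}$ not of the stated form, exhibit a nontrivial convex decomposition. First note that $P$ has strictly positive leading coefficient (being non-negative and non-zero on $\mathbb{R}_{\geq 0}$), and factor $P(t)=c\,t^{k}Q(t)$ with $c>0$, $k\in\mathbb{N}$, and $Q(0)\neq 0$ (so $Q(0)>0$). Because $Q\geq 0$ on $\mathbb{R}_{>0}$, every strictly positive real root of $Q$ has even multiplicity; therefore, if $Q$ had only positive real roots it would equal a positive multiple of $\big[\prod_{i}(t-\lambda_{i})\big]^{2}$, making $P$ already of the form (ii). Hence $Q$ must have either a negative real root, giving a factor $t+\beta$ with $\beta>0$, or a non-real root, giving an irreducible quadratic factor $g(t)=(t-b)^{2}+a^{2}$ with $a\neq 0$. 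In either case this factor $f$ is strictly positive on $\mathbb{R}_{\geq 0}$ and divides $P$, so $R:=P/f$ is a polynomial still non-negative on $\mathbb{R}_{\geq 0}$. Perturbing $f$ while preserving its positivity on $\mathbb{R}_{\geq 0}$ — replacing $f$ by $f_{\pm}=(t+\beta)\pm\delta$ in the linear case, or $f_{\pm}=g\pm\delta$ in the quadratic case, with $\delta>0$ small — gives $f_{+}+f_{-}=2f$, both $f_{\pm}R\geq 0$ on $\mathbb{R}_{\geq 0}$, and $f_{+}R\not\propto f_{-}R$. Writing $P=\tfrac12 f_{+}R+\tfrac12 f_{-}R$ and renormalizing the two summands yields $P=\mu\tilde{P}_{1}+(1-\mu)\tilde{P}_{2}$ with $\tilde{P}_{1},\tilde{P}_{2}\in\mathcal{P}_{+}$, $\tilde{P}_{1}\neq\tilde{P}_{2}$, and $\mu\in(0,1)$, so $P$ is not extreme.

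The hard part will be the bookkeeping in the (i)$\Rightarrow$(ii) step rather than any deep idea: pinning down the trichotomy of $Q$'s roots so that the only obstructions to the form (ii) are a negative real root or a complex root, checking that $f_{\pm}$ stays non-negative on $\mathbb{R}_{\geq 0}$ for small $\delta$, and — since passing to normalized representatives is only affine — verifying that the weights genuinely lie in $(0,1)$. The last point holds because both summands integrate against $e^{-2t}$ to finite positive values $Z_{\pm}$ with $\tfrac12 Z_{+}+\tfrac12 Z_{-}=1$, so the weights $\tfrac12 Z_{\pm}$ sum to one and are strictly between $0$ and $1$, while $\tilde{P}_{1}\neq\tilde{P}_{2}$ follows from $f_{+}\not\propto f_{-}$.
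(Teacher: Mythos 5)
Your proof is correct. The direction (ii)$\Rightarrow$(i) is essentially identical to the paper's: bound $0\leq P_1\leq 2P$ on $\mathbb{R}_{\geq 0}$, invoke Lemma~\ref{lemma:poly_with_same_roots} using the fact that all roots of $P$ are non-negative reals, and conclude $P_1=P_2=P$ from normalization.

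For (i)$\Rightarrow$(ii) you take a genuinely different route. The paper argues globally: letting $\mathcal{R}$ denote the polynomials of form (ii), it shows constructively that \emph{all} of $\mathcal{P}_+$ is the convex hull of $\mathcal{R}$, by exhibiting explicit mixing identities that adjoin a negative root (mix $P$ with $tP$) or a conjugate pair of complex roots (mix $P$ with $(t-\re z)^2P$), iterated over the obstructing roots; $\mathrm{Extr}(\mathcal{P}_+)\subseteq\mathcal{R}$ then follows. You instead argue the contrapositive locally: any $P\notin\mathcal{R}$ must contain a factor $f$ (either $t+\beta$ with $\beta>0$ or an irreducible quadratic) that is bounded away from zero on $\mathbb{R}_{\geq 0}$, and perturbing $f\mapsto f\pm\delta$ for small $\delta$ splits $P$ into a nontrivial convex combination within $\mathcal{P}_+$. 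Both arguments rest on the same structural trichotomy of the roots, and your bookkeeping (positivity of $f_\pm R$ for small $\delta$, strict positivity of the normalizations $Z_\pm$ summing to $2$, and $f_+R\not\propto f_-R$ because $f$ is non-constant) all checks out. The paper's version buys the slightly stronger statement $\mathcal{P}_+=\mathrm{Conv}(\mathcal{R})$, which is what feeds into the convex-mixture representation in Theorem~\ref{th:extrAo+n}(iii); your version is more self-contained and directly certifies non-extremality without tracking an induction over all obstructing roots. Either suffices for the equivalence as stated.
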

\begin{proof}
    $(i)\Rightarrow(ii)$.
    Let $\mathcal{R}$ be the set of polynomials of the form of $(ii)$.
    We will show that the whole set $\mathcal{P}_{+}$ can be obtained by convex mixing elements of $\mathcal{R}$.
    A normalized polynomial $Q$ belongs to $\mathcal{P}_{+}$ provided its strictly positive roots come with even multiplicity, and its complex roots come in conjugated pairs.
    Observe that polynomials of $\mathcal{R}$ only have non-negative roots, whereas $Q$ may have in addition negative and complex roots.
    To create a negative root $\mu<0$, mix $P(x)$ with $xP(x)$ as follows:
    \begin{align}
        P_{\mathrm{mix}}(x)
         & =
        (1-p)P(x)+p\;xP(x)                    \\
         & =p\left(x+\frac{1-p}{p}\right)P(x)
        \\&\;\propto(x-\mu)P(x)
    \end{align}
    where $p$ is set to $p=1/(1-\mu)$ so that $\mu=(p-1)/p$ (since $\mu$ is negative, we have $0<p<1$).
    Then, to create a pair of conjugated complex roots $\lbrace z,z^{\ast}\rbrace$, mix $P(x)$ with $(x-\mathrm{Re}z)^2 P(x)$ as follows:
    \begin{align}
        P_{\mathrm{mix}}(x)
         & =
        (1-p)P(x)+p(x-\mathrm{Re}z)^2P(x)
        \\&=
        p\left((x-\mathrm{Re}z)^2+\frac{1-p}{p}\right)P(x)
        \\&=
        p\Big((x-\mathrm{Re}z)^2+(\mathrm{Im}z)^2\Big)P(x)
        \\&\;\propto\;
        (x-z)(x-z^{\ast})P(x)
    \end{align}
    where we have set $p=1/(1+(\mathrm{Im}z)^2)$ so that $(1-p)/p=(\mathrm{Im}z)^2$.
    Using iterative repetitions of this two methods, any number of negative or complex roots can be created from mixtures of polynomial of $\mathcal{R}$, showing that $\mathcal{P}$ belongs to the convex hull of $\mathcal{R}$.
    Since $\mathcal{R}\subset\mathcal{P}_{+}$, this implies that $\mathrm{Extr}(\mathcal{P}_{+})\subseteq\mathcal{R}$.
    \newline$(ii)\Rightarrow(i)$.
    We now show by contradiction that each polynomial of $\mathcal{R}$ is an extreme polynomial of $\mathcal{P}_{+}$.
    Let $R\in\mathcal{R}$ and assume that $\exists Q_1,Q_2\in\mathcal{P}_{+}\setminus\lbrace R\rbrace$ such that $R=\frac12 (Q_{1}+Q_{2})$.
    Since $Q_2(t)\geq 0\ \forall t\in\mathbb{R}_{+}$, we have that $R(t)\geq \frac12 Q_1(t)\ \forall t\geq\mathbb{R}_{+}$.
    From Lemma~\ref{lemma:poly_with_same_roots}, this implies that $Q_1\propto R$; since both are normalized we have $Q_1=R$, which contradicts the original assumption.
    We have thus shown that $\mathcal{R}\subseteq\mathrm{Extr}(\mathcal{P}_{+})$.
    Since it also holds that $\mathrm{Extr}(\mathcal{P}_{+})\subseteq\mathcal{R}$, we conclude that $\mathrm{Extr}(\mathcal{P}_{+})=\mathcal{R}$.
\end{proof}

Note the inclusion chain $\mathrm{Extr}(\mathcal{A}^{n}_{\oplus})\subset\mathrm{Extr}(\mathcal{A}^{n+1}_{\oplus})$, whereas $\mathrm{Extr}(\mathcal{A}^{n}_{+})\not\subset\mathrm{Extr}(\mathcal{A}^{n+1}_+)$.
As an example, let us show that although $\hat{\sigma}(1,0)\in\mathrm{Extr}(\mathcal{A}^{1}_{+})$ (cf. Eq.~\eqref{eq:wigner_function_bs_state} and Theorem~\ref{th:extrAo+n}), it is not an extreme point of $\mathcal{A}^{2}_{+}$, i.e. $\hat{\sigma}(1,0)\not\in\mathrm{Extr}(\mathcal{A}^{2}_{+})$.
Indeed, let us consider the set of polyanalytic polynomials $\bm{P}_{z}$ with $z\in\mathbb{C}$ of this form:
\begin{align}
    \bm{P}_{z}
    =
    \begin{pmatrix}
        0                 & 0 & \tfrac12 z \\
        0                 & 1 & 0          \\
        \tfrac12 z^{\ast} & 0 & 0
    \end{pmatrix}.
\end{align}
We compute $\bm{\alpha}^{\dagger}\bm{P}_{z}\bm{\alpha}=\abs{\alpha}^{2}+\mathrm{Re}(z\alpha^{2})$, which is non-negative $\forall\alpha\in\mathbb{C}$ as soon as $\abs{z}\leq 1$.
Thus, for each $z\in\mathbb{C}$ such that $\abs{z}\leq 1$, the monomial matrix $\bm{P}_{z}$ is associated with a Wigner-positive quasi-state of $\mathcal{A}^{2}_{+}$.
Now, observe that:
\begin{align}
    \frac{1}{2\pi}
    \int_{0}^{2\pi}
    \bm{P}_{e^{i\theta}}\
    \mathrm{d}\theta
    =
    \begin{pmatrix}
        0 & 0 & 0 \\
        0 & 1 & 0 \\
        0 & 0 & 0
    \end{pmatrix}
    =
    \bm{P}_{\hat{\sigma}(1,0)}
\end{align}
where $\bm{P}_{\hat{\sigma}(1,0)}$ is precisely the monomial matrix of $\hat{\sigma}(1,0)$.
Since $\hat{\sigma}(1,0)$ can be obtained as a convex mixture of quasi-states of $\mathcal{A}^{2}_{+}$, $\hat{\sigma}(1,0)$ is not an extreme point of $\mathcal{A}^{2}_{+}$.

\paragraph{Theorem~\ref{th:extrAo+n}}
(Extreme points of $\mathcal{A}^{n}_{\oplus}$)
The following propositions are equivalent:
\begin{itemize}
    \item[$(i)$]\
          $\hat{A}\in\mathrm{Extr}(\mathcal{A}^{n}_{\oplus})$.
    \item[$(ii)$]\ $W_{\hat{A}}\big(\alpha\big)=P\big(\abs{\alpha}^2\big)\;\mathrm{exp}(-2\abs{\alpha}^2)$ where $P$ is a polynomial $P(t)=c\ t^{k}\prod_{i=1}^{\ell}(t-\lambda_i)^{2}$ such that $\lambda_i>0\ \forall i$, $k+2\ell\leq n$ and $c$ is a positive normalization constant.
    \item[$(iii)$]\
          $\hat{A}=(-1)^{m}\sum_{j=0}^{m}(-1)^{j}j!\, e_{m-j}(\bm{\mu})\,\hat{\sigma}(j,0)$, where the vector $\bm{\mu}$ has $m\leq n$ non-negative components, such that strictly positive components come by pairs.
\end{itemize}
\begin{proof}
    $(i)\Leftrightarrow(ii)$.
    The set $\mathrm{Extr}(\mathcal{A}^{n}_{\oplus})$ can be identified as the set of quasi-states $\hat{A}$ with Wigner function $W_{\hat{A}}(\alpha)=P(\abs{\alpha}^2)\mathrm{exp}(-2\abs{\alpha}^2)$ where $P(t)$ is an extreme polynomial over $\mathbb{R}_{\geq 0}$.
    Such extreme polynomials are characterized by Lemma~\ref{lemma:extreme_poly_R+}, which give us the expression $(ii)$.\newline
    $(ii)\Leftrightarrow(iii)$.
    Define the vector $\bm{\mu}\in\mathbb{R}_{\geq 0}^{k+2\ell}$ as follows:
    \begin{align}
        \bm{\mu}
        =
        \big(
        \underbrace{0,...,0}_{k\text{ entries}},
        \underbrace{\lambda_1,\lambda_1,\lambda_2,\lambda_2,...,\lambda_\ell,\lambda_{\ell}}_{2\ell\text{ entries}}
        \big)
    \end{align}
    The order of the components of $\bm{\lambda}$ does not matter, the only constraints is that all the entries are non-negative, and that positive entries come by pairs.
    \begin{align}
        P(t)
        \  & =\
        c\
        t^{k}
        \prod\limits_{i=1}^{\ell}
        (t-\lambda_{i})^{2}
        \ =\
        c\prod\limits_{i=1}^{2\ell+k}
        (t-\mu_{i})
        \       \\&=\
        c
        \sum\limits_{j=0}^{2\ell+k}
        t^{2\ell+k-j}
        (-1)^j
        e_{j}(\bm{\mu})
        \ =\
        c
        \sum\limits_{j=0}^{2\ell+k}
        t^{j}
        (-1)^{k+j}
        e_{2\ell+k-j}(\bm{\mu})
    \end{align}
    where we have introduced elementary symmetric polynomials defined as:
    \begin{align}
        e_j(\bm{\mu})
        =
        \sum\limits_{1\leq k_1\leq k_2\leq...\leq k_j\leq n}
        \mu_{k_1}\mu_{k_2}\cdots\mu_{k_j}.
    \end{align}
    Now, use the fact that binomial states $\hat{\sigma}(n,0)$ have a polynomial equal to $P(t)=\frac{1}{n!}t^n$, and we find:
    \begin{align}
        \hat{A}
        =
        (-1)^{k}
        \sum\limits_{j=0}^{2\ell+k}
        (-1)^jj!\
        e_{2\ell+k-j}(\bm{\mu})\
        \hat{\sigma}(j,0).
    \end{align}
    Setting $m=2\ell+k$ concludes the proof with the converse direction following from the same derivation.
\end{proof}

\section{Vertigo trajectories}
\label{app:vertigo_traj}

\paragraph{Lemma~~\ref{lemma:vertigo_extreme_trajectories}}(Extreme trajectories under $\mathcal{V}$).
\textit{
If $\hat{A}\in\mathrm{Extr}(\mathcal{A}^{n}_{+})$, then $\mathcal{V}^{\mathrm{norm}}_t[\hat{A}]\in\mathrm{Extr}(\mathcal{A}^{n}_{+})$ for all $t>0$.
}

\begin{proof}
    By contradiction.
    Assume $\mathcal{A}\in\mathrm{Extr}(\mathcal{A}^{n}_{+})$ and $\mathcal{V}^{\mathrm{norm}}_{t}[\hat{A}]\not\in\mathrm{Extr}(\mathcal{A}^{n}_{+})$ for some $t$.
    This means that $\exists\hat{B}_1,\hat{B}_2\in\mathcal{A}^{n}_{+}$ such that $\mathcal{V}_{t}[\hat{A}]\propto\frac12(\hat{B}_{1}+\hat{B}_{2})$.
    Equivalently, we have $\hat{A}\propto\frac12(\mathcal{V}_{1/t}[\hat{B}_1]+\mathcal{V}_{1/t}[\hat{B}_2])$.
    Since $\mathcal{V}$ preserves the Fock-support and Wigner-positivity, this contradicts $\hat{A}\in\mathrm{Extr}(\mathcal{A}^{n}_{+})$.
\end{proof}

\paragraph{Lemma~\ref{lemma:vertigo_quasi-states_to_states}.}
($\mathcal{V}$ maps quasi-states to states).
\textit{
Let $\hat{A}$ be a quasi-state in $\mathcal{A}^{n}$ such that $\bra{n}\hat{A}\ket{n} \neq 0$.
Then there exists $t_0 > 0$ such that $\mathcal{V}^{\mathrm{norm}}_t[\hat{A}] \in \mathcal{D}^{n}$ for all $t \geq t_0$.
}
\begin{proof}
    We first show that any quasi-state $\hat{A} \in \mathcal{A}^{n}$ with $\bra{n}\hat{A}\ket{n}\neq 0$ converges to the binomial state $\hat{\sigma}(n,0)$ under the normalized Vertigo map $\mathcal{V}_t^{\mathrm{norm}}$.
    By Eq.~\eqref{eq:wigner_funcion_fock_mn}, the monomial matrix $\bm{P}$ associated with $\hat{A}$ satisfies $P_{nn} = \frac{2}{\pi}\frac{1}{n!}\,4^n\,\bra{n}\hat{A}\ket{n}$.
    Under the unnormalized Vertigo map, the monomial coefficients transform as $P_{k\ell}\to \sqrt{t}^{\,k+\ell} P_{k\ell}$.
    Thus, for large $t$, the dominant term in the matrix becomes $P_{nn}$, while all others entries scale as lower-order powers of $t$ (since $k+\ell < 2n$) and become negligible.
    Observe now that the binomial state $\hat{\sigma}(n,0)$ has a moment matrix with only one non-zero element, i.e.\ $P_{nn}=\frac{2}{\pi}\frac{1}{n!}4^{n}$ (see Eq.~\eqref{eq:wigner_function_binomial_state}).
    Hence, $\lim_{t\to\infty} \mathcal{V}_t[\hat{A}] \propto \bra{n}\hat{A}\ket{n}\, \hat{\sigma}(n,0)$, and after normalization, $\lim_{t \to \infty} \mathcal{V}_t^{\mathrm{norm}}[\hat{A}] = \hat{\sigma}(n,0)$.

    Now, note that $\hat{\sigma}(n,0) = \frac{1}{2^n} \sum_{k=0}^n \binom{n}{k} \ket{k}\bra{k}$ is a full-rank state in $\mathcal{D}^{n}$, hence it belongs to the interior: $\hat{\sigma}(n,0) \in \mathrm{int}(\mathcal{D}^{n})$.
    By continuity of $\mathcal{V}_t^{\mathrm{norm}}$, we deduce that for $t$ sufficiently large, say $t \geq t_0$, the image $\mathcal{V}_t^{\mathrm{norm}}[\hat{A}]$ remains within $\mathcal{D}^{n}$.
    In addition, since $\mathcal{V}_t^{\mathrm{norm}}$ preserves positive semi-definiteness for $t \geq 1$, we can ensure that the trajectory remains in $\mathcal{D}^{n}$ as soon as it enters it.
\end{proof}

\paragraph{Additional lemmas}

\begin{lemma}\label{lem:eig-op}
    Let $\hat{A}$ be a hermitian operator with bounded Fock support. The following propositions are equivalent:
    \begin{enumerate}[label=(\roman*)]
        \item $\hat{A}$ is an eigenstate of $\mathcal V_t$ for some $t>1$
        \item $\hat{A}$ is an eigenstate of $\mathcal V_t$ for all $t\ge1$
        \item The Wigner function of $\hat{A}$ is the product of a real homogeneous polynomial with the Gaussian function $\alpha\mapsto e^{-2\vert\alpha\vert^2}$. The corresponding eigenvalue is $\sqrt t^m$ where $m$ is the degree-sum of the polynomial part of $W_{\hat{A}}$.
    \end{enumerate}
\end{lemma}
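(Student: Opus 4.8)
The plan is to push everything through the description of $\mathcal{V}_t$ on monomial matrices recorded in \cref{sec:vertigo}: for a Fock-bounded Hermitian operator $\hat{A}$ with monomial matrix $\bm{P}=(P_{k\ell})$, the operator $\mathcal{V}_t[\hat{A}]$ has monomial matrix with entries $\sqrt{t}^{\,k+\ell}P_{k\ell}$, equivalently its polyanalytic polynomial is $P(\sqrt{t}\,\alpha,\sqrt{t}\,\alpha^{\ast})$. Since $\hat{A}\mapsto\bm{P}$ is a linear bijection between Fock-bounded Hermitian operators and Hermitian $(n+1)\times(n+1)$ matrices (with the normalization convention of Eq.~\eqref{eq:normalization_polyanalytic_polynoial}), and $W_{\hat{A}}=\bm{\alpha}^{\dagger}\bm{P}\bm{\alpha}\,e^{-2\abs{\alpha}^2}$, the eigen-relation $\mathcal{V}_t[\hat{A}]=\lambda\hat{A}$ is equivalent to the scalar system $\sqrt{t}^{\,k+\ell}P_{k\ell}=\lambda P_{k\ell}$ for all $0\le k,\ell\le n$; everything else is reading off consequences of this system. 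Note that ``eigenstate of $\mathcal{V}_t$'' here means eigenvector of the linear (non-trace-preserving) map $\mathcal{V}_t$; since $\mathcal{V}_t^{\mathrm{norm}}$ merely rescales $\mathcal{V}_t$ by a positive scalar, this coincides with being a fixed direction of $\mathcal{V}_t^{\mathrm{norm}}$.

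I would first prove $(iii)\Rightarrow(ii)$. If $W_{\hat{A}}(\alpha)=P(\alpha,\alpha^{\ast})\,e^{-2\abs{\alpha}^2}$ with $P$ homogeneous of degree-sum $m$, i.e.\ $P_{k\ell}\neq 0$ only when $k+\ell=m$, then every nonzero entry of $\bm{P}$ is multiplied by the common factor $\sqrt{t}^{\,m}$, so $\mathcal{V}_t[\hat{A}]=\sqrt{t}^{\,m}\hat{A}$ for every $t>0$, in particular for all $t\ge 1$, with eigenvalue $\sqrt{t}^{\,m}$. Reality of $P$ is automatic: $\hat{A}$ Hermitian forces $\bm{P}$ Hermitian, hence $P(\alpha,\alpha^{\ast})=\bm{\alpha}^{\dagger}\bm{P}\bm{\alpha}$ is real-valued. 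The implication $(ii)\Rightarrow(i)$ is immediate.

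For $(i)\Rightarrow(iii)$, fix some $t>1$ with $\mathcal{V}_t[\hat{A}]=\lambda\hat{A}$ and $\hat{A}\neq 0$. The scalar system says that every index pair with $P_{k\ell}\neq 0$ satisfies $\sqrt{t}^{\,k+\ell}=\lambda$; in particular such a pair exists, so $\lambda=\sqrt{t}^{\,j}$ for some $j\in\{0,\dots,2n\}$. Because $t>1$, the function $j\mapsto\sqrt{t}^{\,j}$ is strictly increasing on $\mathbb{N}$, hence injective, so there is a unique integer $m$ with $\sqrt{t}^{\,m}=\lambda$, and then $P_{k\ell}\neq 0\Rightarrow k+\ell=m$. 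Thus $P(\alpha,\alpha^{\ast})$ is homogeneous of degree-sum $m$, real by Hermiticity of $\hat{A}$, and $W_{\hat{A}}=P(\alpha,\alpha^{\ast})\,e^{-2\abs{\alpha}^2}$ has the claimed form; expressing $\lambda=\sqrt{t}^{\,m}$ also exhibits the eigenvalue in terms of $m$ alone, which is consistent with the uniform statement in $(ii)$ and closes the loop.

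The argument is essentially bookkeeping once the monomial-matrix picture is in place; the single point that requires care is the strict inequality $t>1$ in $(i)$, which is exactly what makes $j\mapsto\sqrt{t}^{\,j}$ injective and therefore pins down a single homogeneity degree --- at $t=1$ the map $\mathcal{V}_1$ is the identity and the claim would be vacuous. One should also keep the normalization convention for the monomial matrix fixed throughout, so that the bijection $\hat{A}\mapsto\bm{P}$ and its equivariance under $\mathcal{V}_t$ hold with no spurious scalar factors.
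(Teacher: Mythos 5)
Your proposal is correct and follows essentially the same route as the paper's proof: both reduce the eigen-relation to the coefficient-wise condition $\sqrt{t}^{\,k+\ell}P_{k\ell}=\lambda P_{k\ell}$ on the polynomial part of the Wigner function and use $t>1$ to pin down a single homogeneity degree $m$. The only difference is notational (monomial matrix $\bm{P}$ versus the coefficients $p_{ab}$ of $P_{\hat A}$), which carries the same content.
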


\begin{proof}
    Let $\lambda\in\mathbb R$ and let $\hat{A}$ be a hermitian operator with bounded Fock support such that $\mathcal V_t[\hat{A}]=\lambda_t \hat{A}$. $\hat{A}$ has bounded support so its Wigner function is of the form
    \begin{equation}
        W_{\hat{A}}(\alpha):=P_{\hat{A}}(\alpha)e^{-2\vert\alpha\vert^2},
    \end{equation}
    where $P_{\hat{A}}(\alpha):=\sum_{a,b\le n}p_{ab}\alpha^a\alpha^{\ast b}$ is a polynomial. Then, by Eq.~(\ref{eq:def_vertigo_map_wigner}),
    \begin{equation}
        \mathcal{V}_t[W_{\hat{A}}](\alpha):=P_{\hat{A}}(\sqrt s\alpha)e^{-2\vert\alpha\vert^2}.
    \end{equation}
    We thus obtain
    \begin{equation}
        P_A(\sqrt t\alpha)=\lambda_t P_{\hat{A}}(\sqrt t\alpha),
    \end{equation}
    for all $\alpha\in\mathbb C$. This implies $p_{ab}\sqrt t^{a+b}=\lambda_tp_{ab}$ for all $a+b\le n$. For $\hat{A}$ non identically $0$, this shows that there exists $m\le n$ such that $\lambda_t=\sqrt t^m$ and $p_{ab}=0$ for $a+b\neq m$, i.e., $P_{\hat{A}}$ is a homogeneous polynomial of degree $m$. This proves $(i)\Rightarrow(iii)$.

    Reciprocally, any operator whose Wigner function is the product of the Gaussian function $\alpha\mapsto e^{-2\vert\alpha\vert^2}$ with a real homogeneous polynomial of degree-sum $m$ is an eigenstate of $\mathcal V_t$ with eigenvalue $\sqrt t^m$ for all $t\ge1$. This proves $(iii)\Rightarrow(ii)$.

    $(ii)\Rightarrow(i)$ is trivial.
\end{proof}

We now prove a strengthened result for density operators with bounded Fock support:

\begin{lemma}\label{lem:eig-dens-op}
    Let $\hat{\rho}$ be a density operator with bounded Fock support. The following propositions are equivalent:
    \begin{enumerate}[label=(\roman*)]
        \item $\hat{\rho}$ is an eigenstate of $\mathcal V_t$ for some $t>1$
        \item $\hat{\rho}$ is an eigenstate of $\mathcal V_t$ for all $t\ge1$
        \item There exists $m$ such that $W_{\hat{\rho}}(\alpha)\propto\vert\alpha\vert^{2m} e^{-2\vert\alpha\vert^2}$. The corresponding eigenvalue is $t^m$.
        \item There exists $m$ such that $\hat{\rho}=\frac1{2^m}\sum_{k=0}^m\binom mk\ket k\!\bra k$.
    \end{enumerate}
\end{lemma}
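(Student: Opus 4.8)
The plan is to prove the cyclic chain $(iv)\Rightarrow(iii)\Rightarrow(ii)\Rightarrow(i)\Rightarrow(iv)$, the only substantial implication being $(i)\Rightarrow(iv)$. The implication $(iv)\Rightarrow(iii)$ is immediate: the state in $(iv)$ is the binomial beam-splitter state $\hat{\sigma}(m,0)$, whose Wigner function is given by Eq.~\eqref{eq:wigner_function_binomial_state} and is proportional to $\abs{\alpha}^{2m}e^{-2\abs{\alpha}^2}$. For $(iii)\Rightarrow(ii)$, note that $\abs{\alpha}^{2m}=(\alpha\alpha^{\ast})^{m}$ is a real homogeneous polynomial of degree-sum $2m$, so Lemma~\ref{lem:eig-op} (implication $(iii)\Rightarrow(ii)$, applied to the Hermitian operator $\hat{\rho}$) gives that $\hat{\rho}$ is an eigenstate of $\mathcal{V}_{t}$ for all $t\ge1$, with eigenvalue $\sqrt{t}^{\,2m}=t^{m}$. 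The implication $(ii)\Rightarrow(i)$ is trivial.

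For $(i)\Rightarrow(iv)$, I would first apply Lemma~\ref{lem:eig-op} (implication $(i)\Rightarrow(iii)$) to write $W_{\hat{\rho}}(\alpha)=P(\alpha,\alpha^{\ast})\,e^{-2\abs{\alpha}^2}$ with $P$ a real \emph{homogeneous} polynomial of some degree-sum $m_{0}$, the corresponding eigenvalue being $\sqrt{t}^{\,m_{0}}$. Expand $\hat{\rho}=\sum_{k,l}\rho_{kl}\ket{k}\!\bra{l}$, a finite sum by Fock-boundedness. The key structural input, read off from Eq.~\eqref{eq:wigner_funcion_fock_mn} together with the fact that $L_{k}^{(l-k)}$ has degree $k$ with nonzero leading coefficient, is that $W_{\ket{k}\!\bra{l}}(\alpha)\,e^{2\abs{\alpha}^2}$ is a polynomial in $(\alpha,\alpha^{\ast})$ of total degree exactly $k+l$ whose top-degree part is a nonzero multiple $\gamma_{kl}\,\alpha^{l}\alpha^{\ast k}$ of a single monomial, distinct pairs $(k,l)$ yielding distinct monomials. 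Hence, with $N^{\ast}=\max\{k+l:\rho_{kl}\neq0\}$, the degree-$N^{\ast}$ homogeneous part of $W_{\hat{\rho}}\,e^{2\abs{\alpha}^2}$ equals $\sum_{k+l=N^{\ast}}\rho_{kl}\gamma_{kl}\,\alpha^{l}\alpha^{\ast k}\neq0$; since this polynomial is $P$, which is homogeneous of degree $m_{0}$, we conclude $N^{\ast}=m_{0}$, and therefore $\rho_{kl}\neq0\Rightarrow k+l\le m_{0}$.

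Next I would bring in positivity of $\hat{\rho}$. Picking $(k_{0},l_{0})$ with $\rho_{k_{0}l_{0}}\neq0$ and $k_{0}+l_{0}=m_{0}$, the $2\times2$ principal submatrix on $\{\ket{k_{0}},\ket{l_{0}}\}$ is positive semi-definite, so $\rho_{k_{0}k_{0}}>0$ and $\rho_{l_{0}l_{0}}>0$; combined with $2k_{0}\le m_{0}$, $2l_{0}\le m_{0}$ from the previous step and $k_{0}+l_{0}=m_{0}$ this forces $k_{0}=l_{0}=m_{0}/2$, so $m_{0}$ is even, say $m_{0}=2m$, and $\rho_{mm}>0$. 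The same bound gives $\rho_{kk}\neq0\Rightarrow k\le m$, and since a vanishing diagonal entry of a positive operator kills the whole corresponding row and column, $\hat{\rho}$ is supported on $\mathcal{H}^{m}$. Consequently $W_{\hat{\rho}}\,e^{2\abs{\alpha}^2}=\sum_{0\le k,l\le m}\rho_{kl}\,W_{\ket{k}\!\bra{l}}\,e^{2\abs{\alpha}^2}$ has total degree at most $2m$, and its degree-$2m$ part---which, $P$ being homogeneous of degree $2m$, must coincide with $P$ itself---receives a contribution only from the unique pair $k=l=m$, giving $P=\rho_{mm}\gamma_{mm}\abs{\alpha}^{2m}$. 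Thus $W_{\hat{\rho}}\propto\abs{\alpha}^{2m}e^{-2\abs{\alpha}^2}$; since $\hat{\rho}$ and $\hat{\sigma}(m,0)$ are both normalized states whose Wigner functions are proportional to $\abs{\alpha}^{2m}e^{-2\abs{\alpha}^2}$ (Eq.~\eqref{eq:wigner_function_binomial_state}), normalization fixes the constant and injectivity of the Wigner transform (Eq.~\eqref{eq:weyl_transform_inverse}) gives $\hat{\rho}=\hat{\sigma}(m,0)=\frac1{2^{m}}\sum_{k=0}^{m}\binom{m}{k}\ket{k}\!\bra{k}$, which is $(iv)$ and re-derives $(iii)$, with eigenvalue $\sqrt{t}^{\,2m}=t^{m}$ as claimed.

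The step I expect to be the main obstacle is the degree-counting claim: one must verify from the explicit Laguerre form of $W_{\ket{k}\!\bra{l}}$ that the top (total-degree-$(k+l)$) part is a \emph{nonzero} monomial and that distinct Fock pairs contribute distinct monomials, since this linear independence at the top degree is exactly what lets homogeneity of $P$ pin down first $m_{0}$ and then $\hat{\rho}$ itself. The remaining ingredients---the $2\times2$ positivity inequalities, the fact that a zero diagonal entry of a positive operator annihilates its row and column, and injectivity of the Wigner transform---are standard.
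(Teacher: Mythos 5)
Your proof is correct and follows essentially the same route as the paper: both reduce to Lemma~\ref{lem:eig-op} to get homogeneity of the polynomial part and then invoke positive semi-definiteness of $\hat{\rho}$ to force $W_{\hat{\rho}}\propto\abs{\alpha}^{2m}e^{-2\abs{\alpha}^{2}}$. The only (harmless) difference is ordering — the paper first uses PSD-ness to truncate the Fock support at the largest nonzero diagonal entry and then reads off the leading term, whereas you first extract the homogeneous degree from the top-degree monomials of $W_{\ket{k}\bra{l}}$ and then use the $2\times2$ positivity minors; your version spells out the linear independence of those top-degree monomials, which the paper leaves implicit.
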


\begin{proof}
    We prove $(i)\Rightarrow(iii)$ and $(iii)\Leftrightarrow(iv)$, the other implications being direct.

    $(i)\Rightarrow(iii)$ Let $m$ be the largest Fock number such that $\langle m|\rho|m\rangle$. Since $\hat{\rho}$ is positive semi-definite, we have $\langle m|\rho|n\rangle=0$ for all $n>m$, and we can write $\rho=\sum_{k,l=0}^m\rho_{kl}\ket k\!\bra l$, with $\rho_{mm}\neq0$. The Wigner function of $\hat{\rho}$ is then given by
    \begin{align}
        W_{\hat{\rho}}(\alpha) & =\sum_{k,l=0}^m\rho_{kl}W_{\ket k\!\bra l}(\alpha)                                                                                                      \\
                               & =\sum_{n=0}^m\rho_{nn}W_{\ket n\!\bra n}(\alpha)+2\sum_{0\le k<l\le m}\Re[\rho_{kl}W_{\ket k\!\bra l}(\alpha)]                                          \\
                               & =\left(\sum_{n=0}^m\rho_{nn}L_n(2\vert\alpha\vert^2)+2\sum_{0\le k<l\le m}\Re[\rho_{kl}L_{k,l}(\alpha,\alpha^{\ast})]\right)e^{-2\vert\alpha\vert^{2}}.
    \end{align}
    The first term in between the parentheses is a real-valued polynomial of the form $\propto\rho_{mm}\vert\alpha\vert^{2m}+P(\alpha,\alpha^{\ast})$, with the degree of $P$ smaller than $2m$ and the second term is a a real-valued polynomial of degree smaller than $2m$.
    By Lemma~\ref{lem:eig-op}, the whole term should be an homogeneous polynomial, which shows that the lower degree terms have to cancel each other out. Hence, $W_{\hat\rho}(\alpha)\propto\vert\alpha\vert^{2m} e^{-2\vert\alpha\vert^2}$.

    $(iii)\Leftrightarrow(iv)$ Up to normalization, $\alpha\mapsto\vert\alpha\vert^{2m} e^{-2\vert\alpha\vert^{2}}$ is the $Q$ function of the Fock state $\ket n$, so it is the Wigner function of the state obtained by mixing $\ket{n}$ and $\ket0$ on a balanced beam splitter and tracing out one of the output modes. A quick computation in Fock basis then proves the result.
\end{proof}

We now show that any density operator with bounded Fock support converges to a fixed point under the action of $\mathcal V_t$.
A natural partition of the set of density operators is obtained as:
\begin{equation}
    \mathcal D^{n}=[\mathcal D^n\setminus\mathcal D^{n-1}]\cup[\mathcal D^{n-1}\setminus\mathcal D^{n-2}]\cup\dots\cup[\mathcal D^{1}\setminus\mathcal D^{0}]\cup\mathcal D^{0},
\end{equation}
where $\mathcal D^{0}=\{\ket0\!\bra0\}$ (we use the convention $\mathcal D^{0}=\emptyset$ here after). For each $k\in\{0,\dots,n\}$, the set $\mathcal D^{k}\setminus\mathcal D^{k-1}$ is the set of operators in $\mathcal D^{k}$ such that $\langle k|\rho|k\rangle>0$.

\begin{lemma}
    \label{lem:fixed_point}
    Let $\hat{\rho}$ be a density operator with bounded Fock support. There exists a unique $m$ such that $\hat{\rho}\in\mathcal D^m\setminus\mathcal D^{m-1}$. Moreover, up to normalization, $\mathcal V_t[\hat{\rho}]$ converges to $\frac1{2^m}\sum_{k=0}^m\binom mk\ket k\!\bra k$ when $t$ goes to infinity.
\end{lemma}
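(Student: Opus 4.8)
The plan is to establish the two assertions separately, the convergence statement being essentially the monomial-matrix computation already performed in the proof of Lemma~\ref{lemma:vertigo_quasi-states_to_states}.

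For the existence and uniqueness of $m$: since $\hat\rho$ has bounded Fock support it lies in $\mathcal D^{n}$ for some $n$, and since $\mathrm{Tr}\,\hat\rho=1$ there is at least one index $k$ with $\langle k|\hat\rho|k\rangle>0$. I would take $m$ to be the largest such index. Positive semi-definiteness of $\hat\rho$ forces $\langle m'|\hat\rho|m'\rangle=0\Rightarrow\langle m'|\hat\rho|\ell\rangle=0$ for all $\ell$, so $\hat\rho$ is supported on $\mathcal H^{m}$, i.e.\ $\hat\rho\in\mathcal D^{m}$; and $\langle m|\hat\rho|m\rangle>0$ gives $\hat\rho\notin\mathcal D^{m-1}$. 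Thus $\hat\rho\in\mathcal D^{m}\setminus\mathcal D^{m-1}$, and uniqueness is immediate because the pieces $\mathcal D^{k}\setminus\mathcal D^{k-1}$ of the partition of $\mathcal D^{n}$ displayed above are pairwise disjoint.

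For the convergence to the binomial fixed point: let $\bm P=(P_{k\ell})$ be the monomial matrix of $\hat\rho$. Since $\hat\rho\in\mathcal D^{m}$ one has $P_{k\ell}=0$ whenever $\max(k,\ell)>m$, and by Eq.~\eqref{eq:wigner_funcion_fock_mn} the corner coefficient is $P_{mm}=\tfrac{2}{\pi}\tfrac{4^{m}}{m!}\langle m|\hat\rho|m\rangle\neq 0$. Under the unnormalized Vertigo map the coefficients rescale as $P_{k\ell}\mapsto\sqrt t^{\,k+\ell}P_{k\ell}$, and among indices $0\le k,\ell\le m$ the sum $k+\ell$ attains its maximal value $2m$ uniquely at $(k,\ell)=(m,m)$. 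Hence, dividing by $t^{m}$, every entry other than $P_{mm}$ is multiplied by a strictly negative power of $\sqrt t$ and vanishes as $t\to\infty$, so $t^{-m}\mathcal V_t[\hat\rho]$ converges entrywise — equivalently in trace norm, everything living in the finite-dimensional block $\mathcal H^{m}$ — to the operator whose monomial matrix has a single nonzero entry $P_{mm}$ in position $(m,m)$. By Eq.~\eqref{eq:wigner_function_binomial_state} this operator equals $2^{m}\langle m|\hat\rho|m\rangle\,\hat\sigma(m,0)$. Finally, $\mathrm{Tr}[\mathcal V_t[\hat\rho]]=\mathrm{Tr}[(2t-1)^{\hat n}\hat\rho]=\sum_{k}(2t-1)^{k}\langle k|\hat\rho|k\rangle\ge 1$ for $t\ge 1$ (each factor $(2t-1)^{k}\ge 1$), so $\mathcal V^{\mathrm{norm}}_t[\hat\rho]=\mathcal V_t[\hat\rho]/\mathrm{Tr}[\mathcal V_t[\hat\rho]]$ is well defined; dividing numerator and denominator by $t^{m}$ and using $\mathrm{Tr}\,\hat\sigma(m,0)=1$ then yields $\mathcal V^{\mathrm{norm}}_t[\hat\rho]\to\hat\sigma(m,0)=\tfrac1{2^{m}}\sum_{k=0}^{m}\binom mk\ket k\!\bra k$.

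The argument presents no genuine obstacle: it is a combination of the scaling rule for monomial matrices with the explicit structure of $\hat\sigma(m,0)$. The only point needing a touch of care is the elementary remark that $(m,m)$ is the unique maximizer of $k+\ell$ on the block $0\le k,\ell\le m$ — this is exactly what isolates the binomial state in the limit — together with the minor check that the normalizing trace stays bounded away from zero, which follows from $\mathrm{Tr}[\mathcal V_t[\hat\rho]]=\mathrm{Tr}[(2t-1)^{\hat n}\hat\rho]$.
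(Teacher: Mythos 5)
Your proof is correct and follows essentially the same route as the paper's: identify $m$ as the highest Fock level with $\langle m|\hat\rho|m\rangle>0$ (using positive semi-definiteness to control the off-diagonal entries), then observe that under the rescaling $P_{k\ell}\mapsto\sqrt t^{\,k+\ell}P_{k\ell}$ the $(m,m)$ monomial coefficient dominates, so the normalized trajectory converges to $\hat\sigma(m,0)$. The paper phrases the dominance argument in terms of the degree of the polynomial part of the Wigner function rather than the monomial matrix, but this is the same computation; your added checks (the explicit limiting constant $2^m\langle m|\hat\rho|m\rangle$ and the lower bound on the normalizing trace) are correct and harmless refinements.
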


\begin{proof}
    By definition of $\mathcal D^k\setminus\mathcal D^{k-1}$, the value of $m$ is given by the highest Fock state in the support of $\hat{\rho}$, i.e., such that $\langle m|\rho|m\rangle>0$. We write $\hat{\rho}=\sum_{k,l=0}^m\rho_{kl}\ket k\!\bra l$, with $\rho_{mm}>0$. As in the previous proof, up to the Gaussian term $\exp(-2\vert\alpha\vert^{2})$, the Wigner function of $\hat{\rho}$ is a sum of a real-valued polynomial of the form $\propto\rho_{mm}\vert\alpha\vert^{2m}$ and a real-valued polynomial of degree smaller than $2m$. Hence, up to the same Gaussian term, the Wigner function of $\mathcal V_t[\hat\rho]$ is a sum of a real-valued polynomial of the form $\propto\rho_{mm}\vert\alpha\vert^{2m}t^{2m}$ and terms of lower order in $t$. Renormalising and letting $t$ go to infinity gives a Wigner function $\propto\vert\alpha\vert^{2m}$, i.e., $\mathcal V_t[\hat\rho]$ converges to $\frac1{2^m}\sum_{k=0}^m\binom mk\ket k\!\bra k$ when $t$ goes to infinity.
\end{proof}

\section{Beam-splitter states}
\label{app:beam-splitter-states}

\paragraph{Theorem~\ref{theorem:beam-splitter-states-are-extreme}.}
The beam-splitter states $\hat{\sigma}(m,n)$ are extreme WPS $\forall m,n\in\mathbb{N}$:
\begin{align}
    \big\lbrace
    \hat{\sigma}(m,n)
    \big\rbrace_{m,n\in\mathbb{N}}
     & \subset\mathrm{Extr}(\mathcal{D}_{+}).
\end{align}

\begin{proof}
    The Wigner function of the beam-splitter state $\hat{\sigma}(m,n)$ is associated to an extreme non-negative polynomial over $\mathbb{R}_{+}$ (see Lemma~\ref{lemma:extreme_poly_R+}).
    It follows that $\hat{\sigma}(m,n)\in\mathrm{Extr}(\mathcal{A}_{\oplus}^{k})$ for any $k\geq m+n$.
    Setting $k=m+n$, we know from Theorem~\ref{lemma:extreme-fock-diagonal-quasi-states-are-extreme} that $\hat{\sigma}(m,n)\in\mathrm{Extr}(\mathcal{A}^{m+n}_{\oplus})\setminus\mathcal{A}^{m+n-1}_{\oplus}\subset\mathrm{Extr}(\mathcal{A}^{m+n}_{+})$.
    Since $\hat{\sigma}(m,n)\in\mathcal{D}^{m+n}_{+}$, it follows from Lemma~\ref{lemma:extreme_superset} that $\hat{\sigma}\in\mathrm{Extr}(\mathcal{A}^{m+n}_{+})\cap\mathcal{D}^{m+n}_{+}\subset\mathrm{Extr}(\mathcal{D}^{m+n}_+)$.
    Finally, applying Lemma~\ref{lemma:extreme_subset_projector}, we conclude that $\hat{\sigma}(m,n)\in\mathrm{Extr}(\mathcal{D}_{+})$.
\end{proof}

\paragraph{Lemma~\ref{lemma:bs_states_fidelity_fock_states}.}
Beam-splitter states maximize the fidelity with Fock states among Wigner-positive states:
\begin{align}
    \hat{\sigma}(n,n)
    \, & =\,
    \arg\max\limits_{\hat{\rho}\in\mathcal{D}^{2n}_{+}}
    \bra{2n}\hat{\rho}\ket{2n}
    \\
    \hat{\sigma}(n+1,n)
    \, & =\,
    \arg\max\limits_{\hat{\rho}\in\mathcal{D}^{2n+1}_{+}}
    \bra{2n+1}\hat{\rho}\ket{2n+1}
\end{align}

\begin{proof}
    Ref.~\cite{Chabaud2021witnessing} provides the expression of the states maximising the fidelity over Fock states.
    In particular, the maximum fidelity with respect to Fock state $\ket{n}$ is given by the expression:
    \begin{align}
        \bra{n}\hat{\rho}_{\mathrm{max}}\ket{n}
        =
        \frac{1}{2^n}
        \binom{n}{\lfloor \frac{n}{2}\rfloor}.
        \label{eq:max_fidelity_fock_state}
    \end{align}
    Let us compare the above expression with $\bra{2n}\hat{\sigma}(n,n)\ket{2n}$ and $\bra{2n+1}\hat{\sigma}(n+1,n)\ket{2n+1}$.
    We can compute this using the expression from \cite{Van_Herstraeten2021-nj}:
    \begin{align}
        \hat{\sigma}(m,n)
        =
        \frac{2^{-m-n}}{m!n!}
        \sum\limits_{z=0}^{m+n}
        z!(m+n-z)!
        \ \ket{z}\!\bra{z}\
        \left(
        \sum\limits_{j=\max(0,z-n)}^{\min(z,m)}
            (-1)^{j}
        \binom{m}{j}
        \binom{n}{z-j}
        \right)^2.
    \end{align}

    This gives:
    \begin{align}
        \bra{2n}\hat{\sigma}(n,n)\ket{2n}
         & =
        \frac{2^{-2n}}{(n!)^2}
        \ (2n)!
        \underbrace{\left(
            \sum\limits_{j=n}^{n}
            (-1)^{j}
            \binom{n}{j}
            \binom{n}{2n-j}
            \right)^2}_{=1}
        =\frac{1}{2^{2n}}\binom{2n}{n}
    \end{align}
    \begin{align}
        \bra{2n+1}\hat{\sigma}(n+1,n)\ket{2n+1}
         & =
        \frac{2^{-2n-1}}{n!(n+1)!}
        (2n+1)!
        \underbrace{
            \left(
            \sum\limits_{j=\max(n+1)}^{\min(n+1)}
            (-1)^{j}
            \binom{n+1}{j}
            \binom{n}{2n+1-j}
            \right)^2
        }_{=1}
        \\&=
        \frac{1}{2^{2n+1}}
        \binom{2n+1}{n}
    \end{align}
    In both case, we recover the expression of Eq.~\eqref{eq:max_fidelity_fock_state}, which concludes the proof.
\end{proof}

\section{Extremality preserving channels}

\paragraph{Lemma~\ref{lemma:extremality_preserving_map_v2}}(Extremality-preserving map)
Let $\mathcal{M}$ be a linear, invertible (not necessarily trace-preserving) map.
Assume:
\begin{enumerate}
    \item $\mathcal{M},\mathcal{M}^{-1}$ preserve positive semi-definiteness,
    \item $\mathcal{M},\mathcal{M}^{-1}$ preserve Wigner positivity.
\end{enumerate}
Define $\mathcal{M}^{\mathrm{norm}}[\hat{\rho}]=\mathcal{M}[\hat{\rho}]/\mathrm{Tr}\big[\mathcal{M}[\hat{\rho}]\big]$.
Then, $\mathcal{M}^{\mathrm{norm}}:\mathrm{Extr}(\mathcal{D}_{+})\rightarrow\mathrm{Extr}(\mathcal{D}_{+})$.

\begin{proof}
    Assume $\hat{\rho}\in\mathrm{Extr}(\mathcal{D}_{+})$.
    Equivalently, $\not\exists\hat{\rho}_1,\hat{\rho}_{2}\in\mathcal{D}_{+}\setminus{\lbrace\hat{\rho}}\rbrace$ such that $\hat{\rho}=\frac12\hat{\rho}_1+\frac12\hat{\rho}_2$.
    We are going to show that $\mathcal{M}[\hat{\rho}]/\mathrm{Tr}\big[\mathrm{\mathcal{M}[\hat{\rho}]}\big]$ is also extreme.
    We use a proof by contradiction.
    Assume $\mathcal{M}[\hat{\rho}]/\mathrm{Tr}\big[\mathrm{\mathcal{M}[\hat{\rho}]}\big]$ is not extreme.
    Then, we can write:
    \begin{align}
        \frac{\mathcal{M}[\hat{\rho}]}
        {\mathrm{Tr}\big[\mathcal{M}[\hat{\rho}]\big]}
        =
        \frac12\hat{\sigma}_1
        +
        \frac12\hat{\sigma}_2
        \qquad\Leftrightarrow\qquad
        \mathcal{M}[\hat{\rho}]
        =
        \frac{\mathrm{Tr}\big[\mathcal{M}[\hat{\rho}]\big]}{2}\hat{\sigma}_1
        +
        \frac{\mathrm{Tr}\big[\mathcal{M}[\hat{\rho}]\big]}{2}\hat{\sigma}_2.
    \end{align}
    We then apply the inverse channel $\mathcal{M}^{-1}$ on both sides:
    \begin{align}
        \hat{\rho}
         & =
        \mathcal{M}^{-1}
        \left[
            \frac{\mathrm{Tr}\big[\mathcal{M}[\hat{\rho}]\big]}{2}\hat{\sigma}_1
            +
            \frac{\mathrm{Tr}\big[\mathcal{M}[\hat{\rho}]\big]}{2}\hat{\sigma}_2
            \right]
        \\&=
        \frac{\mathrm{Tr}\big[\mathcal{M}[\hat{\rho}]\big]}{2}
        \mathcal{M}^{-1}[\hat{\sigma}_1]
        +
        \frac{\mathrm{Tr}\big[\mathcal{M}[\hat{\rho}]\big]}{2}
        \mathcal{M}^{-1}[\hat{\sigma}_2]
        \\&=
        \frac{\mathrm{Tr}\big[\mathcal{M}[\hat{\rho}]\big]}{2}
        \underbrace{\frac{\mathcal{M}^{-1}[\hat{\sigma}_1]}{\mathrm{Tr}\big[\mathcal{M}^{-1}[\hat{\sigma}_1]\big]}}_{\hat{\tau}_1}
        \mathrm{Tr}\big[\mathcal{M}^{-1}[\hat{\sigma}_1]\big]
        +
        \frac{\mathrm{Tr}\big[\mathcal{M}[\hat{\rho}]\big]}{2}
        \underbrace{\frac{\mathcal{M}^{-1}[\hat{\sigma}_2]}{\mathrm{Tr}\big[\mathcal{M}^{-1}[\hat{\sigma}_2]\big]}}_{\hat{\tau}_2}
        \mathrm{Tr}\big[\mathcal{M}^{-1}[\hat{\sigma}_2]\big]
        \\&=
        \frac{1}{2}
        \mathrm{Tr}\big[\mathcal{M}[\hat{\rho}]\big]
        \mathrm{Tr}\big[\mathcal{M}^{-1}[\hat{\sigma}_1]\big]
        \ \hat{\tau}_1
        +
        \frac{1}{2}
        \mathrm{Tr}\big[\mathcal{M}[\hat{\rho}]\big]
        \mathrm{Tr}\big[\mathcal{M}^{-1}[\hat{\sigma}_2]\big]
        \ \hat{\tau}_2
    \end{align}

    By construction, the operators $\hat{\rho},\hat{\tau}_1,\hat{\tau}_2$ are normalized and we have:
    \begin{align}
        \mathrm{Tr}\big[\mathcal{M}[\hat{\rho}]\big]\mathrm{Tr}\big[\mathcal{M}^{-1}[\hat{\sigma}_1]\big]+\mathrm{Tr}\big[\mathcal{M}[\hat{\rho}]\big]\mathrm{Tr}\big[\mathcal{M}^{-1}[\hat{\sigma}_2]\big]=2.
    \end{align}
    Moreover, if $\mathcal{M},\mathcal{M}^{-1}$ preserve Wigner-positivity, we have $\mathrm{Tr}\big[\mathcal{M}[\hat{\rho}]\big]\geq 0,$, $\mathrm{Tr}\big[\mathcal{M}^{-1}[\hat{\sigma}_1]\big]\geq 0,$ and $\mathrm{Tr}\big[\mathcal{M}^{-1}[\hat{\sigma}_2]\big]\geq 0$.
\end{proof}

\subsection{Fock-bounded displacement map}
\label{apd:fock-bounded_disp}

The Fock-bounded displacement map is defined by its action on Wigner functions as follows:
\begin{align}
    W_{\bar{D}_\beta[\hat{\rho}]}(\alpha)
    =\
    W_{\hat{\rho}}(\alpha-\beta)
    \ \exp\big(
    2\abs{\alpha-\beta}^2-2\abs{\alpha}^2
    \big).
    \label{eq:fock_bounded_disp_wigner}
\end{align}

The displacement map acts on the monomial polynomial as $P(\alpha,\alpha^{\ast})=\bm{\alpha}^{\dagger}\bm{P}\bm{\alpha}$ to $P(\alpha,\alpha^{\ast})=(\bm{\alpha}-\bm{\beta})^{\dagger}\bm{P}(\bm{\alpha}-\bm{\beta})$.

In order to find the operator-sum representation of $\bar{D}_{\beta}$, we are first going to find the operator-sum representation of the operator that displaces the Husimi function in the same fashion as Eq.~\eqref{eq:fock_bounded_disp_wigner}, and then transpose it to the Wigner function using the generalized PLC.
Define the Husimi-Fock-bounded map as follows:
\begin{align}
    Q_{\bar{D}^{\mathrm{H}}_{\beta}[\hat{\rho}]}(\alpha)
    =
    Q_{\hat{\rho}}(\alpha-\beta)
    \ \exp\big(
    \abs{\alpha-\beta}^2-\abs{\alpha}^2
    \big)
    \label{eq:fock_bounded_disp_husimi}
\end{align}
Remember that the PLC channel $\mathcal{E}_{1/2}$ maps the Wigner function to the Husimi, and the PLC channel $\mathcal{E}_{2}$ maps the Husimi function to the Wigner:
\begin{align}
    W_{\mathcal{E}_{1/2}[\hat{\rho}]}(\alpha)
    =
    2
    Q_{\hat{\rho}}(\sqrt{2}\alpha),
    \qquad\qquad
    Q_{\mathcal{E}_{2}[\hat{\rho}]}(\alpha)
    =
    \frac{1}{2}
    W_{\hat{\rho}}\left(\frac{\alpha}{\sqrt{2}}\right).
\end{align}
Together with Eq.~\eqref{eq:fock_bounded_disp_husimi}, the two above relations imply that:
\begin{align}
    \bar{D}_{\beta}
    =
    \mathcal{E}_{1/2}
    \circ
    \bar{D}^{\mathrm{H}}_{\sqrt{2}\beta}
    \circ
    \mathcal{E}_{2}.
    \label{eq:fock_bounded_disp_husimi_to_wigner}
\end{align}
Indeed, we find:
\begin{align}
    W_{\mathcal{E}_{1/2}\Big[\bar{D}^{\mathrm{H}}_{\sqrt{2}\beta}
    \big[
    \mathcal{E}_{2}[\hat{\rho}]\big]\Big]}(\alpha)
     & =
    2Q_{\bar{D}^{\mathrm{H}}_{\sqrt{2}\beta}
    \big[
    \mathcal{E}_{2}[\hat{\rho}]\big]}(\sqrt{2}\alpha)
    \\&=
    2Q_{\mathcal{E}_{2}[\hat{\rho}]}(\sqrt{2}\alpha-\sqrt{2}\beta)\exp(\vert\sqrt{2}\alpha-\sqrt{2}\beta\vert^2-\vert\sqrt{2}\alpha\vert^2)
    \\&=
    W_{\hat{\rho}}(\alpha-\beta)\exp(2\vert\alpha-\beta\vert^2-2\vert\alpha\vert^2)
    \\&=
    W_{\bar{D}_{\beta}[\hat{\rho}]}(\alpha)
\end{align}

To compute the operator-sum representation of $\bar{D}^{\mathrm{H}}_{\beta}$, let us first compute its action on the Fock basis.
We first use the fact that:
\begin{align}
    Q_{\ket{m}\bra{n}}(\alpha)
    \ =\
    \frac{\alpha^{\ast m}}{\sqrt{m!}}
    \
    \frac{\alpha^{n}}{\sqrt{n!}}
    \
    \frac{\exp(-\vert\alpha\vert^2)}{\pi}
\end{align}
in order to compute
\begin{align}
    Q_{\bar{D}^{\mathrm{H}}_{\beta}[\ket{m}\bra{n}]}(\alpha)
     & =
    \frac{1}{\sqrt{m!n!}}
    (\alpha^\ast-\beta^\ast)^m
    (\alpha-\beta)^n
    \frac{\exp(-\vert\alpha\vert^2)}{\pi}
    \\&=
    \frac{1}{\sqrt{m!n!}}
    \sum\limits_{p=0}^{m}
    \sum\limits_{q=0}^{n}
    \binom{m}{p}
    \alpha^{\ast p}
    (-\beta^{\ast})^{m-p}
    \binom{n}{q}
    \alpha^{q}
    (-\beta)^{n-q}
    \frac{\exp(-\vert\alpha\vert^2)}{\pi}
    \\&=
    \sum\limits_{p=0}^{m}
    \sum\limits_{q=0}^{n}
    \sqrt{\frac{p!q!}{m!n!}}
    \frac{m!}{p!(m-p)!}
    \frac{n!}{q!(n-q)!}
    (-\beta^{\ast})^{m-p}
    (-\beta)^{n-q}
    \
    \frac{\alpha^{\ast p}}{\sqrt{p!}}
    \frac{\alpha^{q}}{\sqrt{q!}}
    \frac{\exp(-\vert\alpha\vert^2)}{\pi}
\end{align}
We can now identify the terms associated to $\ket{p}\bra{q}$ in the previous expression and write:
\begin{align}
    \bar{D}^{\mathrm{H}}_{\beta}[\ket{m}\bra{n}] & =
    \sum\limits_{p=0}^{m}
    \sum\limits_{q=0}^{n}
    \sqrt{\frac{p!q!}{m!n!}}
    \frac{m!}{p!(m-p)!}
    \frac{n!}{q!(n-q)!}
    (-\beta^{\ast})^{m-p}
    (-\beta)^{n-q}
    \ket{p}\!\bra{q}
    \\&=
    \sum\limits_{p=0}^{m}
    \sum\limits_{q=0}^{n}
    \frac{1}{(m-p)!}
    \frac{1}{(n-q)!}
    \sqrt{\frac{m!n!}{p!q!}}
    (-\beta^{\ast})^{m-p}
    (-\beta)^{n-q}
    \ket{p}\!\bra{q}
    \\&=
    \sum\limits_{p=0}^{m}
    \sum\limits_{q=0}^{n}
    \frac{1}{p!}
    \frac{1}{q!}
    \sqrt{\frac{m!n!}{(m-p)!(n-q)!}}
    (-\beta^{\ast})^{p}
    (-\beta)^{q}
    \ket{m-p}\!\bra{n-q}
    \\&=
    \left(
    \sum\limits_{p=0}^{m}
    \frac{-\beta^{\ast p}}{p!}
    \sqrt{\frac{m!}{(m-p)!}}
    \ket{m-p}
    \right)\left(
    \sum\limits_{q=0}^{n}
    \frac{-\beta^{q}}{q!}
    \bra{n-q}
    \sqrt{\frac{n!}{(n-q)!}}
    \right)
    \\&=
    \left(
    \sum\limits_{p=0}^{\infty}
    \frac{-\beta^{\ast p}}{p!}
    \hat{a}^p
    \right)
    \ket{m}\!\bra{n}
    \left(
    \sum\limits_{q=0}^{\infty}
    \frac{-\beta^{q}}{q!}
    \hat{a}^{\dagger q}
    \right)
    \\&=
    \exp(-\beta^\ast\hat{a})
    \ket{m}\!\bra{n}
    \exp(-\beta\hat{a}^{\dagger})
\end{align}

So the Husimi-Fock-bounded displacement operator is $\bar{D}^\mathrm{H}_{\beta}[\hat{\rho}]=\exp(-\beta^\ast\hat{a})\hat{\rho}\exp(-\beta\hat{a}^{\dagger})$.

At this point, we establish the relation
$\sqrt{\eta}^{\,\hat{n}} \exp(-\beta^\ast \hat{a})
    = \exp(-\beta^\ast \hat{a}/\sqrt{\eta})\,\sqrt{\eta}^{\,\hat{n}}$,
as it will be useful in what follows.
It follows from the simple manipulation:
\begin{align}
    \sqrt{\eta}^{\hat{n}}
    \exp(-\beta^{\ast}\hat{a})
     & =
    \sqrt{\eta}^{\hat{n}}
    \sum\limits_{k=0}^{\infty}
    \frac{(-\beta^{\ast})^k}{k!}
    \hat{a}^{k}
    \\&=
    \sum\limits_{k=0}^{\infty}
    \frac{(-\beta^{\ast})^k}{k!}
    \hat{a}^{k}
    \sqrt{\eta}^{\hat{n}-k}
    \\&=
    \sum\limits_{k=0}^{\infty}
    \frac{(-(\beta^{\ast}/\sqrt{\eta}))^k}{k!}
    \hat{a}^{k}
    \sqrt{\eta}^{\hat{n}}
    \\&=
    \exp\left(-\frac{\beta^{\ast}}{\sqrt{\eta}}\hat{a}\right)
    \sqrt{\eta}^{\hat{n}}
\end{align}

We are now equipped to determine how $\bar{D}^{\mathrm{H}}_{\beta}$ composes with $\mathcal{E}_{\eta}$:

\begin{align}
    \mathcal{E}_{\eta}\circ\bar{D}^{\mathrm{H}}_{\beta}[\hat{\rho}]
    = & \ \mathcal{E}_{\eta}\big[\exp(-\beta^\ast\hat{a})\;\hat{\rho}\;\exp(-\beta\hat{a}^{\dagger})\big]
    \\=&
    \sum\limits_{k=0}^{\infty}
    \frac{(1-\eta)^k}{k!}
    \sqrt{\eta}^{\,\hat{n}}
    \ \hat{a}^k
    \exp(-\beta^\ast\hat{a})\;\hat{\rho}\;
    \exp(-\beta\hat{a}^\dagger)\
    \hat{a}^{\dagger k}\
    \sqrt{\eta}^{\,\hat{n}}
    \\=&
    \sum\limits_{k=0}^{\infty}
    \frac{(1-\eta)^k}{k!}
    \sqrt{\eta}^{\,\hat{n}}
    \
    \exp(-\beta^\ast\hat{a})
    \
    \hat{a}^k
    \;\hat{\rho}\;
    \hat{a}^{\dagger k}
    \
    \exp(-\beta\hat{a}^\dagger)
    \
    \sqrt{\eta}^{\,\hat{n}}
    \\=&
    \sum\limits_{k=0}^{\infty}
    \frac{(1-\eta)^k}{k!}
    \
    \exp(-\beta^\ast\hat{a}/\sqrt{\eta})
    \
    \sqrt{\eta}^{\,\hat{n}}
    \hat{a}^k
    \;\hat{\rho}\;
    \hat{a}^{\dagger k}
    \
    \sqrt{\eta}^{\,\hat{n}}
    \
    \exp(-\beta\hat{a}^\dagger/\sqrt{\eta})
    \\=&
    \exp(-\beta^{\ast}\hat{a}/\sqrt{\eta})
    \
    \mathcal{E}_{\eta}
    [\hat{\rho}]
    \
    \exp(-\beta\hat{a}^{\dagger}/\sqrt{\eta})
    \\=&
    \bar{D}^{\mathrm{H}}_{\beta/\sqrt{\eta}}\circ\mathcal{E}_{\eta}
    [\hat{\rho}]
\end{align}

We can now use Eq.~\eqref{eq:fock_bounded_disp_husimi_to_wigner} to compute $\bar{D}_{\beta}$.
\begin{align}
    \bar{D}_{\beta}
    \ =\
    \mathcal{E}_{2}
    \circ
    \bar{D}^{\mathrm{H}}_{\sqrt{2}\beta}
    \circ\mathcal{E}_{1/2}
    \ =\
    \bar{D}^{\mathrm{H}}_{\beta}
    \circ\mathcal{E}_{2}
    \circ\mathcal{E}_{1/2}
    \ =\
    \bar{D}^{\mathrm{H}}_{\beta}
\end{align}

We can then finally write:
\begin{align}
    \bar{D}_{\beta}\big[\hat{\rho}\big]
    =
    \exp\left(-\beta^{\ast}\hat{a}\right)\;
    \hat{\rho}\;
    \exp\left(-\beta\hat{a}^{\dagger}\right)
\end{align}
The map $\bar{D}_{\beta}$ preserves the sign of the eigenvalues.
We define $\bar{D}_{\beta}^{\mathrm{norm}}[\hat{\rho}]=\bar{D}_{\beta}[\hat{\rho}]/\mathrm{Tr}[\bar{D}_{\beta}[\hat{\rho}]]$.
The Fock-bounded displacement map preserves the Fock support (i.e., $\bar{D}^{\mathrm{norm}}_{\beta}:\mathcal{A}^{n}\rightarrow\mathcal{A}^{n}$), Wigner positivity (i.e., $\bar{D}_{\beta}^{\mathrm{norm}}:\mathcal{A}_{+}\rightarrow\mathcal{A}_{+}$), and positive semi-definiteness (i.e., $\bar{D}_{\beta}^{\mathrm{norm}}:\mathcal{D}\rightarrow\mathcal{D}$).

Let us finally show that in the limiting case $\beta\to\infty$, the map $\bar{D}^{\mathrm{norm}}_{\beta}$ sends Fock-bounded states to vacuum.
We compute:
\begin{align}
    \exp(-\beta^{\ast}\hat{a})
    \ket{n}
    \ =\
    \sum\limits_{k=0}^{\infty}
    \frac{1}{k!}
    \big(-\beta^{\ast}\big)^k
    \hat{a}^{k}
    \ket{n}
    \ =\
    \sum\limits_{k=0}^{n}
    \frac{1}{k!}
    \big(-\beta^{\ast}\big)^k
    \sqrt{\frac{n!}{(n-k)!}}
    \ket{n-k},
\end{align}
from which we find:
\begin{align}
    \forall\ket{\psi}\in\mathcal{H}^{n}:
    \qquad
    \lim\limits_{\beta\rightarrow\infty}
    \exp(-\beta^{\ast}\hat{a})
    \ket{\psi}
    \propto
    \ket{0}
\end{align}
and then by linearity we conclude:
\begin{align}
    \forall\hat{\rho}\in\mathcal{D}^{n}:
    \qquad
    \lim\limits_{\beta\to\infty}
    \bar{D}^{\mathrm{norm}}_{\beta}
    \left[\hat{\rho}\right]
    =\ket{0}\!\bra{0}.
\end{align}

\end{document}